\definecolor{link}{rgb}{.8,.15,.1}
\definecolor{pigment}{rgb}{0.36, 0.54, 0.66}
\definecolor{pigment2}{rgb}{0.19, 0.55, 0.91}
\definecolor{pigment3}{rgb}{0.2, 0.2, 0.6}
\definecolor{light-gray}{gray}{0.75}
\renewcommand{\hat}{\widehat}
\newcommand\vertarrowbox[3][6ex]{%
  \begin{array}[t]{@{} c c @{}} #2 \\
  \left\downarrow\vcenter{\hrule height #1}\right.\kern-\nulldelimiterspace & #3
  \end{array}%
}
\tikzset{
        cvertex/.style={circle,draw=black,inner sep=1pt,outer sep=3pt},
        vertex/.style={circle,fill=black,inner sep=1pt,outer sep=3pt},
        star/.style={circle,fill=yellow,inner sep=0.75pt,outer sep=0.75pt},
        tvertex/.style={inner sep=1pt,font=\scriptsize},
        gap/.style={inner sep=0.5pt,fill=white}}
\tikzstyle{mybox} = [draw=black, fill=blue!10, very thick,
\tikzstyle{boxtitle} =[fill=blue!50, text=white,rectangle,rounded corners]
 \newtheorem{theorem}{Theorem}[section]
 \newtheorem{lemma}[theorem]{Lemma}
\newtheorem{corollary}[theorem]{Corollary}
\theoremstyle{definition}
 \newtheorem{example}[theorem]{Example}
\theoremstyle{remark}
\def\node#1#2{\overset{#1}{\underset{#2}{\circ}}}
\def\ver#1#2{\overset{{\llap{$\scriptstyle#1$}\displaystyle\circ{\rlap{$\scriptstyle#2$}}}}{\scriptstyle\vert}}
\tikzstyle{every picture}+=[remember picture]
\tikzstyle{na} = [baseline=-.5ex]
\tikzstyle{mine}= [arrows={angle 90}-{angle 90},thick]
\def\Llleftarrow{%
\lower2pt\hbox{\begingroup
\tikz
\draw[shorten >=0pt,shorten <=0pt] (0,3pt) -- ++(-1em,0) (0,1pt) -- ++(-1em-1pt,0) (0,-1pt) -- ++(-1em-1pt,0) (0,-3pt) -- ++(-1em,0) (-1em+1pt,5pt) to[out=-105,in=45] (-1em-2pt,0) to[out=-45,in=105] (-1em+1pt,-5pt);
\endgroup}
}
\newcommand{\rr}{\mathbb{R}}
\newcommand{\cc}{\mathbb{C}}
\newcommand{\zz}{\mathbb{Z}}
\DeclareMathOperator{\SU}{SU}
\DeclareMathOperator{\U}{U}
\DeclareMathOperator{\SO}{SO}
\DeclareMathOperator{\GL}{GL}
\DeclareMathOperator{\Hom}{Hom}
\DeclareMathOperator{\Sym}{Sym}
\DeclareMathOperator{\Gr}{Gr}
\DeclareMathOperator{\PU}{PU}
\newcommand{\todo}[1]{}
\renewcommand{\todo}[1]{{\color{red} TODO: {#1}}}
\newcommand{\red}[1]{}
\renewcommand{\red}[1]{{\color{red} {#1}}}
\definecolor{navyblue}{rgb}{0, 0,0.5}
\newcommand{\navyblue}[1]{}
\renewcommand{\navyblue}[1]{{\color{navyblue} {#1}}}
\newcommand{\blue}[1]{}
\renewcommand{\blue}[1]{{\color{blue} {#1}}}
	\definecolor{nnyellow}{rgb}{1,0.8, 0}
\newcommand{\yellow}[1]{}
\renewcommand{\yellow}[1]{{\color{yellow} {#1}}}
\newcommand{\nnyellow}[1]{}
\renewcommand{\nnyellow}[1]{{\color{nnyellow} {#1}}}
\newcommand{\magenta}[1]{}
\renewcommand{\magenta}[1]{{\color{magenta} {#1}}}
\definecolor{ao}{rgb}{0,0.5,0}
\newcommand{\ao}[1]{}
\renewcommand{\ao}[1]{{\color{ao} {#1}}}
\definecolor{aquamarine}{rgb}{0.5, 1.0, 0.83}
\newcommand{\aquamarine}[1]{}
\renewcommand{\aquamarine}[1]{{\color{aquamarine} {#1}}}
\definecolor{azure}{rgb}{0.0, 0.5, 1.0}
\newcommand{\azure}[1]{}
\renewcommand{\azure}[1]{{\color{azure} {#1}}}
\definecolor{arsenic}{rgb}{0.23, 0.27, 0.29}
\newcommand{\arsenic}[1]{}
\renewcommand{\arsenic}[1]{{\color{arsenic} {#1}}}
\definecolor{orange}{rgb}{1.0, 0.5, 0.0}
\newcommand{\orange}[1]{}
\renewcommand{\orange}[1]{{\color{orange} {#1}}}
\newcommand{\su}[1]{}
\renewcommand{\su}[1]{{\mathfrak{su}({#1})}}
\newcommand{\uu}[1]{}
\renewcommand{\uu}[1]{{\mathfrak{u}({#1})}}
\newcommand{\so}[1]{}
\renewcommand{\so}[1]{{\mathfrak{so}({#1})}}
\newcommand{\usp}[1]{}
\renewcommand{\usp}[1]{{\mathfrak{usp}({#1})}}
\def\f{\mathfrak{f}}
\renewcommand\xleftrightarrow[2][]{%
  \ext@arrow 9999{\longleftrightarrowfill@}{#1}{#2}}
\newcommand\longleftrightarrowfill@{%
  \arrowfill@\leftarrow\relbar\rightarrow}
\newcommand{\otherlabel}[2]{\protected@edef\@currentlabel{#2}\label{#1}}
\newcommand{\nocontentsline}[3]{}
\newcommand{\tocless}[2]{\bgroup\let\addcontentsline=\nocontentsline#1{#2}\egroup}
\begin{document}
\usetagform{normalsize} %

\begin{titlepage}

	\begin{center}

		\vskip .3in \noindent

		{\LARGE \textbf{Proving the 6d $\bm{a}$-theorem \\ \smallskip with the double affine Grassmannian}}
		\bigskip

		{\large \textsc{part iii}}

		\bigskip

		Marco Fazzi,$^{a,b,c}$ Suvendu Giri,$^{a,d,e}$ and Paul Levy$^f$

		\bigskip

		\bigskip
		{\small

			$^a$ Department of Physics and Astronomy,  Uppsala University,  SE-75120 Uppsala, Sweden \\
			\smallskip
			$^b$ NORDITA, Hannes Alfv\'ens v\"ag 12, SE-10691 Stockholm, Sweden \\
			\smallskip
			$^c$ School of Mathematical and Physical Sciences, The University of Sheffield,  Sheffield, S3 7RH, UK \\
			\smallskip
			$^d$ Department of Physics, Princeton University, Princeton, New Jersey 08544, USA\\
			\smallskip
			$^e$ Princeton Gravity Initiative, Princeton University, Princeton, New Jersey 08544, USA\\
			\smallskip
			$^f$ Department of Mathematics and Statistics, Fylde College,  Lancaster University,  Lancaster, LA1 4YF, UK\\
		}

		\smallskip
		{\small \tt \href{mailto:m.fazzi@sheffield.ac.uk}{m.fazzi@sheffield.ac.uk}  \hspace{.5cm} \href{mailto:suvendu.giri@princeton.edu}{suvendu.giri@princeton.edu}  \hspace{.5cm} \href{mailto:p.d.levy@lancaster.ac.uk}{p.d.levy@lancaster.ac.uk}}

		\bigskip
		\bigskip
		{\bf Abstract }
		\vskip .1in
	\end{center}

	\noindent This paper contains two results of independent interest, the first being more mathematical in nature whereas the second more physical.  We first show that the hierarchy of Higgs branch RG flows between the 6d $(1,0)$ SCFTs known as A-type orbi-instantons is given by the Hasse diagram of certain strata and transverse slices in the double affine Grassmannian of $E_8$.
	Secondly, we leverage the partial order naturally defined on this Hasse diagram to prove the $a$-theorem for orbi-instanton Higgs branch RG flows,  thereby exhausting the list of $c$-theorems in the even-dimensional supersymmetric setting.

	\vfill

	\begin{center}
		\noindent \textit{MF would like to dedicate this paper to the late Luciano Girardello, \\ who was the first to teach him about anomalies and RG flows.}
	\end{center}
	\vfill

	\begin{flushright}
		UUITP-39/23\\
		NORDITA 2023-122
	\end{flushright}
	\eject

\end{titlepage}

\tableofcontents

\section{Introduction and summary of results}
\label{sec:intro}

Recently, there have been many fascinating developments in quantum field theory (QFT).
Yet, many simple but central dynamical questions remain open,  especially pertaining to the structure of renormalization group (RG) flows along which QFTs sit, and which are ``bounded'' by fixed points, i.e.  scale-invariant field theories which (in the relativistic context) are typically assumed to also be conformally invariant, that is to be full-fledged conformal field theories (CFTs).\footnote{In two dimensions, scale and conformal invariance are famously equivalent for unitary theories \cite{Zamolodchikov:1986gt,Polchinski:1987dy}.  In four dimensions, the strong nonperturbative arguments of \cite{Luty:2012ww} were essentially proven by \cite{Yonekura:2014tha}.  Subtleties can arise when nonlocal operators are included in the theory however.  See also the lecture notes \cite{Nakayama:2013is}. } 

In this paper we will answer one of these important questions, namely we will prove the so-called $a$-theorem for an infinite class of six-dimensional superconformal field theories (6d SCFTs) with minimal $(1,0)$ supersymmetry. It is worth noting that this is the last case standing of \emph{supersymmetric} $c$-theorems in any even dimension,  and moreover that there are no known \emph{non-}supersymmetric interacting CFTs in six dimensions (or higher),  so its proof represents an important step in the study of (S)CFTs in general. The proof reduces to an analysis of some properties of a mathematical object called the double affine Grassmannian (of $E_8$).

The aim of the paper is thus to combine mathematics and physics to deliver a powerful statement in QFT.
Moreover, and perhaps most interestingly to the string theory cognoscenti, it provides another glimpse into the physics of one of the most mysterious extended objects in the theory:  the Ho\v{r}ava--Witten M9-wall.

\subsection[Six-dimensional $a$-theorem]{Six-dimensional \texorpdfstring{$\bm{a}$}{a}-theorem}
\label{sub:athm}

Concretely, by 6d supersymmetric $a$-theorem we mean:
\begin{itemize}
	\item[\emph{i)}]
	      that there is a function defined along the RG flow which at the fixed points equals the so-called $a$ conformal anomaly (or $a$ central charge) of the SCFT,  i.e. the coefficient of the 6d Euler density $E_6$ in the trace of its stress-energy tensor on a curved background,
	      \begin{equation}
		      \label{eq:T}
		      \left\langle T_\mu^\mu \right\rangle= a E_6 + \sum_{i=1}^3 c_i I_i\ .
	      \end{equation}
	      Namely, the Ward identity $\left\langle T_\mu^\mu \right\rangle =0$ of the CFT is violated by a c-number on a curved background, and for this reason equation \eqref{eq:T} is known as the trace, or Weyl,  anomaly \cite{Capper:1974ic,Capper:1975ig,Duff:1977ay,Christensen:1978gi,Duff:1993wm}.\footnote{The $I_i$ are Weyl invariants (or ``cocycles'') of weight 6, and $c_i$ their coefficients (which satisfy the relation $2c_1=c_2+c_3$ \cite{Cordova:2019wns}); see \cite{Bonora:1985cq,Deser:1993yx,henningson-skenderis,Bonora:2023yza}. Equation \eqref{eq:T} is sometimes written with a total derivative term $\nabla_i J^i$ added to the RHS, which can be safely neglected for our considerations (see e.g.  \cite{Deser:1993yx}).}
	\item[\emph{ii)}]
	      That this function decreases monotonically along any unitary RG flow connecting  ultraviolet (UV) and infrared (IR) SCFTs,  implying that
	      \begin{equation}
		      \Delta a := a_\text{UV}-a_\text{IR}>0\ .
	      \end{equation}
	      It is known that $a>0$ for all unitary SCFTs \cite{Cordova:2020tij}, and this statement is logically independent of $\Delta a >0$.\footnote{The statement $a\geq 0$ is also known to be true for 4d CFTs \cite{Hofman:2008ar}, with the bound saturated only by a theory with no local degrees of freedom.  For a recent discussion about the role of $\Delta c$ in 4d see \cite{Karateev:2023mrb,Hartman:2024xkw} and references therein.}
\end{itemize}
It is a quantitative translation of the intuition that the number of degrees of freedom should decrease along an RG flow as we integrate out high-energy modes,  and it also implies that the RG flow is irreversible. In other words, we can only flow to the IR.

As stated, this 6d $a$-theorem should be thought of as another instance of the $c$-theorem proven by Zamolodchikov in 2d \cite{Zamolodchikov:1986gt}; and conjectured \cite{CARDY1988749}, proven perturbatively \cite{Osborn:1989td,Jack:1990eb,Osborn:1991gm}, and finally non-perturbatively \cite{Komargodski:2011vj} also in 4d.\footnote{This hinges upon the findings of \cite{Schwimmer:2010za}.  See also \cite{Luty:2012ww,Komargodski:2011xv} for further discussions and \cite{Anselmi:1999xk} for an earlier investigation with a perturbative proof for some theories.} In both 2d and 4d the function that decreases along the flow equals the $a$ conformal anomaly of the CFT at fixed points (historically called $c$ in 2d, since it is equal to the central charge of the Virasoro algebra -- see e.g. \cite{Ginsparg:1988ui}).
In 5d or higher, we do not at the time of writing have \emph{conclusive} evidence in favor of the existence of non-supersymmetric interacting CFTs.\footnote{See however \cite{Apruzzi:2016rny, BenettiGenolini:2019zth,Akhond:2023vlb,Dierigl:2022reg} for proposals in 5d,  6d,  and even 8d from string constructions, even though their ultimate fate remains uncertain (see e.g. \cite{Morris:2004mg,Suh:2020rma,Bertolini:2021cew,Bertolini:2022osy,Mignosa:2023bbt,Apruzzi:2019ecr,Apruzzi:2021nle} for a large-$N$ analysis in 5d and 6d from both a field theoretic and holographic perspective).} Leaving aside the odd-dimensional case (for which there is \emph{no} trace anomaly anyway \cite[Eq. (14)]{Deser:1993yx}),  the strategy of \cite{Komargodski:2011vj} (i.e. the use of an effective action for the ``dilaton'',   the Nambu--Goldstone boson of broken conformal invariance, appearing as the conformal mode in a so-called local Riegert action \cite{Riegert:1984kt,Fradkin:1983tg,Gabadadze:2023quw}) does not generalize straightforwardly to 6d,  and for this reason it does not give rise to a general proof of the (non-supersymmetric) $a$-theorem \cite{Elvang:2012st,Kundu:2019zsl}.\footnote{See also \cite{Grinstein:2014xba,Gracey:2015fia,Osborn:2015rna,Stergiou:2016uqq}, or \cite{Casini:2023kyj} (and references therein), for an alternative version of this statement (that uses the CFT entanglement entropy  in the latter case). See \cite{Anselmi:1999xk,Anselmi:1999uk,Anselmi:1999ut,Anselmi:2002as} for early investigations on the trace anomaly in arbitrary even dimensions (in particular 6d) in the context of the $a$-theorem, and \cite{Casini:2004bw,Casini:2006es,Casini:2012ei,Casini:2017roe,Casini:2017vbe} for the use of the entanglement entropy to prove various $c$-theorems (e.g. in 3d \cite{Casini:2012ei}, where there is no trace anomaly).  See \cite{Hartman:2023qdn} for a proof that uses the averaged null energy condition. Finally,  see \cite{Myers:2010xs,Myers:2010tj} for a holographic proof of a $c$-theorem which holds in any dimension.} (See \cite{Heckman:2021nwg} and references therein for a recent reanalysis of this problem.) Therefore we have to settle for supersymmetric theories.

Luckily though, six is the largest dimension in which SCFTs can be defined \cite{Nahm:1977tg,Minwalla:1997ka} (see in particular \cite[Sec. 5.1.4]{Cordova:2016emh}),  and a massive body of literature has moreover shown that 6d SCFTs can be thought of as an ``organizing principle'' for most lower-dimensional SCFTs, geometrizing their construction, dualities, and interdependencies across dimensions. Therefore studying the $a$-theorem for 6d SCFTs is a meaningful endeavor.  With $(1,0)$ (or $(2,0)$) supersymmetry there are no relevant or marginal supersymmetry-preserving deformations we can turn on \cite{Louis:2015mka,Cordova:2016xhm}, so all RG flows are flows onto the moduli space of supersymmetric vacua of the UV SCFT \cite{Cordova:2015fha}, obtained by giving vacuum expectation values (VEVs) to some operators. This moduli space is composed of two main branches: a tensor branch,  parameterized by scalars in the tensor multiplets taking VEVs; and a Higgs branch, where scalars in the matter hypermultiplets take VEVs.
Along flows onto either branch, conformal invariance is spontaneously broken,\footnote{See e.g.  \cite[Sec. 2.2.]{Cordova:2015fha} for more details.} but in the latter case it is recovered in the deep IR, where a new SCFT sits, with a global symmetry generically different from the UV parent.\footnote{Likewise, the UV R-symmetry is broken by Higgs branch flows generically since the matter hypermultiplets are charged under it, and a new $\SU(2)$ R-symmetry emerges in the IR. There exist also mixed branch flows, arising as a combination of the two aforementioned types of flow; we will comment on them in the main text when appropriate.} On the other hand, for tensor branch flows the deep IR is a generalized quiver gauge theory of massless vectors plus tensors,\footnote{This is an example of a scale-invariant (but non-conformal) QFT, or SFT for short \cite{Luty:2012ww}. } and the $a$-theorem was proven in \cite{Cordova:2015fha} (and in \cite{Cordova:2015vwa} in the $(2,0)$ case).\footnote{See also \cite{Maxfield:2012aw} for earlier investigations in the special $(2,0)$ case. The $a$-theorem is also valid for flows from SCFT to supersymmetric SFT, see in particular \cite[Sec. 6.2]{Cordova:2015fha}.}

\subsection{A-type orbi-instantons and hierarchies of RG flows}\label{intro_RG}

Exploration of the Higgs branch RG flows, on the other hand, has remained elusive for longer. A proof of the $a$-theorem for a special but infinite class of theories known as ``T-brane theories'' (descending from the ``conformal matter'' of \cite{DelZotto:2014hpa}) was finally provided in \cite{Mekareeya:2016yal} (with prior evidence both in field theory and holography given in \cite{Gaiotto:2014lca,Heckman:2015axa,Cremonesi:2015bld,Apruzzi:2017nck}). However there exists another infinite class of 6d SCFTs \cite[Sec. 6]{DelZotto:2014hpa} which, together with a subclass of this conformal matter, generates \emph{all} other known 6d SCFTs via ``fission and fusion'' \cite{Heckman:2018pqx}. (This procedure involves subsequent ungaugings and gaugings of symmetries, respectively.) The theories in this second class were dubbed ``ADE-type orbi-instantons'' in \cite{Heckman:2018pqx},  and are the data of: the number $N$ of M5-branes simultaneously probing an M9-wall (i.e. acting as pointlike instantons in the four codimensions) and the orbifold point of $\cc^2/\Gamma_\text{ADE}$ wrapped by the M9 (with $\Gamma_\text{ADE} \subset \SU(2)$ finite); the order of said orbifold; a boundary condition at the spatial infinity $S^3/\Gamma_\text{ADE}$ of the orbifold, which is a representation $\rho_\infty : \Gamma_\text{ADE} \to E_8$ (the $E_8$ gauge bundle supported on the M9-wall acting as a flavor symmetry from the perspective of the 6d  worldvolume of the M5's). The integer $N$ and the homomorphism $\rho_\infty$ represent respectively the number of instantons and the holonomy at infinity (of the flat connection of the gauge bundle) needed to fully specify the instanton configuration on the (deformation/resolution of the) orbifold (see e.g. \cite[Sec. 2.1]{Mekareeya:2017jgc}).

When $\Gamma$ is of type A we can fix the order $k$ of the $\zz_k$ orbifold, allowing us to use the triple $(N,k,\rho_\infty)$ to indicate an orbi-instanton SCFT of this type.
We will also say that $N$ is the number of ``full instantons'' for reasons that will become clear later.\footnote{\label{foot:numberfull}Because the four-dimensional space $\cc^2/\Gamma_\text{A}$ contains an orbifold singularity, the instanton number $\int F\wedge F$ on the resolution/deformation of the singularity is \emph{fractional} and is given by $N -\left\langle \overline{\lambda}, \overline{\lambda}\right\rangle/(2k)$ ($\overline{\lambda}$ is defined in (\ref{eq:lambdaover}) and the inner product in (\ref{eq:lambdalambda})), if $\int F\wedge F = 1$ on $\cc^2$ for the smallest possible instanton number. See also \cite[p. 19]{Nakajima:2015txa} for the topological data of the instanton on the (fully unresolved) singularity.}
Each flat connection $\rho_\infty$ is conveniently specified  by a choice of ``Kac label'' \cite{Mekareeya:2017jgc}, that is an integer partition $[k_i]$ of $k$ which uses only the Coxeter labels $1,2,\ldots,6,4',3',2'$ of the \emph{affine} $E_8$ Dynkin diagram.
In the mathematics literature, and in the following sections, such a partition is called a ``Kac diagram $\lambda_\text{Kac}$ at level $k$''. It is given by a weighted affine $E_8$ Dynkin diagram of the form
\begin{equation}\label{eq:E8ni}
	\lambda_{\text{Kac}}: \quad n_1 -  n_2 - n_3 - n_4 - n_5 - \overset{\overset{\displaystyle n_{3'}}{\vert}}{n_6} - n_{4'} - n_{2'}\ ,
\end{equation}
with
\begin{equation}\label{eq:partitionpre}
	k =\left(\sum_{i=1}^6 i n_i\right) + 4 n_{4'} +3n_{3'}+2n_{2'}\ .
\end{equation}
(We briefly recap the theory of Kac diagrams in section \ref{sub:diag}.)
The UV SCFT whose Higgs branch we are exploring can be identified with the trivial choice of boundary condition which preserves the full $E_8$ from the M9,  that is Kac label $\rho_\infty: k=[1^k]$ (which obviously exists for any $k$), or $\lambda_\text{Kac}= \begin{smallmatrix} & & & & &0 &  & & \\ k & 0 & 0 & 0 & 0 & 0 & 0 & 0 \end{smallmatrix}\!\!$. The IR SCFTs that can be reached by (subsequent) Higgs branch RG flows are instead identified with other possible choices of coefficients $n_i,n_{i'}$ (some of which may be zero),  namely
\begin{equation}\label{eq:rhopart}
\rho_\infty: k=[1^{n_1},2^{n_2},3^{n_3},4^{n_4},5^{n_5},6^{n_6},4^{n_{4'}},2^{n_{2'}},3^{n_{3'}}]\ .%
\end{equation}
(How to extract the flavor symmetry of the UV and IR SCFTs from the associated Kac diagrams will be explained in section \ref{sub:eng}.)
In what follows, we will use the notations \eqref{eq:E8ni} and \eqref{eq:rhopart} interchangeably.
There is another mathematical structure naturally parameterized by pairs $(\lambda_{\rm Kac},n)$ of a Kac diagram and an integer $n$: the set of dominant coweights for an affine Kac--Moody Lie algebra.
A major part of our work will be to make precise the relationship between triples $(N,k,\rho_\infty)$ and coweights $(\lambda_{\rm Kac},n)$ for affine $E_8$.

In \cite{Fazzi:2022hal}, exploiting the technology of ``3d magnetic quivers'' \cite{Cabrera:2019izd} and ``quiver subtraction'' \cite{Cabrera:2018ann}, the first two authors have constructed very intricate hierarchies of allowed RG flows between UV and IR A-type orbi-instantons defined by different Kac diagrams, and  verified, by computing the $a$ anomaly of UV and IR SCFTs connected by an RG flow, that each flow thus constructed satisfies the $a$-theorem.
(In \cite{Fazzi:2022yca}, the same authors with Giacomelli have extended this analysis to another infinite class of 6d SCFTs known as massive E-string theories, which can be engineered from the orbi-instantons via the above-mentioned fission procedure.
They will make their appearance in Appendix \ref{app:fullhasse}.) The three present authors have moreover conjectured that, if we fix $k$ but allow $N$ to change, by performing small $E_8$-instanton transitions (i.e. by dissolving some M5's into flux on the M9), these intricate hierarchies can be understood as (semi-infinite periodic) Hasse diagrams of the ``double affine Grassmannian'' of $E_8$, introduced in the mathematics literature \cite{Braverman:2007dvq,braverman2011pursuing,braverman2012pursuing} to generalize the better known affine Grassmannian (see e.g. \cite[Sec. 6.5]{Bullimore:2015lsa} and \cite[Sec. 2]{Bourget:2021siw}).  Under the proposed identification,  (the Higgs branch of) a SCFT in the hierarchy is a symplectic leaf (or \emph{stratum}) of the Grassmannian, and the Higgs branch RG flows that connect SCFTs are the transverse slices between ``neighboring'' leaves.
Note that the direction of flow is opposite to the closure ordering on leaves, i.e. the dimension of the leaves increases as one flows to the IR.
The larger the leaf (i.e. Higgs branch), the smaller the $a$ anomaly of the SCFT. Let us see how.

In finite type, the strata in the affine Grassmannian are parameterized by the combinatorial data of {\it dominant coweights}.
(This is recapped below in section \ref{affinegrass}.)
The closure ordering on strata corresponds to the dominance ordering on coweights; the ``minimal degenerations'' (adjacent pairs) are classified by an algorithm due to Stembridge \cite{STEMBRIDGE1998340}.
Slices in the affine Grassmannian are known \cite{Braverman:2016pwk} to be isomorphic to Coulomb branches $\mathcal{M}_\text{C}$ of 3d $\mathcal{N}=4$ theories (in the sense of \cite{Nakajima:2015txa,Nakajima:2015gxa}).
The latter can also be realized in string theory via Hanany--Witten brane setups (and the brane moves one can do within them).  See e.g. \cite{Bullimore:2015lsa,Bullimore:2016nji,Bullimore:2016hdc,Bourget:2021siw} for many examples in type ABCD.
It has been recently established that these Coulomb branches have symplectic singularities \cite{Weekes,Bellamy}.\footnote{Symplectic singularities have been introduced in \cite{Beauville_2000} as an analog to rational Gorenstein singularities in Calabi--Yau varieties in the eight-supercharge setting, i.e. what physicists would call hyperkähler cones \cite{Bagger:1983tt}.}

The affine Grassmannian of a finite-dimensional simple Lie algebra ${\mathfrak g}$ plays a crucial role in its representation theory, via the geometric Satake correspondence \cite{mirkovicvilonen}.
The analogous object for the affine counterpart ${\mathfrak g}_{\rm aff}$ is the ``double affine'' Grassmannian, conjecturally defined by Braverman and Finkelberg in \cite{Braverman:2007dvq,braverman2011pursuing,braverman2012pursuing}.
It is conjectured that Braverman--Finkelberg's double affine Grassmannian also gives rise to Coulomb branches of 3d $\mathcal{N}=4$ theories   -- see e.g.  the proceedings \cite{Finkelberg:2017nbc} or \cite[Sec. 3(viii) (b)]{Braverman:2016pwk}.\footnote{In affine type A, it is proven that the Coulomb branch is given by a quiver variety of affine type \cite{Nakajima:2016guo} known as Cherkis bow variety \cite{Cherkis:2009hpw,Cherkis:2009jm,Cherkis:2010bn,Nakajima:2018ohd}.
	The bow in the name comes from a Hanany--Witten D3-D5-NS5 brane configuration on a circle.  See e.g. the video recordings of Cherkis' 2018 mini course on \emph{Instantons and monopoles} at ICTS, Bangalore \cite{cherkislec}.}
In fact, because of the semi-infinite periodic structure of its Hasse diagram (which will become apparent in later sections),  practically speaking, it only makes sense to construct transverse slices between strata.
These transverse slices are finite-dimensional, hence consist of finitely many strata ordered by the closure relation.
It is known that some, but not all, of the strata are parameterized by dominant affine coweights $(\lambda_\text{Kac},n)$, see below; the minimal degenerations for affine coweights have been classified by Roy's generalization \cite{roy} of Stembridge's result in finite types.
The basic objects of study are slices ${\mathcal M}_\text{C}(\mu,\lambda)$ between strata defined by dominant coweights $\lambda<\mu$.
Note that this requires $\lambda$ and $\mu$ to have the same level $k$ (i.e.  same \emph{fixed} order $k$ of the $\zz_k$ orbifold in M-theory -- we will comment in footnote \ref{foot:MthTbr} and in appendix \ref{app:fullhasse} on situations where $k$ is allowed to change).

The symplectic leaves in ${\mathcal M}_\text{C}(\mu,\lambda)$ have been classified in affine type A \cite[Sec. 7.7]{Nakajima:2016guo}.
In the following discussion we will assume (as is expected) that an analogous classification holds in any affine type.
(We also assume $k>1$; the case $k=1$ is similar, but with fewer strata, see section \ref{sub:k=1}.)
Just as in the finite case, each coweight $\nu$ satisfying $\lambda\leq\nu\leq\mu$ leads to a symplectic leaf in ${\mathcal M}_\text{C}(\mu,\lambda)$.
However, in contrast with the finite case, there are additional strata: if $\lambda\leq\nu\leq \mu-Mc$, where $c$ is the canonical central element (i.e. the minimal positive imaginary coroot),\footnote{\label{foot:root}Sometimes the notation $\delta$, i.e. the minimal positive imaginary {\it root}, cf. (\ref{deltadefinition}), is used in place of $c$; in simply-laced types this is permitted by the isomorphism between roots and coroots.} then between $\nu$ and $\nu+Mc$ there is also a sequence of strata corresponding to $\Sym^M({\mathbb C}^2/{\mathbb Z}_k\setminus \{ 0\})$ (and $\Sym^{M-1}({\mathbb C}^2/{\mathbb Z}_k\setminus \{ 0\})$, $\Sym^{M-2}({\mathbb C}^2/{\mathbb Z}_k\setminus \{ 0\})$, and so on).
The strata in $\Sym^M({\mathbb C}^2/{\mathbb Z}_k\setminus \{ 0\})$ are in one-to-one correspondence with the integer partitions $[m_i]$ of $M$, and are (closure) ordered via refinement of partitions (in reverse, joining parts).
We can express this in terms of 3d Coulomb branches as follows:
\begin{subequations}\label{eq:stratMC}
\begin{align}
	 & \text{strata:}\quad \mathcal{M}_\text{C}(\mu,\lambda) = \bigsqcup_{\nu,\,[m_i]} \mathcal{M}_\text{C}^\text{smooth}(\nu,\lambda) \times \Sym^M_{[m_i]}(\cc^2/\zz_k \! \setminus\! \{0\})\ ,  \label{eq:strata} \\
	 & \text{slices:}\quad \mathcal{M}_\text{C}({\mu -M c,\nu}) \times \prod_{i=1}^m  \prescript{\text{c}}{}{\mathcal{U}_{[m_i]}}\ ,\label{eq:slices}
\end{align}
\end{subequations}
where $\lambda\leq \nu\leq \mu-Mc$, $[m_i]$ is a partition of $M$, and $\bigsqcup$ stands for disjoint union.
Here (\ref{eq:slices}) denotes the transverse slice from a point of the stratum labeled by $(\nu,[m_i])$, and $\prescript{\text{c}}{}{\mathcal{U}_{[m_i]}}$ is the (Uhlenbeck partial compactification of the) centered moduli space of $m_i$ $E_8$-instantons on ${\mathbb C}^2$.
(See appendix \ref{app:magquivs} for more details.)
Note that the smooth locus $\mathcal{M}_\text{C}^\text{smooth}(\nu,\lambda)$ of ${\mathcal M}_\text{C}(\nu,\lambda)$ is precisely the open symplectic leaf (corresponding to $\nu$).\footnote{For stratification results in other contexts, e.g.  4d $\mathcal{N}=2$ Coulomb branches, see\cite{Martone:2020nsy,Argyres:2020wmq}.}

Before explaining how these 3d Coulomb branches come about in 6d, we remark the following. A similar behavior (i.e. stratification) has been observed  in  \cite{Bourget:2022tmw} for conformal matter theories of type $(A,A)$ (i.e.  ``just'' bifundamentals of $\SU(k)\times \SU(k)$ engineered by $N$ M5's probing $\cc^2/\zz_k$), in the following sense.  It is well known that if one flows onto the Higgs branch of the $(A,A)$ UV theory
\begin{equation}
	[\SU(k)]\, \underbrace{\overset{\su{k}}{2}\cdots \overset{\su{k}}{2}}_{N-1}\,[\SU(k)]
\end{equation}
by giving VEVs to matter hypermultiplets charged under the (nonabelian part of the) flavor symmetry, i.e. $\SU(k)\times \SU(k)$, one can reach new IR fixed points with a different flavor symmetry which is specified by the choice of a (two) nilpotent orbit(s) in $\SU(k)$ ($\SU(k)\times \SU(k)$); that is, we reach a T-brane theory \cite{DelZotto:2014hpa,Heckman:2016ssk}.\footnote{The abelian part of the flavor symmetry is studied in detail in \cite{Apruzzi:2020eqi} (see also \cite{Lee:2018ihr} for an earlier partial study). The AdS$_7$ holographic duals to T-brane theories were studied in \cite{Apruzzi:2013yva,Gaiotto:2014lca,Cremonesi:2015bld,Apruzzi:2017nck,Bergman:2020bvi}.} We will call these flows ``flavor Higgsings''.

However, one can also explore other phases of the UV theory (still on its Higgs branch) obtained by moving some (or all) of the M5's off of the singularity.  This is done in \cite{Bourget:2022tmw}, which constructs a Hasse diagram (via quiver subtraction of 3d magnetic quivers,  making use of an improved algorithm based on ``decorations'' \cite{Bourget:2022ehw}) for the Higgs branch of the UV theory while {excluding} the flavor Higgsings (which would lead to T-brane theories in the IR).  These phases correspond to a decoupled product of a lower-rank $(A,A)$ conformal matter engineered by $N-M$ M5's at the singularity, for some $M<N$, and a bunch of A-type $(2,0)$ theories given by stacks of M5's away from the singularity, such that the total number of these ``separated'' M5's is $M$.  Some of the leaves in this Hasse diagram are (the strata of) symmetric products and correspond to the Higgs branch of stacks of M5's away from the singularity, while the slices between adjacent symplectic leaves are either $A_{k-1}$ Kleinian singularities or certain non-normal singularities called $m$ and introduced in \cite[Sec. 1.8.4]{FJLS} (or unions of multiple copies thereof).

In this paper we will mostly be interested in flavor Higgsings, taking us from UV SCFT to IR SCFT with (generically) different flavor symmetry (with an exception for $k=1$, i.e. for E-string theories -- see section \ref{sub:k=1}).

\subsection{Higgs branch RG flows and slices in the Grassmannian}

We would now like to understand from the 6d QFT perspective what the different strata of the double affine Grassmannian of $E_8$, and the slices between them,  correspond to.  Under the identification of the hierarchy of RG flows with the Hasse diagram of the strata already proposed in \cite{Fazzi:2022hal}, the latter should correspond to Higgs branches of various SCFTs, and the slices to Higgs branch RG flows.

Generally speaking, there are two types of Higgsings (i.e. flows) that we can realize among orbi-instantons.  The first -- flavor Higgsings, as we called them above -- correspond to giving a VEV to (scalar) operators in the matter hypermultiplets charged under a flavor symmetry, e.g. the left flavor symmetry factor $\mathfrak{f} \subseteq E_8$ of the orbi-instanton. (This will also be denoted $[F]$ in the generic quiver on the tensor branch of the SCFT -- we refer the reader to \cite{Fazzi:2022hal} and \eqref{eq:genelequiv} below for the notation.) We then flow to a new IR SCFT on the Higgs branch of the UV one, which is defined by a \emph{different} Kac diagram at level $k$ (holonomy at infinity $\rho_\infty: \zz_k \to E_8$).  This type of flow is fairly easy to understand in QFT: the operators taking a VEV satisfy a chiral ring relation which defines either a ``minimal singularity'' $\mathfrak{a}_{i \leq 8},\mathfrak{d}_{i \leq 8},\mathfrak{e}_{6},\mathfrak{e}_{7},\mathfrak{e}_{8}$ (i.e.  a singular variety given by the closure of the minimal nilpotent orbit $\overline{\text{min}_\mathfrak{g}}$ of that algebra $\mathfrak{g}$), or the Kleinian singularity $A_{i-1}$ (i.e. $\cc^2/\zz_{i}$). This was described for any $k$ in \cite{Fazzi:2022hal} (with examples for some values of $k$ given before in \cite{Giacomelli:2022drw}), where moreover a connection with the so-called minimal degeneration singularities of the $E_8[\![z]\!]$-orbits of the (singly) affine Grassmannian of (non-affine) $E_8$ (classified by \cite{malin-ostrik-vybornov} using work by Stembridge \cite{STEMBRIDGE1998340}) was put forth.  Notice that this type of flow lacks a ``good'' description in terms of M-theory branes (or brane moves).  It is the analog of the ``T-brane VEVs'' for T-brane theories studied in \cite{DelZotto:2014hpa,Heckman:2016ssk}.\footnote{\label{foot:MthTbr}It would be interesting to study whether an analog to Nahm's equations, probably involving the $G_4$ flux, can be defined for M-theory flavor Higgsings. We thank Alessandro Tomasiello for this suggestion.} Likewise, one could turn on a T-brane VEV for matter charged under the right flavor symmetry factor, which for A-type orbi-instantons (and generic values of $N$) is just $[\SU(k)]$.\footnote{\label{foot:kchange}Although it is not obvious from the present discussion, it can be seen that Higgsings of the right  $[\SU(k)]$ flavor symmetry factor correspond to ``$k$-changing transitions'' in the Higgs branch Hasse diagram, and land us onto 6d quivers describing the gauge theory phase of either orbi-instantons at a \emph{lower} $k$ or massive E-string theories \cite{Fazzi:2022yca}.  The reason becomes clear if one considers the 3d magnetic quiver (\ref{eq:firstmagquiv}) of the 6d theory, as the $[\SU(k)]$ symmetry is realized by the $1-2-\cdots-k-$ tail there, and Higgsing the former means modifying the latter (generically reducing the highest rank $k$ of the unitary gauge groups that appear therein).  These transitions can be seen e.g.  along the bottom diagonal of  \cite[Fig. 17]{Bourget:2024mgn} and in figure \ref{fig:outputk=2}. (We would like to thank the authors of that paper for discussion on this and related points.) Presumably then, from the perspective of the original M-theory setup they correspond to partially deforming the orbifold, $\cc^2/\zz_k \to \cc^2/\zz_{k-1}$.
Such an effect is known to arise in the F-theory description \cite{colrafT,colrafMF} of T-brane bound states of Type IIB seven-branes \cite{cecotti-cordova-heckman-vafa}, which lends more evidence to support the claim of footnote \ref{foot:MthTbr}.}  Turning on such a VEV for either flavor symmetry factor nontrivially modifies the tails of the 6d quiver on the tensor branch, and in ``short-enough'' cases the VEVs for the two symmetries can interact, substantially modifying the ``core'' of the quiver, an effect already discussed in \cite[Sec. 5]{Heckman:2016ssk}.  For sufficiently long quivers however (i.e. for generic values of $N$) these effects do not arise; moreover, as already stated, the $a$-theorem for right flavor Higgsings has already been proven in \cite{Mekareeya:2016yal}. For these two reasons we will not discuss the latter further,  save for a brief mention in the conclusions and for appendix \ref{app:fullhasse}, where we will showcase the full Higgs branch of the UV orbi-instanton $(N=2,k=2,\rho_\infty:2=[1^2])$.

The second type of flow has the opposite behavior, i.e. it has an easy description in terms of branes but lacks an equally easy description in QFT, at least for orbi-instantons.\footnote{\label{foot:rank2}For a rank-2 E-string (which can be thought of as a limiting case of an orbi-instanton with $(N=2,k=1,\rho_\infty:1=[1^1])$, i.e. no orbifold to begin with) a QFT description exists: the Higgsing is induced by giving a VEV to an $\SU(2)$ moment map, where this $\SU(2)$ is a factor of the flavor symmetry \cite{Beem:2019snk,Giacomelli:2020jel}. We would like to thank Simone Giacomelli for discussion on this point.} It corresponds to separating some of the M5's from the stack at the singularity, while still keeping them on the M9, and organizing them into substacks, potentially made up of a single brane.\footnote{It is the analog of the Higgs branch flows studied in \cite{Bourget:2022tmw} for the $(A,A)$ theory of $N$ M5's probing $\cc^2/\zz_k$.  In \cite{Heckman:2016ssk} it is suggested that the ``separation'' brane move should correspond to a diagonalizable VEV, i.e. a semisimple element of the algebra $\mathfrak{f}=\su{k}$ of the flavor factor $[\SU(k)]$; on the contrary, our flavor Higgsings correspond in that context to nilpotent elements, i.e. T-brane VEV, as reiterated above.  We would like to thank Alessandro Tomasiello for discussion on this point.} At the orbifold point we are left with $N-M$ M5's, i.e. a ``lower-rank'' orbi-instanton. See figure \ref{fig:decoupling} for clarity.
\begin{figure}[ht!]
	\centering
	\includegraphics[width=0.7\textwidth]{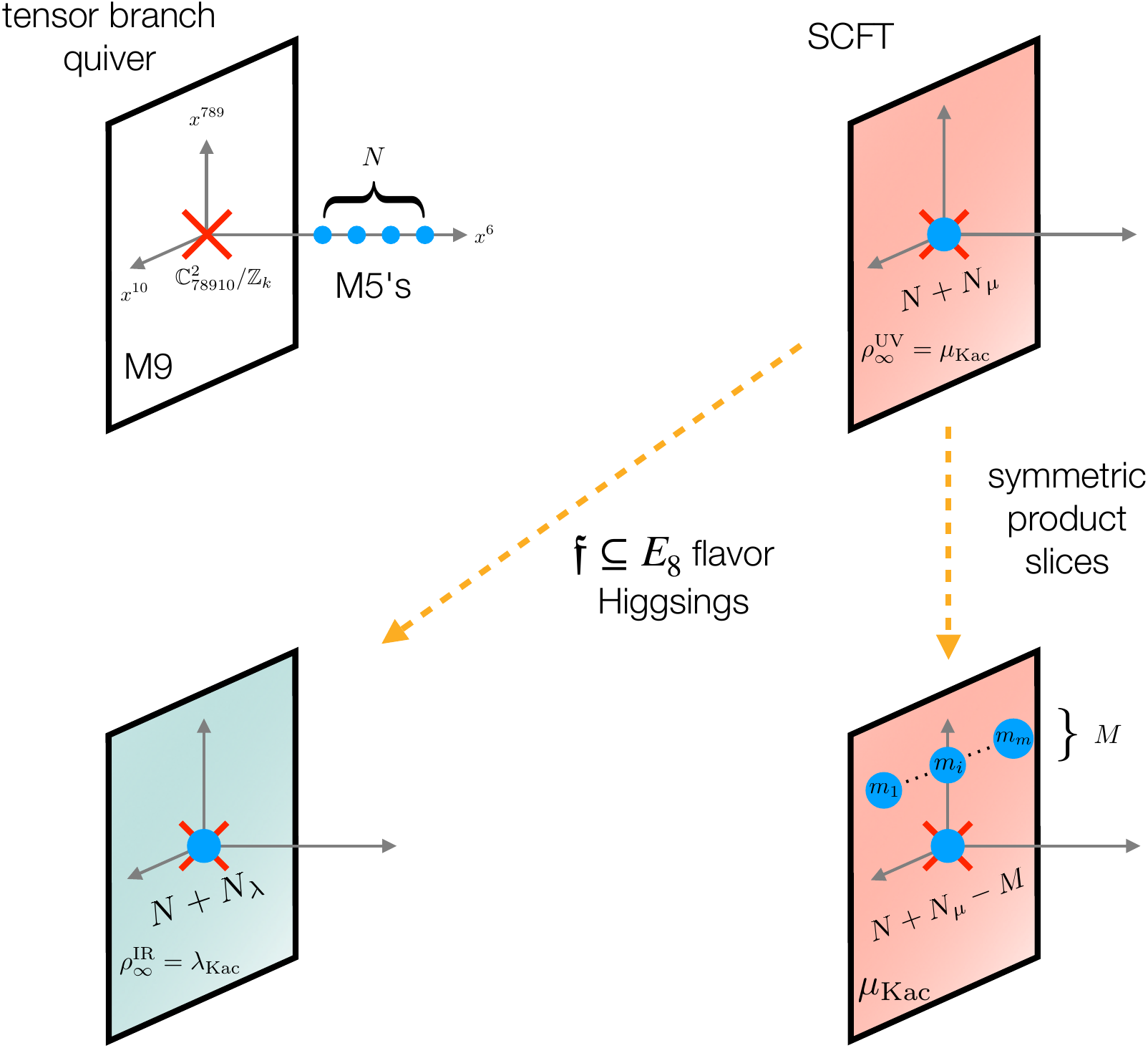}
	\caption{\textbf{Top left:} the generic point on the tensor branch of the UV orbi-instanton.  \textbf{Top right:} the origin of the tensor branch where the orbi-instanton SCFT defined by the triple $(N,k,\rho_\infty^\text{UV})$ lives (the generic $\rho_\infty^\text{UV}$ is defined in (\ref{eq:rhopart})); all $N$ M5's (the full instantons) are brought on top of each other and on top of the $\cc^2/\zz_k$ orbifold point, where the M9 sits (along $x^6$), fractionating into $N_\rho$ pieces (themselves acting as fractional instantons),  with $N_\rho$ defined in (\ref{eq:Nmu}). \textbf{Bottom left:} an IR orbi-instanton defined by a different Kac diagram $(N,k,\rho_\infty^\text{IR})$ connected via Higgs branch RG flow to the UV theory. (The coweight corresponding to $(N,k,\rho_\infty^\text{IR})$ lies \emph{above} the coweight corresponding to $(N,k,\rho_\infty^\text{UV})$ in the partial order.)
		To reach it, we activate a VEV for matter operators (in flavor hypermultiplets), and generically break the left  flavor symmetry factor of the UV theory (a maximal subalgebra of $E_8$, including the full $E_8$ by extension).   \textbf{Bottom right:} a product of a ``lower-rank'' orbi-instanton $(N-M,k,\rho_\infty^\text{UV})$ times $m$ substacks of $m_i$ M5's each, defining the partition $[m_i]$ (i.e. $\sum_{i=1}^m m_i = M$).}
	\label{fig:decoupling}
\end{figure}
Each of the ``separated'' M5's sees a copy of the singularity in its transverse space (within the M9), so $M$ (indistinguishable) M5's will see the symmetric product $\Sym^M(\cc^2/\zz_k\setminus \{0\})$ {(of quaternionic dimension $\dim_\mathbb{H}=M$)}.\footnote{The symmetric product is known to be the Coulomb branch (as a quiver variety) of the so-called ``Jordan quiver'' \cite{Nakajima:2015txa}.  The Higgs branch of the Jordan quiver is instead the (Uhlenbeck partial compactification of the) moduli space of $k$ instantons of $\SU(M)$ on $\cc^2$ by ADHM \cite{ATIYAH1978185}.  Thanks to mirror symmetry \cite{Intriligator:1996ex,deBoer:1996mp} (boiling down to a ``symplectic duality'' \cite{braden2022quantizations,Braden:2014iea} in this case), this Higgs branch is the same as the Coulomb branch of the mirror (framed) necklace quiver, whereas its symmetric product Coulomb branch is the same as the Higgs branch of the necklace.}
More precisely, we need to specify an integer partition $[m_i]$ of $M$, say with $m$ parts, to describe $m$ substacks each containing $m_i$ branes.
The symmetric product $\Sym^M({\mathbb C}^2/{\mathbb Z}_k\setminus \{ 0\})$ has a nontrivial Hasse diagram: we can join together substacks, i.e. we can glue together parts of the partition $[m_i]$.  (As stated above, the slices in this Hasse diagram are either the quotient ${\mathbb C}^2/{\mathbb Z}_k$ or a union of finitely many copies of ${\mathbb C}^2/{\mathbb Z}_2$ or the non-normal singularity $m$ studied in \cite{FJLS}.) Furthermore, the full symmetric product $\Sym^M({\mathbb C}^2/{\mathbb Z}_k)$ is the disjoint union (\emph{stratification}):
\begin{equation}
	\Sym^M({\mathbb C}^2/{\mathbb Z}_k  \! \setminus\! \{ 0\})\sqcup \Sym^{M-1}({\mathbb C}^2/{\mathbb Z}_k  \! \setminus\! \{ 0\}) \sqcup \Sym^{M-2}({\mathbb C}^2/{\mathbb Z}_k  \! \setminus\! \{ 0\}) \sqcup \cdots \sqcup \, \{ 0\}\ ,
\end{equation}
and hence the Hasse diagram extends to include \emph{all} symmetric products $\Sym^{M-i}$ for $i\leq M$; in this context one can also dissolve a stack of $i$ M5's into ($G_4$) flux (within the M9), i.e.  perform $i$ small $E_8$-instanton transitions transmuting $i$ tensors into $29i$ hypermultiplets \cite{Ganor:1996mu}, thus degenerating from $\Sym^M$ to $\Sym^{M-i}$.
Finally, we can remove any of the remaining M5's from the singularity (reducing the number $N-M$ of remaining M5's), in this case degenerating from $\Sym^M$ to $\Sym^{M+i}$ for some $i$.

Notice that, because of the presence of the M9, each substack of separated M5's provides a decoupled copy of a rank-$m_i$ E-string theory,
\begin{equation}
	[E_8]\,\underbrace{12\cdots 2}_{m_i}\ ,
\end{equation}
itself a 6d $(1,0)$ interacting SCFT.

It is natural to identify (the Higgs branch of) orbi-instantons defined by triples $(N,k,\rho_\infty)$ (and without any decoupled E-string) with strata of the double affine Grassmannian defined by a dominant coweight $\lambda=(\lambda_\text{Kac},n)$ of affine $E_8$,  since specifying $\rho_\infty$ is the same as specifying a $\lambda_\text{Kac}$ at level $k$.
(At the end of section \ref{subsub:mindeg} we will elucidate the precise relationship between $N$ and $n$).
On the other hand,  the $M$ separated M5's organized in $m$ substacks are identified with the strata $\Sym^M_{[m_i]}(\cc^2/\zz_k\! \setminus\! \{0\})$.
Slices between UV and IR orbi-instantons $(N,k,\rho_\infty^\text{UV})$ and $(N,k,\rho_\infty^\text{IR})$ are identified with degenerations $\mu > \lambda$; slices between symmetric products with the reduced moduli spaces of $m_i$ $E_8$-instantons on $\cc^2$. The quaternionic dimension of the latter is $\dim_\mathbb{H}  \prescript{\text{c}}{}{\mathcal{U}_{[m_i]}} = 30 m_i -1$.
For $m_i=1$ (i.e. dissolving a single M5 from the separated stacks into flux) this dimension is obviously 29, i.e. the dimension of $\overline{\min_{E_8}}$ or the Higgs branch of a (rank-1) E-string.  This is precisely the number of hypermultiplets produced by a single small $E_8$-instanton transition. Then the slice $\prod_{i=1}^m \prescript{\text{c}}{}{\mathcal{U}_{[m_i]}}$ means we are performing a total of $M=\sum_{i=1}^m m_i$ small $E_8$-instanton transitions, reducing $N$ to $N-M$.\footnote{For a rank-2 E-string, which can be thought of as the orbi-instanton $(N=2,k=1,\rho_\infty:1=[1^1])$, according to footnote \ref{foot:rank2} the full Hasse diagram of Higgs branch RG flows can be constructed \cite[Fig. 45]{Martone:2021ixp}.  In that figure, $\mathfrak{a}_1$ is the operator VEV charged under $\SU(2)$ of \cite{Beem:2019snk,Giacomelli:2020jel}; $\mathbb{H}=\cc^2=\Sym^1(\cc^2)$; $\mathfrak{g}=\mathfrak{e}_8$ indicates the $\mathfrak{e}_8$ minimal degeneration, i.e. $\overline{\min_{E_8}}$, and corresponds either to a slice of the type just described (going from a stack of 2 M5's to 2 stacks of 1 M5 each, i.e. $m=2$ and $m_i=1$) or a small $E_8$-instanton transition, when a rank-1 E-string (i.e. one tensor multiplet coupled to  $E_8$ matter) transmutes into $29=\dim_{\mathbb{H}} \overline{\min_{E_8}}$ hypermultiplets. More generally,  \cite[Sec.  3.2]{Lawrie:2023uiu} constructs (albeit in a schematic form) the Higgs branch Hasse diagram of the generic rank-$P$ E-string. Another explicit example for $P=4$ can be see on the left of figure \ref{fig:outputk=2}.}

\subsection{Three dimensions from six dimensions, and vice versa}

The result $\dim_\mathbb{H} \prescript{\text{c}}{}{\mathcal{U}_{[m_i]}}= 30 m_i -1$ was recovered using the 3d magnetic quiver of the rank-$m_i$ E-string (which will appear in section \ref{sub:k=1}) in \cite{Hanany:2018uhm},  and this is no coincidence, as we now explain.

The Higgs branch of a supersymmetric theory is expected to be invariant under torus compactification: compactifying on $T^3$ the F-theory ``electric quiver'' \eqref{eq:genelquivpreblow} of a 6d orbi-instanton (i.e. its weak-coupling limit on the tensor branch, obtained via the algorithm of \cite{Mekareeya:2017jgc} from a choice of $(N,k,\rho_\infty)$),\footnote{Both 6d $(1,0)$ tensor and vector multiplets reduce to 3d $\mathcal{N}=4$ vector multiplets.} and applying mirror symmetry to the 3d quiver gauge theory thus obtained,  we land on a new ``magnetic quiver'' \cite{Cabrera:2019izd}:\footnote{By the results of \cite{Benini:2010uu},  it is star-shaped with three legs, given that it is mirror dual to the compactification on $S^1$ of the class-S theory obtained by compactifying the orbi-instanton on $T^2$, itself a fixture with three punctures (defined in \cite[Eq. (4.1)]{Mekareeya:2017jgc}).}
\begin{equation}
	\label{eq:firstmagquiv}
	{\scriptstyle
		1 - 2 - \cdots - k - (r_1+\widetilde P) -(r_2+2\widetilde P) -(r_3+3\widetilde P) - (r_4+4\widetilde P) -(r_5+5\widetilde P)-\overset{\overset{\scriptstyle r_{3'}+3\widetilde P}{\vert}}{(r_6+6\widetilde P)}-(r_{4'}+4\widetilde P)-(r_{2'}+2\widetilde P)}\ .
\end{equation}
The latter is a 3d $\mathcal{N}=4$ unitary quiver gauge theory flowing to an SCFT in the IR; the SCFT sits at the intersection between 3d Higgs and Coulomb branch. (We are selecting a vacuum of maximal breaking of the gauge symmetry, i.e. the product gauge group $\Pi_i \U(s_i)$ of rank $r_\text{V}$ is Higgsed to a maximal torus $T^{r_\text{V}}$.
The rank $r_\text{V}$ is given by the sum of all gauge ranks $s_i$ in \eqref{eq:firstmagquiv} minus one, since there are no flavor nodes in the quiver.)  The power of this construction lies in the fact that the 3d Coulomb branch captures the Higgs branch of the starting 6d theory at weak \emph{and} strong coupling \cite{Cabrera:2019izd} (i.e.  at the generic point on the tensor branch and at its origin, where the CFT sits).

The details of the above quiver (which can be found in appendix \ref{app:magquivs}, together with the relevant notation) are not important for the present discussion. What is important is that this quiver is neither of finite nor of affine type, so strictly speaking the stratification result outlined in section \ref{intro_RG} (a conjectural extension of the results of \cite{Nakajima:2016guo} to affine type E) does \emph{not} apply straightforwardly. (Incidentally, it would be interesting to develop the technology needed to construct its Coulomb branch as a quiver variety \cite{naka-priv}. We will briefly come back to this point in the conclusions.)

As a first step toward understanding the Coulomb branch $\hat{\mathcal{M}}_\text{C}$ of \eqref{eq:firstmagquiv}, it was proposed in \cite[Sec. 4.3]{Mekareeya:2017jgc} that the hyperkähler quotient of the latter times the hyperkähler space $\mathcal{O}_{\bm{\xi}}$ by $\SU(k)$, i.e.  $(\hat{\mathcal{M}}_\text{C}\times \mathcal{O}_{\bm{\xi}})/\!\!/\!\!/_{\bm{\xi}}\SU(k)$, yields the centered moduli space $\mathcal{M}_\text{C}^{\text{inst},\bm{\xi}}$ of $E_8$-instantons on the \emph{deformation or resolution} of $\cc^2/\zz_k$.  (This is proved in appendix \ref{app:moduli}.) The $E_8$-instanton moduli space on the singular space $\cc^2/\zz_k$ (i.e. for $\bm{\xi}=0$) is given by the Coulomb branch of a (framed) quiver of affine type E,  namely
\begin{equation}
	\label{eq:secondmagquivinf}
	{\scriptstyle \boxed{\scriptstyle k} - (r_1+\widetilde P) -(r_2+2\widetilde P) -(r_3+3\widetilde P) - (r_4+4\widetilde P) -(r_5+5\widetilde P)-\overset{\overset{\scriptstyle r_{3'}+3\widetilde P}{\vert}}{(r_6+6\widetilde P)}-(r_{4'}+4\widetilde P)-(r_{2'}+2\widetilde P)}\ ,
\end{equation}
for which the results of \cite{Braverman:2016pwk, Nakajima:2016guo} should apply. 

In fact, for ${\bm{\xi}}=(\xi_\cc,\xi_\rr)\in \su{k} \otimes (\cc \oplus \rr)$ suitably generic,\footnote{See \cite[Sec. 2]{Nakajima:1994} for more details.} $\mathcal{O}_{\bm{\xi}}$ can be understood as a $\SU(k)_\cc$ regular semisimple orbit, and has a hyperkähler metric specified by $\xi_\rr$.  The latter acts as a resolution modulus for $\cc^2/\zz_k$; on the other hand a nonzero $\xi_\cc$ deforms the singularity.  From a physics perspective,  $\mathcal{O}_{\bm{\xi}}$ is nothing but the Coulomb (or Higgs) branch of the $T(\SU(k))$ tail (which is self-mirror dual \cite[Sec. 3.3.1]{Gaiotto:2008ak}) on the left of \eqref{eq:firstmagquiv}, i.e.
\begin{equation}
	1-2-\cdots-(k-1)-\boxed{k}\ ,
\end{equation}
where now ${\bm{\xi}}$ can be understood as an $\SU(2)$ (R-symmetry) triplet of mass parameters of the $\SU(k)$ flavor symmetry represented by $\boxed{k}$.\footnote{More precisely, the flavor symmetry of $T(\SU(k))$ is given by $\SU(k)\times \SU(k)^\vee$ \cite{Gaiotto:2008ak}, where the Langlands dual is $\SU(k)^\vee = \PU(k) \coloneqq \U(k)/\U(1) \cong \SU(k)/\mathbb{Z}_k$.  Then $\boxed{k}$ represents the $\PU(k)$ factor.  This fact will be relevant in appendix \ref{app:moduli}. The group $\SU(k)$ acts on the Higgs branch of $T(\SU(k))$, which corresponds to the nilpotent cone $\mathcal{N}_{\SU(k)}$, while $\PU(k)$ acts on the Coulomb branch, which is given by the nilpotent cone $\mathcal{N}_{\PU(k)} = \mathcal{N}_{\SU(k)}$ (since $\SU(k)$ and $\SU(k)^\vee$ share the same Lie algebra, $\mathfrak{su}(k)$).}  When ${\bm{\xi}}=0$ this Coulomb branch is given by the nilpotent cone of $\su{k}$ (see e.g. \cite[Rem. 8.5 (3)]{Nakajima:1994}).

Because of this $T(\SU(k))$ tail, the stratification of the Coulomb branch $\hat{\mathcal{M}}_\text{C}$ will contain \emph{many more} strata than $\mathcal{M}_\text{C}^{\text{inst},\bm{\xi}}$: some strata will come from the base (which is parameterized by $\bm{\xi}$), while others from the fiber (i.e. the instanton moduli space for each possible value of $\bm{\xi}$).  However, we can stay in the vicinity of $\xi_\rr = 0$ (that is, we do not resolve the orbifold, but we may allow for complex deformations);\footnote{This is the analog of \cite[Prop.  6.2]{Nakajima:1994}.} then, the stratification result \eqref{eq:stratMC}, which is expected to hold for a quiver of affine type E \cite[Sec. 6]{Nakajima:1994}, i.e. for  \eqref{eq:secondmagquivinf}, should carry over to \eqref{eq:firstmagquiv} \cite{naka-priv}.  In practice, we are disregarding the strata associated with the $T(\SU(k))$ tail or, in physics language,  the Higgsings of the \emph{right} $[\SU(k)]$ flavor symmetry factor in 6d (corresponding to the $k$-changing transitions mentioned in footnote \ref{foot:kchange}, and to selecting a nontrivial $\xi_\mathbb{C}$),  for which the $a$-theorem was proven in \cite{Mekareeya:2016yal}.  On the other hand, all other strata (namely dominant coweights, i.e. Kac diagrams for the left flavor symmetry, and symmetric products) will be considered in our proof of the $a$-theorem.  Moreover, in appendix \ref{app:fullhasse} we will see an example where \emph{all} strata are included.

We are finally ready to interpret strata and slices.
We identify the Higgs branch of the UV theory with $\mathfrak{f}=E_8$ flavor symmetry with the Coulomb branch $\mathcal{M}_\text{C}(\mu,\lambda)$ in \eqref{eq:strata}, where $\lambda,\mu$ are dominant coweights of affine $E_8$ satisfying $\lambda\leq\mu$.
We write $\lambda=(\lambda_{\rm Kac},n)$ and similarly for $\mu$; then we must have $\mu = \lambda+ \sum_i v_i \alpha_i^\vee$ where $\alpha_i^\vee$ are the simple coroots.
From now on we will \emph{always} set $\mu_\text{Kac}$ to be the UV Kac diagram $[1^k]$:
\begin{equation}
\rho_\infty^\text{UV}: k=[1^k]\ \longleftrightarrow\  \mu_\text{Kac}= \begin{smallmatrix} & & & & &0 &  & & \\ k & 0 & 0 & 0 & 0 & 0 & 0 & 0 \end{smallmatrix}\ .
\end{equation}
Then $\mathcal{M}_\text{C}(\mu,\lambda)$ is obtained from the Coulomb branch $\hat{\mathcal{M}}_\text{C}$ of the UV orbi-instanton's magnetic quiver  by \emph{disregarding} the $T(\SU(k))$ tail (in the sense just explained).
We now look at the stratification of this Coulomb branch, which determines the possible RG flows (with the direction of flow opposite to the closure order).
As outlined in section \ref{intro_RG}, each coweight $\nu$ satisfying $\lambda\leq\nu\leq\mu$ gives rise to a stratum ${\mathcal M}_\text{C}^\text{smooth}(\nu,\lambda)$, which is the open symplectic leaf in ${\mathcal M}_\text{C}(\nu,\lambda)$.
Moreover, if $\nu\leq \mu-Mc$ for $M\geq 1$ then there are several additional strata which lie above ${\mathcal M}_\text{C}^\text{smooth}(\nu,\lambda)$ in the closure order (i.e. downstream along RG flows) but not above ${\mathcal M}_\text{C}^\text{smooth}(\xi,\lambda)$ for any $\xi$ with $\nu<\xi\leq\mu$.
These strata are indexed by partitions $[m_i]$ of $M$ and are ``seen'' by $M$ separated M5's away from the singularity, organized in substacks containing $m_i$ branes each.
We will label the additional stratum given by a partition $[m_i]$ of $M$ by ${\mathcal M}^{[m_i]}_\text{C}(\nu,\lambda)$.
Note that ${\mathcal M}_\text{C}(\nu+Mc,\lambda)$ contains the closure of ${\mathcal M}^{[m_i]}_\text{C}(\nu,\lambda)$, which strictly contains ${\mathcal M}_\text{C}(\nu,\lambda)$; full details of the possible transitions are given in section \ref{subsub:newdeg}.
The transverse slice to ${\mathcal M}^\text{smooth}_\text{C}(\nu,\lambda)$ is isomorphic to the Coulomb branch ${\mathcal M}_\text{C}(\mu,\nu)$ (and can be understood by quiver subtraction).
The transverse slice to ${\mathcal M}^{[m_i]}_\text{C}(\nu,\lambda)$ is isomorphic to the product of ${\mathcal M}_\text{C}(\mu,\nu+Mc)$ and the (Uhlenbeck partial compactifications of the) centered moduli spaces of $m_i$ $E_8$-instantons on $\cc^2$, i.e. the Higgs branch of a rank-$m_i$ E-string (given by $m_i$ M5's away from the singularity but lying on the M9).%

\subsection{Strategy of proof and results}

Having spelled out the details of the various possible Higgsings, we will now explain what we will do with them. 

First of all,  for flows from the UV orbi-instanton to a lower-rank orbi-instanton times a collection of decoupled E-strings in the IR, proving the $a$-theorem is straightforward.  The 6d anomaly polynomial of the decoupled system is given by the sum of the anomaly polynomials of the two ingredients, i.e. lower-rank orbi-instanton and E-strings. The same is true for the total $a$ anomaly, being the sum of the $a$ anomalies of the two ingredients. Therefore, establishing an $a$-theorem for these flows boils down to separately proving it for their ``constituent'' $a$ anomalies. It turns out that  the $a$ anomaly of an orbi-instanton defined by $(N,k,\rho_\infty)$ is always higher than that of the system $(N-M,k,\rho_\infty)$ plus a collection of $n_i$ rank-$m_i$ E-strings (where $[n_i,m_i]$ is an integer partition of $M$, i.e. $\sum_i n_i m_i = M$), e.g. a single rank-$M$ E-string, $M$ rank-$1$ E-strings, and more general combinations. We prove this in appendix \ref{app:anomaly_decoupled}. Therefore, these flows are \emph{included} within our proof of the $a$-theorem for flavor Higgsings of the left symmetry factor $\mathfrak{f}$.

Having identified orbi-instantons $(N,k,\rho_\infty)$ with dominant coweights $\lambda=(\lambda_\text{Kac}, n)$ of affine $E_8$, and UV-IR left flavor Higgsings with degenerations $\mu > \lambda$,  our proof consists of two main steps:
\begin{itemize}
	\item[\emph{i})]
	      As we shall explain in section \ref{subsub:corootscoweightssubsec}, affine coweights can be expressed as triples $(k,\overline\lambda,n)$, where $\overline\lambda$ is a weighting of the \emph{finite} $E_8$ Dynkin diagram and $n$ is an arbitrary integer.
	      On the basis of an explicit algorithm (due to Roy \cite{roy}) determining the Hasse diagram of affine coweights, we will interpret $n$ in terms of the physical number $N$ of M5-branes (in the process clarifying the relationship between the numbers $N_3$, $N_S$ and $N_6$ introduced in \cite{Mekareeya:2017jgc}).
	      This provides the sought-after partial order on the space of homomorphisms $\Hom(\zz_k,E_8)$ (a question asked e.g.  in \cite{Heckman:2018pqx,Heckman:2018jxk,Frey:2018vpw,Argyres:2022mnu}) and establishes, independently of 3d magnetic quivers and quiver subtractions, a hierarchy between Higgs branch RG flows.
	      As a result, the hierarchies of \cite{Fazzi:2022hal} can be embedded as connected subdiagrams in the aforementioned Hasse diagram. (In that paper, they were obtained by fixing the sum $N+N_\rho$ for ease of illustration, where the number $N_\rho$ depends solely on the choice of $\rho_\infty$ and is defined in \eqref{eq:Nmu} -- see section \ref{sec:check} for an expanded explanation on this.)

	\item[\emph{ii})]
	      Secondly, the $a$ anomaly of each SCFT can be written in terms of combinatorial data of the associated stratum.  (This is akin to what happens for T-brane theories, where the hierarchy of nilpotent flavor Higgsings mimics the Hasse diagram of nilpotent orbits of the flavor Lie algebra \cite{Heckman:2016ssk}, and writing $a$ in terms of the dimensions of these orbits ultimately allows one to prove that $\Delta a>0$ for Higgs branch flows \cite{Mekareeya:2016yal}.  Here we see that this happens also in the case of orbi-instantons, but with the dominant coweight strata of the double affine Grassmannian.)  The partial order on the dominant coweights induces a partial order on the SCFTs (exactly the one obtained in \cite{Fazzi:2022hal,Giacomelli:2022drw} via 3d magnetic quiver subtraction and for $k=4$ via a 6d anomaly polynomial analysis in \cite{Frey:2018vpw}), and writing $a$ in terms of data of the strata allows us to prove $\Delta a>0$ in full generality.
	      There are finitely many cases to check, and we have checked the required condition $\Delta a>0$ by computer in all cases.
	      However, we also sketch a conceptual proof which is valid for a large family of cases (and which could in theory be adapted to other cases).
	      This requires a certain amount of analysis of the formula for the $a$ anomaly,  especially terms $\sum_{\alpha>0}\overline\lambda(\alpha)^3$ and $\sum_{\alpha>0}\overline\lambda(\alpha)^5$ which appear in it.
\end{itemize}

\subsection{Organization}

The rest of the paper contains the technical proofs for the statements made so far, and is organized as follows.
In section \ref{sec:orbi} we define Kac diagrams as the relevant objects needed to classify $\zz_k$-gradings of Lie algebras ($E_8$ in our case), i.e. homomorphisms $\Hom(\zz_k,E_8)$, and we give a lightning review of orbi-instantons (of type A).
In section \ref{sec:grass} we introduce the affine and double affine Grassmannian of $E_8$, and give an explicit algorithm to construct strata of the latter defined by dominant coweights.  Using this algorithm,  in section \ref{sec:check} we construct Hasse diagrams of dominant coweights for a few values of $k$, and we verify that the hierarchies of RG flows (left flavor Higgsings) obtained in  \cite{Fazzi:2022hal,Giacomelli:2022drw,Frey:2018vpw} (via magnetic quiver subtraction or 't Hooft anomaly matching based on the 6d anomaly polynomial) can easily be recovered from these Hasse diagrams as connected subdiagrams.   In section \ref{sec:a} we prove the $a$-theorem, inducing the partial order on the $a$ anomalies of UV and IR orbi-instantons  from the corresponding two strata (dominant coweights) of the Grassmannian, such that $\Delta a>0$ for any pair of SCFTs connected by RG flow (flavor Higgsing). We close in section \ref{sec:conc} with a few observations and an outlook.  In appendix \ref{app:magquivs} we recall properties and notation of 3d magnetic quivers associated with 6d orbi-instantons, and we also prove that $\mathcal{M}_\text{C}^{\text{inst},\bm{\xi}}=(\hat{\mathcal{M}}_\text{C}\times \mathcal{O}_{\bm{\xi}})/\!\!/\!\!/_{\bm{\xi}} \SU(k)$ as quiver varieties. In appendix \ref{app:anomaly_decoupled} we prove the $a$-theorem for flows between UV orbi-instanton and a decoupled system of a lower-rank orbi-instanton times a bunch of E-strings.  In appendix \ref{app:fullhasse} we showcase for the benefit of the reader the full Higgs branch of the UV orbi-instanton $(N=2,k=2,\rho_\infty^\text{UV}:2=[1^2])$, including left \emph{and} right flavor Higgsings (i.e. also the strata associated with the $T(\SU(2))$ tail $1-\boxed{2}$).

\section{Kac diagrams and orbi-instantons}
\label{sec:orbi}

We begin this section with a brief account of the theory of Kac diagrams, and end it with a lightning review of orbi-instantons and how Kac diagrams enter the physics discussion.

\subsection[Kac diagrams and $\zz_k$-gradings of $E_8$]{Kac diagrams and \texorpdfstring{$\bm{\zz_k}$}{Zk}-gradings of \texorpdfstring{$\bm{E_8}$}{E8}}
\label{sub:diag}

Kac proved in \cite{kac1990infinite} that the periodic automorphisms of a finite-dimensional simple Lie algebra, i.e. the gradings by finite cyclic groups (denoted $\zz_k$ here), all arise via ${\mathbb Z}$-gradings of a corresponding affine Kac-Moody Lie algebra.
The combinatorial data of this correspondence is encapsulated in the set of Kac diagrams.
In this section we briefly recall this relationship; we refer the reader to \cite[Ch.. 6-8]{kac1990infinite} for further details.
For simplicity, we focus on the untwisted case, which is the only one relevant to us.\footnote{The reader interested in the twisted case may consult \cite{Tachikawa:2011ch} for a brief discussion on outer automorphisms.}

Let ${\mathfrak g}$ be a finite-dimensional simple Lie algebra and let ${\mathfrak g}_{\rm aff}$ be the corresponding (untwisted) affine Kac--Moody Lie algebra.
Fix once and for all a Cartan subalgebra ${\mathfrak t}$ of ${\mathfrak g}$ (contained in a Cartan subalgebra ${\mathfrak t}_{\rm aff}$ for ${\mathfrak g}_{\rm aff}$), and let $\Phi=\Phi({\mathfrak g},{\mathfrak t})$ denote the root system of ${\mathfrak g}$ relative to ${\mathfrak t}$.
Choose a positive system $\Phi^+\subset \Phi$ (or equivalently, choose a Borel subalgebra ${\mathfrak b}$ containing ${\mathfrak t}$).
Let $\{ \alpha_1,\ldots ,\alpha_r\}$ be a basis of simple roots in $\Phi^+$, which we complete to a basis $\{ \alpha_0,\ldots ,\alpha_r\}$ for the affine root system $\Phi_{\rm aff}$.
Recall that ${\mathfrak g}_{\rm aff}$ is not simple: it is a semidirect product $[{\mathfrak g}_{\rm aff},{\mathfrak g}_{\rm aff}]\oplus {\mathbb C}d$, where $d$ is any element of the Cartan subalgebra not contained in the span of the coroots.
(The standard choice -- see e.g. \cite[Ch. 6.2]{kac1990infinite} is $d$ such that $\alpha_0(d)=1$ and $\alpha_i(d)=0$ for $i>0$.)
The derived subalgebra ${\mathfrak g}'_{\rm aff}\coloneqq[{\mathfrak g}_{\rm aff},{\mathfrak g}_{\rm aff}]$ is called the affine Lie algebra.
Note that ${\mathfrak g}'_{\rm aff}$ is  not simple either: it has a one-dimensional center ${\mathbb C}c$, where $c$ is the canonical central element \cite{kac1990infinite}. (The value of $c$ on a given representation is called, rather evocatively, the ``central charge'' \cite{Ginsparg:1988ui}.\footnote{See e.g. \cite{haine} for a brief account on the so-called Sugawara construction.})
It is well known that the quotient ${\mathfrak g}'_{\rm aff}/{\mathbb C}c$ is isomorphic to the {\it loop algebra} ${\mathcal L}({\mathfrak g})\coloneqq{\mathfrak g}\otimes {\mathbb C}[t,t^{-1}]$.

There is a more general loop algebra construction.
Let $\theta$ be an automorphism of order $k$ of the simple Lie algebra ${\mathfrak g}$ and let $\zeta=e^{2\pi i/k}$.
Since its minimal polynomial divides $t^k-1$, $\theta$ acts diagonalizably on ${\mathfrak g}$, hence there is a decomposition as a direct sum of eigenspaces:
\begin{equation}
	\mathfrak{g}={\mathfrak g}(0)\oplus {\mathfrak g}(1)\oplus\ldots \oplus {\mathfrak g}(k-1)
\end{equation}
where ${\mathfrak g}(j) = \{ x\in{\mathfrak g} : \theta(x)=\zeta^j x\}$.
This is a grading over $\zz_k$: for any integers $j,l$ modulo $k$ we have $[{\mathfrak g}(j),{\mathfrak g}(l)]\subset {\mathfrak g}(j+l)$.
We define an infinite-dimensional, ${\mathbb Z}$-graded Lie algebra, called the $\theta$-twisted loop algebra, as follows:
\begin{equation}
	{\mathcal L}({\mathfrak g},\theta) = \bigoplus_{j\in{\mathbb Z}} {\mathfrak g}(j)\otimes t^j\subset {\mathfrak g}\otimes {\mathbb C}[t,t^{-1}]\ .
\end{equation}
In the case $\theta=1$, we recover the loop algebra ${\mathcal L}({\mathfrak g})$ with the standard grading (i.e. with ${\mathfrak g}\otimes t^j$ in degree $j$).

Now let $\theta$ be an arbitrary inner automorphism.
By an important theorem of Kac \cite[Thm. 8.5]{kac1990infinite}, ${\mathcal L}({\mathfrak g},\theta)$ is isomorphic as an ungraded Lie algebra to ${\mathcal L}({\mathfrak g})$.
This isomorphism induces a non-standard grading on ${\mathcal L}({\mathfrak g})$, and hence on ${\mathfrak g}_{\rm aff}$.
After conjugating if necessary, any such grading has ${\mathfrak t}_{\rm aff}\subset {\mathfrak g}_{\rm aff}(0)$ and ${\mathfrak g}_{{\rm aff},\alpha}\subset \sum_{i\geq 0}{\mathfrak g}_{\rm aff}(i)$ for all simple roots $\alpha_0,\ldots ,\alpha_r$.
The data of such a grading therefore comes down to a non-negative integer weight for each simple root, and can be represented by a Kac diagram, i.e. a copy of the affine Dynkin diagram with the weights $(n_i)_{0\leq i\leq r}$ attached (see again \eqref{eq:E8ni} for $\mathfrak{g}=E_8$).
We thus obtain a one-to-one correspondence:
\begin{center}
	\{periodic inner automorphisms of ${\mathfrak g}$\}/conjugacy $\xleftrightarrow{\text{1-to-1}}$ \{Kac diagrams\}.
\end{center}%
Two remarks are in order:
	\begin{itemize}
		\item[\emph{i)}]
		      we here gloss over some subtleties concerning symmetries of the Dynkin diagram.
		      This will not be a problem for us since there are no such symmetries in type $E_8$.
		\item[\emph{ii)}]
		      The periodic \emph{outer} automorphisms are classified via the same procedure, leading to Kac diagrams on the \emph{twisted} affine root systems.
		      We refer to \cite{kac1990infinite} for the details (which we do not need, since there are no outer automorphisms in type $E_8$).
	\end{itemize}
\noindent Each Kac diagram $\lambda_{\rm Kac}$ in \eqref{eq:rhopart} uniquely defines a homomorphism ${\mathbb Z}\Phi_{\rm aff}\to{\mathbb Z}$ which sends $\alpha_i$ to $n_i$.
We will use the notation $\lambda_{\rm Kac}(\sum_i m_i\alpha_i)$ for this weight.  Recall that a root $\alpha\in\Phi_{\rm aff}$ is \emph{real} if it is conjugate to a simple root $\alpha_i$, and is \emph{imaginary} otherwise.  In (untwisted) affine types, there is a unique minimal positive imaginary root
\begin{equation}\label{deltadefinition}
	\delta\coloneqq\alpha_0+\sum_{i=1}^r m_i \alpha_i\ ,
\end{equation}
where $\hat\alpha=\sum_{i=1}^r m_i\alpha_i$ is the (unique) highest root in $\Phi^+$, and all imaginary roots are of the form $n\delta$ for $n\in \zz \setminus \{ 0\}$.  A Kac diagram $\lambda_{\rm Kac}$ determines a $\zz$-grading of ${\mathfrak g}_{\rm aff}$ such that the $\delta$-root space lies in ${\mathfrak g}_{\rm aff}(k)$, where
\begin{equation}
	k\coloneqq\lambda_{\rm Kac}(\delta)=n_0+\sum_{i=1}^r m_i n_i
\end{equation}
is the \emph{level} of $\lambda_{\text{Kac}}$ (once again using terminology common for 2d CFTs).
It follows that this grading induces a $\zz_k$-grading of ${\mathfrak g}$, hence an inner automorphism of ${\mathfrak g}$ of order dividing $k$.
If the $n_i$ have no common factor then this automorphism has order $k$.

Note for instance that there are (up to conjugacy) three $\zz_2$-gradings of a simple Lie algebra of type $E_8$, i.e. when $k=2$ (see section \ref{subsub:corootsfinite} for our conventions on numbering):
\begin{itemize}
	\item[\emph{i)}]
	      the trivial grading with ${\mathfrak g}={\mathfrak g}(0)$, given by $n_1=2$ and all other $n_i=0$. This is Kac diagram $\lambda_{\text{Kac}}=[1^2]$;
	\item[\emph{ii)}]
	      the grading with $n_2=1$ and all other $n_i=0$. This is Kac diagram $\lambda_{\text{Kac}}=[2]$;
	\item[\emph{iii)}]
	      the grading with $n_{2'}=1$ and all other $n_i=0$. This is Kac diagram $\lambda_{\text{Kac}}=[2']$.
\end{itemize}
For a Kac diagram $\lambda_{\text{Kac}}$ with corresponding automorphism $\theta$, the fixed point subalgebra ${\mathfrak g}^\theta={\mathfrak g}(0)$ can be read off as the reductive subalgebra generated by ${\mathfrak t}$ and the simple root elements $e_{\pm\alpha}$ such that $\lambda_{\text{Kac}}(\alpha)=0$; in this context we identify the affine root elements $e_{\pm \alpha_0}$ with $e_{\mp \hat\alpha}$, where $\hat\alpha$ is the highest root element in $\Phi^+$.
Such subalgebras are called \emph{pseudo-Levi subalgebras} in the mathematical literature.
For the three Kac diagrams of order 2 listed above, we have respectively ${\mathfrak g}(0)=E_8$, $E_7\oplus\su{2}$ and $\mathfrak{so}(16)$. Notice that this pseudo-Levi subalgebra is precisely what we called $\mathfrak{f}$ (i.e. the preserved flavor symmetry of the orbi-instanton) in section \ref{sub:eng} and \cite{Fazzi:2022hal,Fazzi:2022yca}.

\subsection{A-type orbi-instantons from M-theory}
\label{sub:eng}

Orbi-instantons are 6d $(1,0)$ SCFTs which are the data of the number $N$ of M5-branes probing an M9-wall and simultaneously the orbifold $\cc^2/\Gamma_\text{ADE}$ (with $\Gamma_\text{ADE}\subset \SU(2)$ finite), the order of the orbifold, and (the holonomy of) a flat connection  at the spatial infinity $S^3/\Gamma_\text{ADE}$ for the $E_8$ gauge bundle supported on the M9-wall in eleven dimensions.  Such flat connections are elements of $\Hom(\pi_1(S^3/\Gamma_\text{ADE}),E_8) \cong \Hom(\Gamma_\text{ADE},E_8)$.  Hereafter we are going to focus exclusively on $\Gamma_\text{ADE}=\zz_k$.  Then $\rho_\infty : \zz_k \to E_8 \in  \Hom(\zz_k,E_8)$ is a grading of $E_8$ by $\zz_k$ of the type introduced in the previous section.

The tensor branch of each such orbi-instanton can be given an F-theory description as follows \cite{DelZotto:2014hpa} (to which we refer the reader for the relevant notation):
\begin{equation}\label{eq:genelquivpreblow}
	[E_8]\, \underbrace{\overset{\su{k}}{1} \, \overset{\su{k}}{2} \cdots \overset{\su{k}}{2}}_N \, [\SU(k)]\ .
\end{equation}
However, this quiver represents only a \emph{partial} tensor branch. Depending on the specific flat connection chosen (with this data being hidden in the notation of \eqref{eq:genelquivpreblow}),
we may have to introduce extra compact curves in the base of F-theory. This is because the intersection $[E_8]\, \overset{\su{k}}{1}$ is too singular, and e.g. when $k=[1^k]$ a further $k$ blowups in the base are required in the middle of the two curves (each instance involves introducing a new $1$ curve, decorated by $\su{k − i}$, $i = 1, . . . , k$, and blowing up the ``old'' $1$ into a $2$).  More generally, the full tensor branch of the orbi-instanton is given by the following generalized F-theory quiver,
\begin{equation}\label{eq:genelequiv}
	[F]\underbrace{\overset{\mathfrak{g}}{1} \, \overset{\mathfrak{su}(m_1)}{2}\, \overset{\mathfrak{su}(m_2)}{2} \cdots \overset{\mathfrak{su}(m_{N_\rho-1})}{2}}_{\max\left(N_\rho,1\right)}\, \underbrace{\overset{\mathfrak{su}(k)}{\underset{[N_\text{f}=k-m_{N_\rho-1}]}{2}}\overset{\mathfrak{su}(k)}{2}\cdots \overset{\mathfrak{su}(k)}{2} [\SU(k)]}_{N+1}\ ,
\end{equation}
where $[F]$ is (the nonabelian part of) a maximal subalgebra $\mathfrak{f}$ of $E_8$,  and
$\mathfrak{g} \in \{\emptyset,\mathfrak{usp}(m_0),\mathfrak{su}(m_0)\}$.  For $\mathfrak{g}=\mathfrak{su}(m_0)$ we also have one hypermultiplet in the two-index antisymmetric representation of $\mathfrak{su}(m_0)$ for all $m_0\neq 6$ (and a half hypermultiplet in the three-index antisymmetric representation of $\mathfrak{su}(m_0)$ for $m_0=6$). All ranks and matter representations (i.e. hypermultiplets) are determined by the chosen $\rho_\infty$ via a simple algorithm which can be found in \cite{Mekareeya:2017jgc}.  The subalgebra $\f$ of $E_8$ that is unbroken by the Kac diagram (i.e. the pseudo-Levi subalgebra of $E_8$) is the commutant of the image $\rho_\infty(\zz_k)\subset E_8$, and its Dynkin diagram is easily obtained by deleting the nodes of affine $E_8$ appearing in the partition \eqref{eq:partitionpre} (i.e. the nodes of \eqref{eq:E8ni} whose $n_i$ are nonzero), together with an abelian subalgebra making the total rank 8 \cite[Sec. 8.6]{kac1990infinite}, i.e. a summand of the form $\bigoplus_i \mathfrak{u}(1)_i$.\footnote{The pseudo-Levi subalgebras that do not contain $\mathfrak{u}(1)$ summands are the semisimple ones. These are preserved by Kac diagrams with a single part; all other pseudo-Levi's by diagrams with more than one part. See \cite[Sec. 3.3]{Fazzi:2022hal} for the $\mathfrak{u}(1)$ summands.} We will also say that the orbi-instanton has a plateau of length $N+1$, i.e. the region where the gauge algebras all become $\su{k}$, corresponding to having stacks of $k$ D6's in that region of the Type IIA reduction of M-theory.

What is $N_\rho$ in the above expression? The F-theory description, i.e. the Type IIB background with varying axiodilaton and nonperturbative seven-branes, is T-dual to a Type IIA setup with one O8$^-$-plane and 8 D8-branes.  The different ways in which the stack of D8's split into substacks (one of which may be on top of the O8$^-$) encode the different subalgrebras $\mathfrak{f}$.  There are also $k$ D6-branes, which can end with different patterns on the D8's on the left of the configuration. (Each pattern is specified by a different $\rho_\infty$.) Moreover the D6's are suspended between NS5-branes. (For details see e.g. \cite{Fazzi:2022hal}.) For each M5-brane in the original M-theory configuration we have one NS5-brane; however, because of the orbifold, the M9 \emph{fractionates} \cite{DelZotto:2014hpa}, generating extra NS5's (contributing dynamical tensor multiplets) once we reduce to Type IIA. Then $N_\rho$ precisely captures this number.  It depends on the chosen $\rho_\infty$, and we will say that it gives the number of ``fractional instantons''.
In F-theory, it signals that the intersection between $[E_8]$ and $\overset{\su{k}}{1}$ is too singular, and needs to be blown up $N_\rho$ times to bring the model in Kodaira--Tate form.

$N_\rho$ can be determined in the following simple way in terms of $\rho_\infty$ (i.e. \eqref{eq:E8ni}):
\begin{equation}\label{eq:Nmu}
	N_{\rho}\coloneqq\sum_{i=1}^6 n_i+p\ , \quad p\coloneqq\min\left(\left\lfloor \frac{n_{3'}+n_{4'}}{2} \right\rfloor, \left\lfloor \frac{n_{2'}+n_{3'}+2n_{4'}}{3} \right\rfloor\right)\ .
\end{equation}
When the Kac diagram does not contain any primes, $N_\rho$ is identical to the total number of unprimed parts.\footnote{Notice that our $p$ already appears in \cite[Eq. (3.109)]{Cabrera:2019izd} with the same name, and in \cite{Mekareeya:2017jgc} denotes the difference between $N_S$ and $N_6$.} It may sometimes happen that $N_\rho=0$ (e.g.  picking $[4']$ for $k=4$); in this case  the above electric quiver reads instead
\begin{equation}\label{eq:genelequivNmu0}
	[F]\, \overset{\mathfrak{g}}{1} \,  \underbrace{\overset{\mathfrak{su}(m_1)}{\underset{[N_{\text{f}_1}]}{2}} \, \overset{\mathfrak{su}(m_2)}{\underset{[N_{\text{f}_2}]}{2}}\cdots \overset{\mathfrak{su}(k)}{\underset{[N_{\text{f}_\text{plateau}}]}{2}}\, \overset{\mathfrak{su}(k)}{2} \cdots \overset{\mathfrak{su}(k)}{2} [\SU(k)]}_{N+1}\ ,
\end{equation}
where the ranks and matter representations depend nontrivially on the chosen $k$ and $\mathfrak{g}$. (For concrete examples see \cite{Fazzi:2022hal}.)

Sometimes it will be convenient to call $P\coloneqq N+N_\rho$ the total number of full plus fractional instantons (see e.g. appendix \ref{app:magquivs}).

\section{Affine and double affine Grassmannians}
\label{sec:grass}

Having introduced orbi-instantons and Kac diagrams, we move on to the next dramatis personae -- \emph{affine Grassmannians}. We begin the section by introducing the affine Grassmannian of a simple Lie algebra.
We then discuss double affine Grassmannians,  whose original (in some sense conjectural) definition is due to Braverman--Finkelberg \cite{Braverman:2007dvq,braverman2011pursuing,braverman2012pursuing}, and which have subsequently been studied, in the context of 3d Coulomb branches, by the same authors in collaboration with Nakajima \cite{Braverman:2016pwk,Nakajima:2016guo,Braverman:2017ofm,Braverman:2018gvt,Nakajima:2018ohd,Nakajima:2022sbi}.

\subsection{Affine Grassmannian}\label{affinegrass}

To define the affine Grassmannian of a simple algebraic group $G$ over ${\mathbb C}$, let $\cc[\![z]\!]$ denote the ring of formal complex power series $f(z)=\sum_{i=0}^\infty a_i z^i$, and let $\cc(\!(z)\!)$ be its field of fractions, consisting of all Laurent series $h(z) = \sum_{i=N}^\infty a_i z^i$ with $N\in \zz$.
Let $G$ be an algebraic subgroup of $ \text{GL}(n,\cc)$.
By definition, $G=\{ M \in \text{GL}(n,\cc)\ |\ P_j(M) =0\}$ for some polynomials $P_j$.
(Assume $\{ P_j\}$ is a complete set, i.e. generates the ideal $I$ of all polynomials vanishing on $G$.)
The exceptional groups, in particular $E_8$, can be considered as algebraic subgroups of some $\GL(n,{\mathbb C})$ in this way.
The collection of polynomials $P_j$ allows us to define points of $G$ in an arbitrary ring containing ${\mathbb C}$.
In particular, we have: $G[\![z]\!] =\{ M \in \text{GL}(n,\cc[\![z]\!])\ |\ P_j(M) =0\;\forall\, j\}$ and $G(\!(z)\!) =\{ M \in \text{GL}(n,\cc(\!(z)\!))\ |\ P_j(M) =0\;\forall\, j\}$.
We consider $G[\![z]\!]$ as a subset of $G(\!(z)\!)$ in the canonical way.
(Note that $G(\!(z)\!)$ is sometimes called the loop group; it can be considered as a completed version of the group analog $G[z,z^{-1}]$ of the loop algebra considered in section \ref{sub:diag}.)

The affine Grassmannian of $G$ is the space of left cosets:
\begin{equation}
	\Gr_G \coloneqq G(\!(z)\!)/G[\![z]\!]\ .
\end{equation}
Left multiplication defines an action of $G[\![z]\!]$ on $\Gr_G$; note that the set of orbits for this action is in one-to-one correspondence with the set of double cosets $G[\![z]\!]\!\setminus\!\! \,G(\!(z)\!)/G[\![z]\!]$.
There is a natural topology on $\Gr_G$, and the closure $\overline{\mathcal O}$ of a $G[\![z]\!]$-orbit in $\Gr_G$ has a structure of (finite-dimensional, almost always singular) projective variety containing ${\mathcal O}$ as a Zariski-open subset.
As we recount below, there are countably many orbits, and any orbit closure $\overline{\mathcal O}$ is a union of finitely many orbits, each a locally closed subset of $\overline{\mathcal O}$.
This gives $\Gr_G$ a structure of infinite-dimensional variety, or \emph{ind-variety}.
Given any $M\in G(\!(z)\!)$, denote by $[M]$ the coset $M\cdot G[\![z]\!]\in\Gr_G$, and by $G[\![z]\!]\cdot [M]$ the corresponding $G[\![z]\!]$-orbit.

Recall that an algebraic group is {\it simple} if it is nonabelian and has no closed connected normal subgroups; this includes $E_8$.
Let $G$ be simple and let ${\mathfrak g}$ be the Lie algebra of $G$ (which is a simple Lie algebra).
We retain the notation (${\mathfrak t}$, $\Phi$, $\Phi^+$, $\alpha_i$) introduced in section  \ref{sub:diag}.
Let $T$ be a maximal torus of $G$ with Lie algebra ${\mathfrak t}$, and identify $\Phi=\Phi({\mathfrak g},{\mathfrak t})$ with the roots $\Phi(G,T)$ relative to $T$.
Each $\alpha\in\Phi$ is therefore an algebraic character $T\rightarrow {\mathbb C}^\times$, and hence the root lattice ${\mathbb Z}\Phi$ embeds in the character group $X(T)$.
Similarly, for each $\alpha$ there is a corresponding coroot $\alpha^\vee$, which we think of both as an element of ${\mathfrak t}$ and as a cocharacter, i.e. a homomorphism ${\mathbb C}^\times\rightarrow T$.
We denote the group of all cocharacters by $Y(T)$.
The coroot lattice ${\mathbb Z}\Phi^\vee$ therefore embeds in $Y(T)$.
In general, both inclusions are proper: the center $Z(G)$ is isomorphic to $X(T)/{\mathbb Z}\Phi$, and the fundamental group $\pi_1(G)$ is isomorphic to $Y(T)/{\mathbb Z}\Phi^\vee$.
Thus, $G$ is {\it simply-connected} if $Y(T)={\mathbb Z}\Phi^\vee$.
Obversely, $G$ is of \emph{adjoint type} if $X(T)={\mathbb Z}\Phi$; for any simple $G$ the quotient $G/Z(G)$ is of adjoint type.

The lattices $X(T)$ and $Y(T)$ are dual, in the following sense.
For $\xi\in X(T)$ and $\lambda\in Y(T)$ we denote by $\lambda(\xi)$ the unique integer such that $\xi(\lambda(t))=t^{\lambda(\xi)}$ for all $t\in{\mathbb C}^\times$.
This defines a {perfect pairing} $X(T)\times Y(T)\to {\mathbb Z}$, i.e. it identifies $Y(T)$ with ${\rm Hom}(X(T),{\mathbb Z})$, and vice-versa.
We can therefore identify elements of $Y(T)$ with linear functions $X(T)\rightarrow{\mathbb Z}$.
Hence we have dual inclusions:
\begin{equation}
	{\mathbb Z}\Phi\subset X(T)\subset {\rm Hom}({\mathbb Z}\Phi^\vee,{\mathbb Z})\ ,\quad {\rm Hom}({\mathbb Z}\Phi,{\mathbb Z})\supset Y(T)\supset {\mathbb Z}\Phi^\vee\ ,
\end{equation}
where the first (resp. second) inclusion in each pair is an equality if $G$ is of adjoint type (resp. simply-connected).
The homomorphisms ${\mathbb Z}\Phi^\vee\rightarrow {\mathbb Z}$ (resp. ${\mathbb Z}\Phi\rightarrow {\mathbb Z}$) are called {\it weights} (resp. {\it coweights}).
Since any coweight is determined by its values on the simple roots $\alpha_i$, the lattice of coweights has a basis consisting of the {\it fundamental coweights} $\varpi_i^\vee$ satisfying $\varpi_j^\vee(\alpha_i) = 1$ if $i=j$ and $0$ otherwise.

A coweight (hence also any cocharacter) $\lambda$ is \emph{dominant} if $\lambda(\alpha)\geq 0$ for all positive roots $\alpha$.
Clearly, the dominant coweights are those of the form $\sum_i n_i \varpi_i^\vee$ with $n_i\in {\mathbb Z}_{\geq 0}$.
To each dominant cocharacter $\lambda\in Y(T)$ is naturally associated an element $\lambda_z\in G({\mathbb C}(\!(z)\!))$ (which we can think of as ``evaluating $\lambda$ at $z$''), and hence a $G[\![z]\!]$-orbit:
\begin{equation}
	[\Gr_{G}]^\lambda \coloneqq G[\![z]\!] \cdot [\lambda_z]\ ,
\end{equation}
sometimes known as an (affine) Schubert cell.
We have $\dim_\cc [\Gr_{G}]^\lambda = 2\lambda(\rho)$ where $\rho$ is the Weyl vector, i.e.  the half sum of the positive roots.
(The Weyl vector can also be defined by $\alpha_i^\vee(\rho)=1$ for all simple coroots $\alpha_i^\vee$, that is, $\rho$ is the sum of the fundamental weights $\varpi_i$.)
It turns out that all $G[\![z]\!]$-orbits are of this form, and $[\Gr_G]^\lambda\neq [\Gr_G]^\mu$ for distinct dominant coweights $\lambda,\mu$.
When $G$ is of adjoint type, we therefore obtain the statement:
\begin{center}
	for $G$ adjoint, the $G[\![z]\!]$-orbits in $\Gr_G$ are in correspondence with dominant coweights.
\end{center}
There are finitely many Schubert cells of each dimension, hence the closure of $[\Gr_G]^\lambda$ is a union of finitely many $G[\![z]\!]$-orbits.
We write $\lambda\leq\mu$ when $[\Gr_G]^\lambda\subset\overline{[\Gr_G]^\mu}$; this defines a partial order on the dominant cocharacters.
Below we will return to this partial order, which is very well understood.

It may be worth clarifying the relationship between coroots and coweights.
For a simple root $\alpha_j$ and a simple coroot $\alpha_i^\vee$, we have $\alpha_i^\vee(\alpha_j) = C_{ij}$ is the coefficient of the Cartan matrix for ${\mathfrak g}$.
Hence, $\alpha_i^\vee=\sum_j C_{ij}\varpi_j^\vee$.
(This shows that the coroot lattice is of index $(\det C)$ in $Y(T)$.)
We can therefore express the fundamental coweights as (in general rational) linear combinations of the simple coroots: $\varpi_i^\vee=\sum_j A^{ij}\alpha_j^\vee$, where $(A^{ij})$ is the inverse of the Cartan matrix.

\subsubsection[Coweights in type $E_8$]{Coweights in type \texorpdfstring{$\bm{E_8}$}{E8}}
\label{subsub:corootsfinite}

We specialize to $G=E_8$ from now on.
In this case $G$ is both simply-connected and of adjoint type.
(This follows from the fact that the Cartan matrix has determinant $-1$.
Note that $E_8$ is the only irreducible simply-laced root system with this property.)

We use Coxeter labels $i \in \{2, 3, 4, 5, 6, 4',3', 2'\} = \Delta$ for the Dynkin diagram of $E_8$:
\begin{equation}
	\label{eq:E8}
	E_8: \quad \node{2}{\alpha_2}-\node{3}{\alpha_3}-\node{4}{\alpha_4}-\node{5}{\alpha_5}-\node{6\ver{3'}{\alpha_{3'}}}{\alpha_6}-\node{4'}{\alpha_{4'}}-\node{2'}{\alpha_{2'}}\ .
\end{equation}
As above, denote by $\alpha_i$, $\alpha_i^\vee$ and $\varpi_i^\vee$ the simple root, simple coroot and fundamental coweight indexed by $i$.
It follows from the simply-laced property that for any root $\alpha=\sum_i m_i \alpha_i$, the corresponding coroot is $\alpha^\vee=\sum_i m_i \alpha_i^\vee$.
Denote by $\hat\alpha$ the highest root in $\Phi$.
Then we have $\varpi_2^\vee = \hat{\alpha}^\vee = \sum_{i=2}^6 i \alpha_i^\vee + \sum_{i=2}^4 i \alpha_{i'}^\vee$, which in particular is dominant.
This holds more generally: if $\hat\alpha_I$ is the highest root element of any irreducible simply-laced root system $\Phi_I$, then $\hat\alpha_I^\vee$ is a dominant coweight with respect to $\Phi_I$ (and is the only dominant coroot).
We will apply this in cases where $I$ is a connected subdiagram of the (affine) Dynkin diagram of type $E_8$.
For later reference, we now express $\hat\alpha_I^\vee$ in terms of fundamental coweights.
In all cases, a node $i\in \Delta\setminus I$ is joined via at most one edge to elements of $I$.
We let $\varpi_I^{-}=\sum_{\text{edge}\, I-j} \varpi_j^\vee$.
Then $\hat\alpha_I^\vee=\varpi_I^+-\varpi_I^-$, where $\varpi_I^+$ is the unique dominant coroot within the root subsystem $\Phi_I$.
We specify $\varpi_I^+$ in each case:
\begin{itemize}
	\item
	      if $I=\{ i\}$ is a singleton set then $\varpi_I^+=2\varpi_i^\vee$;
	\item
	      if $I=\{ i,\ldots ,j \}$ is of type $A_m$ ($m\geq 2$) then $\varpi_I^+=\varpi_i^\vee+\varpi_j^\vee$ (where $i,j$ are the end nodes);
	\item
	      if $I=\{ i,i+1,\ldots ,j-2,j-1,j\}$ is of type $D_m$ (with branch node $(j-2)$ and ``tail'' $i-\cdots - (j-2)$) then $\varpi_I^+=\varpi_{i+1}^\vee$;
	\item
	      if $I=\{ 2',3',4',6,5,4\}$ is of type $E_6$ then $\varpi_I^+=\varpi_{3'}^\vee$;
	\item
	      if $I=\{ 2',3',4',6,5,4,3\}$ is of type $E_7$ then $\varpi_I^+=\varpi_{2'}^\vee$.
\end{itemize}

\subsubsection{Orbits as weighted Dynkin diagrams}
\label{subsub:orbitssubsec}

As mentioned above, the $G[\![z]\!]$-orbits $[\Gr_{E_8}]^\mu$ are in one-to-one correspondence with the dominant coweights $\mu$, i.e. the non-negative integer linear combinations
\begin{equation}
	\mu\coloneqq\sum_{i\in \Delta} n_i \varpi_i^\vee\ .
\end{equation}
The integers $n_i$ can be thought of as ``multiplicities'' of the Coxeter labels in the $E_8$ Dynkin \eqref{eq:E8}.
Similarly to the Kac diagrams studied in section \ref{sub:diag}, this allows us to represent orbits via ``weighted Dynkin diagrams''.\footnote{This is non-standard terminology: in the mathematical literature the phrase ``weighted Dynkin diagram'' usually means only those coweights arising in the classification of nilpotent orbits via $\su2$-triples.}

The closure order on orbits induces a partial order on dominant coweights.
In fact we have $[\Gr_{E_8}]^\lambda \subset \overline{[\Gr_{E_8}]^\mu}$ if and only if $\mu − \lambda$ is a non-negative linear combination of simple coroots.
A pair $(\lambda, \mu)$ with $\lambda< \mu$ is called a \emph{degeneration}; it is a minimal degeneration if there is no $\nu$ with $\lambda < \nu< \mu$.
The partial order on coweights can be represented via a Hasse diagram, where an edge connects adjacent coweights (i.e. minimal degenerations).
According to a result of Stembridge \cite{STEMBRIDGE1998340}, there are (in the simply-laced case) exactly two types of minimal degeneration $\lambda < \mu$:
\begin{enumerate}[label=\emph{\roman*})]
	\item
	      pairs $(\lambda, \mu= \lambda+ \hat{\alpha}_I^\vee)$, where $I$ is a connected component of the subdiagram of zeros of $\mu$ (i.e. zero multiplicities in the weighted Dynkin);
	\item
	      pairs $(\lambda, \mu= \lambda+ \alpha_i^\vee)$, where $n_j >0$ for all $j$'s connected to $i$ via an edge.
\end{enumerate}
Note that if we express $\lambda$ as sum $\sum_i n_i \varpi_i^\vee$ of coweights and $\mu-\lambda=\sum_i l_i \alpha_i^\vee$ as a sum of coroots, then in case \emph{i}) we have $n_i l_i=0$ for all $i\in \{ 2',3',4',6,5,4,3,2\}$.

\subsubsection{Geometry and slices in the affine Grassmannian}

Each orbit $\Gr_\mu$ in the affine Grassmannian is smooth, with singular locus equal to the boundary
\begin{equation}
	\overline{\Gr_\mu}\setminus\Gr_\mu=\bigcup_{\lambda<\mu}\Gr_\lambda\ .
\end{equation}
Suppose $\lambda<\mu=\lambda+\sum l_i \alpha_i^\vee$ is a degeneration of dominant coweights.
The geometry of $\overline{\Gr_\mu}$ in a neighbourhood of the singular point $\lambda_z$ can be understood using a transverse slice ${\mathcal S}_{\lambda,\mu}$, for which there is a standard construction.
It is known that ${\mathcal S}_{\lambda,\mu}$ is a symplectic singularity.
By an earlier remark, $\dim_{\mathbb H}({\mathcal S}_{\lambda,\mu})=(\mu-\lambda)(\rho)=\sum l_i$, i.e. the \emph{height} of $\mu-\lambda$.
In particular, in case \emph{i)} above (with $\mu-\lambda=\hat\alpha_I^\vee$) this dimension is $h_I-1$, where $h_I$ is the Coxeter number of the root system spanned by $I$; in case \emph{ii}), $dim_{\mathbb H}({\mathcal S}_{\lambda,\mu})=1$.

The main result of \cite{malin-ostrik-vybornov} is a classification of the singularity of ${\mathcal S}_{\lambda,\mu}$ for any \emph{minimal} degeneration of dominant coweights (in which case the singularity is isolated at $\lambda_z$).
Specifically, in case \emph{i}) one obtains the closure of the minimal nilpotent orbit of a simple Lie algebra of type $I$ (a so-called \emph{minimal singularity}); in case \emph{ii}) it is an $A_{n_i+1}$ type surface singularity, i.e. ${\mathbb C}^2/{\mathbb Z}_{n_i+2}$.
These results are evidently consistent with the dimension statement above.
Note that the pairs in \emph{i}) and \emph{ii}), and hence the partial order on dominant coweights, are naturally expressed in purely combinatorial terms (i.e. without reference to the affine Grassmannian).

\subsection{Double affine Grassmannian}

The affine Grassmannian has an important role in representation theory because of the geometric Satake correspondence \cite{mirkovicvilonen}, a deep theorem relating irreducible representations of ${\mathfrak g}$ to the geometry of $\Gr_G$.
The double affine Grassmannian is a (somewhat conjectural) object playing a similar role for the affine Lie algebra ${\mathfrak g}_{\rm aff}$.
To outline what is known about the double affine Grassmannian, we need to clarify what are the coweights for the corresponding affine (i.e. Kac--Moody) Lie algebra.
We continue to restrict to type $E_8$.

\subsubsection{Coroots and coweights}
\label{subsub:corootscoweightssubsec}

Extend the Dynkin diagram for $E_8$ by adding an affine node, labeled 1; set $\Delta_\text{aff} = \Delta \cup \{1\}$:
\begin{equation}\label{eq:E81}
	E_8^{(1)}: \quad \node{1}{\alpha_1}-\node{2}{\alpha_2}-\node{3}{\alpha_3}-\node{4}{\alpha_4}-\node{5}{\alpha_5}-\node{6\ver{3'}{\alpha_{3'}}}{\alpha_6}-\node{4'}{\alpha_{4'}}-\node{2'}{\alpha_{2'}}\ .
\end{equation}
Correspondingly, we introduce an affine root $\alpha_1$ and coroot $\alpha_1^\vee$, generating (along with $\Phi$ and $\Phi^\vee$) the affine root lattice $\zz \Phi_\text{aff}$ and coroot lattice $\zz \Phi_\text{aff}^\vee$. These lattices are \emph{not} dual via the affine Cartan matrix, because of the imaginary roots in $\Phi_\text{aff}$ and the central elements in $\Phi_\text{aff}^\vee$. Thus we need to extend further (in the language of Kac \cite{kac1990infinite}, to a \emph{realization} of the generalized Cartan matrix, allowing for a {perfect} pairing between weights and coweights).
Using the same notation as \cite[Ch. 6]{kac1990infinite}, we therefore introduce an additional weight $\Lambda_0$ and an additional coweight $d$. Then the weight lattice (of the affine Kac--Moody group) is $\widehat{X}_\text{aff} \coloneqq \zz \Phi_\text{aff} \oplus \zz \Lambda_0$ and the coweight lattice is $\widehat{Y}_\text{aff} \coloneqq \zz \Phi_\text{aff}^\vee \oplus \zz d$.

The elements $d$ and $\Lambda_0$ are defined such that ${\rm Hom}(\widehat{X}_\text{aff},{\mathbb Z})=\widehat{Y}_{\text{aff}}$, and vice versa.
To this end we need to describe the pairing $\widehat{X}_\text{aff} \times \widehat{Y}_\text{aff} \to \zz$.

We use the generalized Cartan matrix to define the pairing between roots and coroots:
\begin{equation}
	\alpha_j^\vee(\alpha_i)=\begin{cases} 2 & \text{if $i=j$,} \\ -1 & \text{if $i$ and $j$ are connected by an edge,} \\ 0 & \text{otherwise.} \end{cases}
\end{equation}
(Note that this pairing extends the pairing on simple roots and coroots for the finite diagram.)
We can extend this by $\alpha_i^\vee(\Lambda_0) = 0$ for $i\neq  1$ and $\alpha_1^\vee(\Lambda_0)= 1$; similarly,  $d(\alpha_i)= 0$ for $i\neq  1$ and $ d(\alpha_1)= 1$; finally $d(\Lambda_0)= 0$.  This gives the perfect pairing: with respect to the bases $\{\alpha_1, \ldots, \alpha_6, \alpha_{4'},\alpha_{3'},\alpha_{2'},\Lambda_0\}$ and $\{\alpha_1^\vee, \ldots, \alpha_6^\vee, \alpha_{4'}^\vee,\alpha_{3'}^\vee,\alpha_{2'}^\vee,d\}$, it is given by the matrix
\begin{equation}
	\begin{pmatrix}
		2  & -1 & 0  & 0  & 0  & 0  & 0  & 0  & 0  & 1 \\
		-1 & 2  & -1 & 0  & 0  & 0  & 0  & 0  & 0  & 0 \\
		0  & -1 & 2  & -1 & 0  & 0  & 0  & 0  & 0  & 0 \\
		0  & 0  & -1 & 2  & -1 & 0  & 0  & 0  & 0  & 0 \\
		0  & 0  & 0  & -1 & 2  & -1 & 0  & 0  & 0  & 0 \\
		0  & 0  & 0  & 0  & -1 & 2  & -1 & -1 & 0  & 0 \\
		0  & 0  & 0  & 0  & 0  & -1 & 2  & 0  & -1 & 0 \\
		0  & 0  & 0  & 0  & 0  & -1 & 0  & 2  & 0  & 0 \\
		0  & 0  & 0  & 0  & 0  & 0  & -1 & 0  & 2  & 0 \\
		1  & 0  & 0  & 0  & 0  & 0  & 0  & 0  & 0  & 0
	\end{pmatrix}\ .
\end{equation}
Note in particular that all entries in the matrix are integers (so the pairing is well-defined) and the determinant equals $-1$ (so the pairing is perfect, i.e. it identifies $\widehat{Y}_{\rm aff}$ with the linear functions $\widehat{X}_{\rm aff}\rightarrow {\mathbb Z}$).
The analog here of the fundamental coweights in $\zz \Phi^\vee$ is the basis for $\widehat{Y}_\text{aff}$ which is dual to the basis $\{\alpha_1,\ldots,\Lambda_0\}$ for $\widehat{X}_\text{aff}$. Let
\begin{equation}
	\xi_2 =\varpi_2^\vee +2d\ ,\quad \xi_3 =\varpi_3^\vee +3d\ ,\ \ldots \  , \quad \xi_{2'} =\varpi_{2'}^\vee +2d\ .
\end{equation}
It is easy to check that the basis for $\widehat{Y}_\text{aff}$ which is dual to $\{\alpha_1,\ldots,\Lambda_0\}$ is $\{d,\xi_2,\ldots,\xi_{2'},c\}$ where $c$ is the canonical central element
\begin{equation}
	c=\left(\sum_{i=1}^6 i\alpha_i^\vee\right) +4\alpha_{4'}^\vee +3\alpha_{3'}^\vee +2\alpha_{2'}^\vee\ .
\end{equation}
Hence we can denote $d$ by $\xi_1$.
(Note that the fundamental coweights $\varpi_i^\vee$ in ${\mathbb Z}\Phi^\vee$ are \emph{not} fundamental coweights for the affine root system; this role is now played by the $\xi_i$.)

In the affine case, a \emph{dominant coweight} is an element of $\widehat{Y}_\text{aff}$ of the form:
\begin{equation}\label{eq:lambdaaffE8}
	\lambda \coloneqq \left(\sum_{i=1}^6 n_i\xi_i\right) +n_{4'}\xi_{4'} +n_{3'}\xi_{3'} +n_{2'}\xi_{2'}+nc
\end{equation}
where all the $n_i$ are non-negative integers. (To connect to well known physics, $n$ can be an arbitrary integer, and would correspond to the energy of a 2d model. Moreover there is a natural $\zz$-shift acting on $n$ which will have a 6d interpretation in terms of small $E_8$-instanton transitions shifting $N$.) We say that $\lambda$ has \emph{level}
\begin{equation}
	k \coloneqq \left(\sum_{i=1}^6  i n_i\right) + 4n_{4'} + 3n_{3'} + 2n_{2'}\ .
\end{equation}
This gives another way of understanding the Kac diagrams mentioned in section \ref{sub:diag}.
Moreover,
\begin{equation}
	k = \lambda(\delta)\, ,
\end{equation}
where $\delta$ is the smallest positive imaginary root.\footnote{See footnote \ref{foot:root}.} Hence the dominant coweights of level $k$ can be identified with the pairs $(\lambda_\text{Kac},n)$ of a Kac diagram of level $k$ \emph{and} an arbitrary integer $n$. In the notation of \cite{braverman2011pursuing},  one writes
\begin{equation}\label{eq:lambdaover}
	\lambda_\text{Kac}=(k, \overline\lambda)\ , \quad \overline\lambda = \sum_{i\neq 1}  n_i \varpi_i^\vee = \lambda_\text{Kac} − kd\ .
\end{equation}
Note that $\lambda(\alpha)=\lambda_{\rm Kac}(\alpha)$ for any $\alpha\in\Phi_{\rm aff}$ and $\lambda(\alpha)=\overline\lambda(\alpha)$ for any $\alpha\in\Phi$.
The point of the above discussion is to highlight the three different ways of writing dominant coweights for affine $E_8$:
\begin{enumerate}[label=\emph{\roman*})]
	\item
	      for \cite{braverman2011pursuing}, these are triples $(k, \overline\lambda, n)$ where $\overline\lambda$ is a dominant coweight for the \emph{finite} diagram, of level (i.e. $\overline\lambda(\hat\alpha)$) less than or equal to $k$. In our notation, this form for $λ$ arises from the direct sum $\widehat{Y}_\text{aff}=\zz d \oplus \zz \Phi^\vee \oplus \zz c$. In particular, $\overline\lambda$ is a $\zz$-linear combination of $\alpha_2^\vee,\ldots,\alpha_{2'}^\vee$, hence also of the fundamental coweights $\varpi_2^\vee,\ldots ,\varpi_{2'}^\vee$ introduced in the previous subsection.
	\item
	      Alternatively, we can write $\lambda$ in \eqref{eq:lambdaaffE8} as $(\lambda_\text{Kac},n)$ where $\lambda_\text{Kac}$ is a Kac diagram. Here $\lambda_\text{Kac}$ is a linear combination of $\xi_i$ for $i \in \Delta_\text{aff}$, and this decomposition arises from \emph{i}) by considering $\zz d \oplus \zz \Phi^\vee$ as the span of the $\xi_i$. (This form helps to understand the connection with Kac diagrams.)
	      In particular, a given Kac diagram gives rise to infinitely many dominant coweights in $\widehat{Y}_{\rm aff}$, parametrized by $n\in{\mathbb Z}$.
	\item
	      Finally, we can express $\lambda$ as a pair $(k,\lambda_\text{comb})$, where $\lambda_\text{comb}$ is a $\zz$-linear combination of the simple coroots $\alpha_i^\vee$ for $i \in \Delta_\text{aff}$, subject to the positivity conditions $\langle \alpha_i,\lambda_\text{comb}\rangle \geq 0$ for $i\neq  1$ and $\langle \alpha_1,\lambda_\text{comb}\rangle \geq -k$. This arises from the decomposition $\widehat{Y}_\text{aff} = \zz d \oplus \zz \Phi^\vee_\text{aff}$. (In the notation of \emph{i}), we have $\zz \Phi^\vee \oplus \zz c = \zz \Phi^\vee_\text{aff}$.)
\end{enumerate}

\subsubsection{Partial order on coweights}
\label{subsub:partial}

Recall from the previous subsection that the closure order on orbits in the affine Grassmannian induces a partial order on the dominant coweights in ${\mathfrak g}$, with respect to which the adjacent pairs $(\lambda,\mu)$ were classified by Stembridge \cite{STEMBRIDGE1998340}.
We now explore the analogous partial order for dominant coweights for the \emph{affine} Lie algebra.
Similarly to the affine Grassmannian, we write $\lambda\leq\mu$ if $ \mu-\lambda$ is a non-negative integer linear combination of simple roots $\alpha_i^{\vee}$, where $i \in \Delta_\text{aff}$.
We explore the statement $\mu\geq \lambda$ in each of the settings \emph{iii})-\emph{i}) above for the dominant coweights.

\begin{itemize}
	\item[\emph{iii})]
	      If $(k, \lambda_\text{comb}) \leq (l, \mu_\text{comb})$ then clearly $k = l$. Thus, there are infinitely many connected components of $\widehat{Y}_\text{aff}$ with respect to the partial order (at least one for each $k$); the partial order arises solely from its restriction to $\zz\Phi_\text{aff}^\vee$.
	      On the other hand, the meaning of $\lambda_{\text{comb}}\leq \mu_{\text{comb}}$ is not immediately clear.
	\item[\emph{ii})] \otherlabel{bullet:ii322}{\emph{ii)}}
	      If $(\lambda_\text{Kac},n) \leq (\mu_\text{Kac},n')$ then it follows from \emph{i}) that $\lambda_\text{Kac}$ and $\mu_\text{Kac}$ are Kac diagrams of the same order; further, we clearly have $n\leq n'$.
	      A (mathematically, but not physically) trivial case is $(\lambda_\text{Kac}, n) < (\lambda_\text{Kac}, n+1)$, by addition of $c$. (We will see in section \ref{sub:k=1} that this minimal degeneration corresponds to performing a small instanton transition from a rank-$\left(n+1\right)$ E-string to a rank-$n$ one, i.e. to dissolving one M5 into flux on the M9.)
	\item[\emph{i})]
	      If $\overline\lambda$, $\overline\mu$ are dominant coweights for ${\mathfrak g}$, both of level less than or equal to $k$, then $(k,\overline\lambda,n)\leq (k,\overline\mu,n)$ if and only if
	      $\overline\lambda \leq \overline\mu$ in the partial order on $\zz\Phi^\vee$.
	      Thus (using the fact that the coroot lattice equals the coweight lattice in type $E_8$), for any dominant coweight $(k, \overline\lambda, n)$ we have $(k, \overline\lambda, n) \geq (k, 0, n)$ (since $\overline\lambda$ is dominant, so is a non-negative linear combination of the $\varpi_i^\vee$ for $i\neq 1$, hence is a non-negative linear combination of the $\alpha_i^\vee$ too). By our remark in \emph{ii}), it follows that there is exactly one component of $\widehat{Y}_\text{aff}$ for each positive value of $k$. This component is periodic with respect to the $\lambda_\text{Kac}$ component.
\end{itemize}

\subsubsection{Minimal degenerations for affine coweights}
\label{subsub:mindeg}

Let $\lambda = (\lambda_\text{Kac},n)$ be a dominant coweight in $\widehat{Y}_\text{aff}$.
As per the finite case, if $I$ is a connected proper subdiagram of the affine Dynkin diagram then we define $\hat{\alpha}_I^\vee$ to be the highest coroot in the corresponding (finite) root subsystem.
As coweights, we have $\hat\alpha_I^\vee=\xi_I^+-\xi_I^-$.
We have:
$$\xi_I^- = \left\{ \begin{array}{cc} 2\xi_1                    & \mbox{if $I=E_8$}, \\
             \sum_{\text{edge}\, j-I} \xi_j & \mbox{otherwise},\end{array}\right.$$
and $\xi_I^+=\xi_2$ if $I=E_8$ and follows exactly the same pattern as $\varpi_I^+$ (see section \ref{subsub:corootsfinite}) otherwise.
We are interested in the adjacent pairs of dominant coweights for the \emph{affine} Lie algebra of type $E_8$.
This combinatorial problem has been solved for arbitrary affine Kac--Moody Lie algebras by Roy \cite{roy}.
The results for type $E_8$ (indeed for any simply-laced case) are:

\begin{theorem}[Roy]\label{Roytheorem}
	For ${\mathfrak g}$ of type $E_8$, let $\lambda=(\lambda_{{\rm Kac}},n)=(k,\overline\lambda,n)$  be a dominant coweight for $\mathfrak{g}_{\rm aff}$.
	If $k=1$ then $\lambda=\xi_1+nc$, and the only minimal degeneration $\lambda<\mu$ is $\mu=\lambda+c$.
	For $k>1$, the minimal degenerations are:
	\begin{enumerate}[label=\roman*\emph{)}]
		\item
		      $\lambda<\lambda+ \hat{\alpha}_I^\vee$ where $I$ is a connected component of the subdiagram of zeros of $\lambda_{\rm Kac}$;
		\item
		      $\lambda<\lambda+\alpha_i^\vee$ where $\lambda(\alpha_j)>0$ for all $j$ connected to $i$ by an edge.
	\end{enumerate}
\end{theorem}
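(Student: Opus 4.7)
The plan is to adapt Stembridge's combinatorial analysis \cite{STEMBRIDGE1998340} from the finite to the affine setting, treating $k = 1$ as a degenerate exception. First I would handle $k = 1$ by direct enumeration: the level formula forces $n_1 = 1$ and all other $n_i = 0$, so every level-$1$ dominant coweight has the form $\lambda = \xi_1 + n c$. Case \emph{ii}) is then unavailable because every node $i \geq 2$ has a neighbor $j$ with $\lambda(\alpha_j) = 0$, and case \emph{i}) applied to the only nontrivial zero-component $I = \Delta$ fails since $\hat{\alpha}_\Delta^\vee(\alpha_1) = -2$ gives $(\lambda + \hat{\alpha}_\Delta^\vee)(\alpha_1) = -1$. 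Since $c = \alpha_1^\vee + \hat{\alpha}_\Delta^\vee$ is central it automatically preserves dominance, and a short inspection of its simple-coroot decomposition confirms that no smaller positive-coroot combination produces a dominant coweight above $\lambda$, giving the unique minimal degeneration $\mu = \lambda + c$.

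For $k \geq 2$ I would first verify that the two stated types really are minimal degenerations. For case \emph{i}), writing $\hat{\alpha}_I^\vee = \varpi_I^+ - \varpi_I^-$ as in section \ref{subsub:corootsfinite}, adding this to $\lambda$ raises entries on $I$ harmlessly (since $\lambda$ vanishes on $I$), drops by $1$ each entry at an outside neighbor of $I$ safely (those nodes lie outside the zero-component and so satisfy $\lambda(\alpha_j) \geq 1$), and fixes all other entries. For case \emph{ii}), the pairing $(\lambda + \alpha_i^\vee)(\alpha_j) = \lambda(\alpha_j) + C_{ij}$ combined with the hypothesis $\lambda(\alpha_j) > 0$ at neighbors of $i$ yields dominance directly. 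Minimality in case \emph{ii}) is immediate from height $1$; in case \emph{i}), any proper positive sub-sum of $\hat{\alpha}_I^\vee$ fails to be dominant within the $I$-subsystem and so cannot lift the vanishing restriction $\lambda|_I = 0$ to a dominant coweight.

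Conversely, let $(\lambda, \mu)$ be a minimal degeneration with $\sigma := \mu - \lambda = \sum_i c_i \alpha_i^\vee$ and support $S = \{i : c_i > 0\}$. I would argue in two reduction steps. First, $S$ must be connected: a disconnected split $S = S_1 \sqcup S_2$ with no edges between induces a split $\sigma = \sigma_1 + \sigma_2$ under which the dominance inequalities at $\mu$ decouple, producing an intermediate dominant coweight $\lambda + \sigma_1$ and contradicting minimality. Second, $S$ must lie inside the zero-subdiagram of $\lambda_{\rm Kac}$: if some $i_0 \in S$ had $\lambda(\alpha_{i_0}) > 0$, a Stembridge-style peeling argument would exploit this slack to remove a simple coroot from $\sigma$ while preserving dominance, again contradicting minimality. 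With $S$ contained in a single connected zero-component $I$, the problem reduces to classifying minimal dominant coroot increments within the finite root system of $I$, which by Stembridge's finite-type theorem are exactly $\alpha_i^\vee$ and $\hat{\alpha}_I^\vee$, matching cases \emph{ii}) and \emph{i}).

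The hardest step is the reduction showing $S$ lies in the zero-subdiagram of $\lambda_{\rm Kac}$. Naively peeling a coroot $\alpha_{i_0}^\vee$ off $\sigma$ drops the pairing at $i_0$ by $2$ while raising pairings at neighbors by $1$, so one must carefully choose which coroot to remove and then verify dominance at every node of the remaining support. Stembridge handles this in finite type via a tree-structure induction tracking how excess pairings propagate outward, using the specific integral shape of ADE highest coroots; the argument transfers to the affine $E_8$ setting with no essential change for $k \geq 2$ because the affine node $1$ obeys the same simply-laced local combinatorics as the finite nodes. The degeneracy at $k = 1$, where $n_1 = 1$ is the minimum value compatible with dominance, is precisely what blocks every candidate of types \emph{i}) or \emph{ii}) and forces the central-element degeneration $\mu = \lambda + c$ as the unique minimal option.
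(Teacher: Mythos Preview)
The paper does not supply its own proof of this theorem: it is quoted as a result of Roy \cite{roy}, building on Stembridge's finite-type classification \cite{STEMBRIDGE1998340}, and used throughout as an input. So there is no proof in the paper to compare your attempt against.

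Your sketch has the right architecture, but the converse argument contains a logical error. Your Step~2 asserts unconditionally that the support $S$ of $\sigma=\mu-\lambda$ lies in the zero-subdiagram of $\lambda_{\rm Kac}$, via ``peeling'' at any $i_0\in S$ with $\lambda(\alpha_{i_0})>0$. This is false precisely in case \emph{ii}): when $S=\{i\}$ and $\sigma=\alpha_i^\vee$, the value $\lambda(\alpha_i)$ may be positive (the theorem constrains only the neighbors of $i$), and peeling $\alpha_i^\vee$ off $\sigma$ leaves $\sigma'=0$, which produces no intermediate coweight and hence no contradiction. Your Step~3 compounds the confusion by claiming that once $S$ sits inside a zero-component $I$, Stembridge's finite-type result on $I$ yields \emph{both} $\alpha_i^\vee$ and $\hat\alpha_I^\vee$; but since $\lambda|_I=0$, a type-\emph{ii}) degeneration internal to $I$ would require $i$ to have no neighbors in $I$, i.e.\ $|I|=1$, where the two types coincide. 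The correct dichotomy is to split off $|S|=1$ first (which gives case \emph{ii}) directly, with no zero-subdiagram hypothesis), and only then run the peeling and zero-component reduction for $|S|\geq 2$ to obtain case \emph{i}).
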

\noindent Directly generalizing the cases in section \ref{subsub:orbitssubsec}, we will refer to these as type \emph{i}) and type \emph{ii}) minimal degenerations.
As an example, consider both types in the framework of the previous subsection.
\begin{enumerate}[label=\emph{\roman*})]
	\item
	      If $I$ does not contain the affine node, then $\mu = (k,\overline\lambda+ \hat{\alpha}_I^\vee, n)$, so this can be identified with a degeneration of coweights for the finite root system of type $E_8$.  If $I$ does contain the affine node then $c-\hat{\alpha}_I^\vee$ is a positive coroot and $\mu= (k, \overline\lambda + c − \hat{\alpha}_I^\vee, n + 1)$.
	\item
	      If $i$ is not the affine node then $\mu=(k,\overline\lambda+\alpha_i^\vee,n)$. If $i$ is the affine node then $\mu=(k,\overline\lambda−\hat{\alpha}^\vee,n+1)$.
\end{enumerate}

\subsubsection[Relationship between $N$ and $n$]{Relationship between \texorpdfstring{$\bm{N}$}{N} and \texorpdfstring{$\bm{n}$}{n}}
\label{subsub:relation}

The dominant coweights are the triples $(k,\overline\lambda,n)$, with $\overline\lambda$ a dominant coweight on the finite root system satisfying $\overline\lambda(\hat\alpha)\leq k$.
The subset of dominant coweights with fixed values of $k$ and $n$ has a unique minimal element ($\overline\lambda = 0$), hence is connected.
For instance, for $k=4$ this leads to a Hasse diagram such as \cite[Fig. 12]{Fazzi:2022hal} (which is also identical to the Higgs branch RG flow hierarchy of \cite[Fig. 1]{Frey:2018vpw} obtained via 't Hooft anomaly matching).

In \cite[Fig. 11 \& Figs. 13–25]{Fazzi:2022hal}, different hierarchies were considered: in that case the number of full and fractional instantons was kept constant.
(Note that $N_\text{\cite{Fazzi:2022hal}} = N_6 \neq N_\text{here}$, i.e. the number of full instantons, cf. footnote \ref{foot:N6} below.)
Given the full periodic poset of coweights of (\emph{fixed}) level $k$, the value of $n$ for a given coweight $\lambda$ is equal to the maximum $n$ such that $(k,0,n)\leq \lambda$.
This stratification (by value of $n$) is therefore very easy to understand in terms of the combinatorics of coweights.
On the other hand, we are interested in Coulomb branches ${\mathcal M}_\text{C}(\mu,\lambda)$ with $\overline\mu=0$, so, given any $\lambda$, we want to understand the {\it minimal} $n$ such that $(k,0,n)\geq \lambda$.
Note that the hierarchies in \cite{Fazzi:2022hal} each contained a unique maximal node, with Kac diagram $[1^k]$, i.e. with $\overline\mu=0$.
(These were however depicted with $\mu$ at the bottom of the Hasse diagram. With this convention, the direction of flow to the IR -- i.e. smaller $a$ -- is downwards; note that the IR strata, i.e. Higgs branches, are of \emph{larger} dimension, as already noted in the introduction.)

In all of the following, integers $i'$ are interpreted in formulas ``without primes'' when required; the notation $\lceil x\rceil$, resp. $\lfloor x\rfloor$ means the smallest integer greater than or equal to $x$, resp. the greatest integer less than or equal to $x$.

\begin{lemma}\label{sigmalem}
	Let $\lambda=(k,\overline\lambda,n)$ be a dominant coweight for ${\mathfrak g}_{\rm aff}$.
	Let $n_i$ be the coefficients of $\lambda_{\rm Kac}$, let $\overline\lambda=\sum_{i\neq 1} m_i\alpha_i^\vee$, and let
	\begin{equation}
		N_\rho = \sum_{i=1}^6 n_i+p \ , \quad p=\min\left(\left\lfloor\frac{n_{3'}+n_{4'}}{2}\right\rfloor,\left\lfloor\frac{n_{2'}+n_{3'}+2n_{4'}}{3}\right\rfloor\right)
	\end{equation}
	as in \eqref{eq:Nmu}.
	Then:
	\begin{enumerate}[label=\alph*\emph{)}]
		\item
		      The minimum value of $m$ such that $(k,0,m)\geq (k,\overline\lambda,n)$ is $n+\max_{i\neq 1} \lceil \frac{m_i}{i}\rceil$;
		\item
		      in terms of the Kac coefficients, this value is $n+k-N_\rho$.
	\end{enumerate}
\end{lemma}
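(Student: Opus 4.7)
The plan splits into two parts: a) follows directly from unpacking the partial order in the affine simple coroot basis, and b) reduces to a purely finite-type identity controlled by the inverse Cartan matrix of $E_8$.

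For part a), expand the difference $(k,0,m)-(k,\overline\lambda,n) = (m-n)c - \overline\lambda$ using the decomposition $c = \alpha_1^\vee + \hat\alpha^\vee = \sum_{i\in\Delta_{\rm aff}} a_i\,\alpha_i^\vee$ (with $a_1=1$) recalled earlier, together with $\overline\lambda = \sum_{i\neq 1} m_i\,\alpha_i^\vee$:
\begin{equation*}
(k,0,m)-(k,\overline\lambda,n) \;=\; (m-n)\,\alpha_1^\vee + \sum_{i\neq 1}\bigl((m-n)\,a_i - m_i\bigr)\,\alpha_i^\vee.
\end{equation*}
By the partial order of section \ref{subsub:partial}, the inequality $(k,0,m)\geq(k,\overline\lambda,n)$ holds iff every coefficient on the right-hand side is a non-negative integer. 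Since $m-n\in\mathbb{Z}$ and $m_i\geq 0$ by dominance of $\overline\lambda$, the smallest admissible value is $m = n + \max_{i\neq 1}\lceil m_i/a_i\rceil$, proving a).

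For part b), the remaining task is the finite-type identity $\max_{i\neq 1}\lceil m_i/a_i\rceil = k - N_\rho$. The plan is first to show that the maximum on the left is attained at $i=2'$ or $i=3'$, reducing it to $\max(\lceil m_{2'}/2\rceil,\lceil m_{3'}/3\rceil)$. This rests on the following entry-wise inequality for the inverse Cartan matrix $A = C^{-1}$ of $E_8$: for every $i \in\{2,3,4,5,6,4',2'\}$ and every $j\neq 1$ one has $A^{ij}/a_i \leq A^{2',j}/2$, while for $i=3'$ the identity $A^{3',j}/a_{3'} = A^{3',j}/3$ is trivial. Both can be verified column-by-column using the eight tabulated fundamental coweights of $E_8$, noting that $\varpi_2^\vee = \hat\alpha^\vee$ forces $A^{i,2} = a_i$. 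Summing against the non-negative Kac coefficients $n_j$ yields $m_i/a_i \leq \max(m_{2'}/2, m_{3'}/3)$ uniformly in $i$, so by monotonicity of ceiling, $\max_i\lceil m_i/a_i\rceil = \max(\lceil m_{2'}/2\rceil, \lceil m_{3'}/3\rceil)$.

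The second step is arithmetic: reading the rows $A^{\bullet,2'}=(2,4,6,8,10,7,5,4)$ and $A^{\bullet,3'}=(3,6,9,12,15,10,8,5)$ (in the ordering $j=2,3,4,5,6,4',3',2'$), setting $Q = n_2+2n_3+3n_4+4n_5+5n_6$, and separating integer from fractional parts, one obtains
\begin{align*}
\lceil m_{2'}/2\rceil &= (k - N_\rho) \;-\; \bigl(\lfloor(n_{3'}+n_{4'})/2\rfloor - p\bigr),\\
\lceil m_{3'}/3\rceil &= (k - N_\rho) \;-\; \bigl(\lfloor(n_{2'}+n_{3'}+2n_{4'})/3\rfloor - p\bigr).
\end{align*}
Since $p$ is by definition the minimum of the two floors appearing on the right, exactly one bracket vanishes (and the other is non-negative), so the maximum of the two ceilings equals $k - N_\rho$. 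The main obstacle is establishing the entry-wise inequality above: it is a non-trivial feature of the $E_8$ Cartan data that the long-leg end node $2'$ and the branch node $3'$ together dominate all the ratios $m_i/a_i$. Once this is in hand, the rest of the proof is bookkeeping.
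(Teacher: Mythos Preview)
Your proof is correct and follows essentially the same approach as the paper. Both arguments treat part a) by unpacking the partial order in the affine simple coroot basis, and for part b) both identify $2'$ and $3'$ as the nodes where $\lceil m_i/a_i\rceil$ is maximized (the paper by explicitly computing the residual $(k-\sum_{i=1}^6 n_i)c-\overline\lambda$ and comparing the primed coefficients, you via the equivalent entry-wise inverse-Cartan inequality $A^{ij}/a_i\leq A^{2',j}/2$ for $i\neq 3'$), then compute $\lceil m_{2'}/2\rceil$ and $\lceil m_{3'}/3\rceil$ directly to obtain $k-N_\rho$.
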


\begin{proof}
	We may clearly assume $n=0$.
	By the discussion in section \ref{subsub:corootscoweightssubsec}, we have $\overline\lambda=\sum_{i\neq 1} n_i\varpi_i^\vee$.
	Hence we have to find the smallest value of $m$ such that $mc\geq \sum_{i\neq 1} n_i \varpi_i^\vee$.
	The equality $m=\max_{i\neq 1} \lceil\frac{m_i}{i}\rceil$ is clear, so we only have to prove (b).
	It is easily observed from the inverse Cartan matrix that $$ic-\varpi_i^\vee=\left\{ \begin{array}{ll} c+\alpha_{i-1}^\vee+2\alpha_{i-2}^\vee+\ldots+(i-1)\alpha_1^\vee            & \mbox{if $i=2,\ldots ,6$,} \\
             \alpha_{3'}^\vee+\alpha_{4'}^\vee+2\sum_{i=1}^6 \alpha_i^\vee                    & \mbox{if $i=2'$,}          \\
             \alpha_{2'}^\vee+\alpha_{3'}^\vee+2\alpha_{4'}^\vee+3\sum_{i=1}^6\alpha_i^\vee   & \mbox{if $i=3'$,}          \\
             \alpha_{2'}^\vee+2\alpha_{3'}^\vee+2\alpha_{4'}^\vee+4\sum_{i=1}^6 \alpha_i^\vee & \mbox{if $i=4'$.}
		\end{array}\right.$$
	In particular, this implies that $m\leq 2n_{2'}+3n_{3'}+4n_{4'}+\sum_{i=1}^6 (i-1)n_i = k-\sum_{i=1}^6 n_i$.
	Then:
	\begin{align}
		\left(k-\sum_{i=1}^ 6 n_i\right)c-\overline\lambda = & \  (n_{3'}+n_{4'})\alpha_{2'}^\vee + (n_{2'}+n_{3'}+2n_{4'})\alpha_{3'}^\vee + (n_{2'}+2n_{3'}+2n_{4'})\alpha_{4'}^\vee\ + \nonumber \\ &+ \sum_{i=1}^6 \left(2n_{2'}+3n_{3'}+4n_{4'}+n_{i+1}+2n_{i+2}+\ldots + (6-i)n_6\right)\alpha_i^\vee
	\end{align}
	Note that $\frac{n_{3'}+n_{4'}}{2}\leq \frac{n_{2'}+2n_{3'}+2n_{4'}}{4}$ and $\frac{n_{2'}+n_{3'}+2n_{4'}}{3}\leq \frac{2n_{2'}+3n_{3'}+4n_{4'}}{6}$.
	It follows that the smallest value of $m$ such that $mc\geq\overline\lambda$ is $k-\sum_{i=1}^6 n_i-\min (\lfloor \frac{n_{3'}+n_{4'}}{2}\rfloor, \lfloor \frac{n_{2'}+n_{3'}+2n_{4'}}{3}\rfloor)$, which equals $k-N_\rho$, as required.
\end{proof}

\begin{example}
	\begin{enumerate}[label=\emph{\alph*})]
		\item
		      Let $\lambda = (\lambda_\text{Kac}, 0)$ where $\lambda_\text{Kac}$ is the Kac diagram $[3',2'^2]$.
		      This can be thought of as $\varpi_{3'}^\vee +2\varpi_{2'}^\vee = \xi_{3'}+2\xi_{2'}-7\xi_1$.  Then $k = 7$ and $N_\rho=\sum_{i=1}^6 n_i=p=0$.
		      Concretely:
		      \begin{equation}
			      \overline\lambda =\varpi_3^\vee +2\varpi^\vee_2 =7\alpha^\vee_2 +14\alpha^\vee_3 +21\alpha^\vee_4 +28\alpha^\vee_5 +35\alpha^\vee_6 +24\alpha^\vee_{4'} +18\alpha^\vee_{3'} +13\alpha^\vee_{2'}\ ,
		      \end{equation}
		      and this is the unique expression in $\zz\Phi^\vee$ for $\overline\lambda$. The smallest multiple $mc$ satisfying $mc\geq \overline\lambda$ is $m=7$, obtained from $\lceil \frac{13}{2}\rceil = 7=k-N_\rho$.
		\item
		      Consider instead $\mu=([4',3'],0)$.
		      Again $k=7$ and $\sum_{i=1}^6 n_i=0$, but here $N_\rho=1$.
		      We have
		      \begin{equation}
			      \overline\mu = \varpi_{4'}^\vee+\varpi_{3'}^\vee = 7\alpha_2^\vee+14\alpha_3^\vee+21\alpha_4^\vee+28\alpha_5^\vee+35\alpha_6^\vee+24\alpha_{4'}^\vee+18\alpha_{3'}^\vee+12\alpha_{2'}^\vee\ ,
		      \end{equation}
		      with only the coefficient of $2'$ changing. (We see here that $\overline\mu<\overline\lambda=\overline\mu+\alpha_{2'}^\vee$ is a minimal degeneration.)
		      In this case we can read off that $m=6$.
		\item
		      Finally, consider $\nu_{\rm Kac}=[3',2',1^2]$.
		      Then $k=7$ and $p=0$ as in (a), but now $N_\rho=2$.
		      We have:
		      \begin{equation}
			      \overline\nu=\varpi_{3'}^\vee+\varpi_{2'}^\vee = 5\alpha_2^\vee+10\alpha_3^\vee+15\alpha_4^\vee+20\alpha_5^\vee+25\alpha_6^\vee+17\alpha_{4'}^\vee+13\alpha_{3'}^\vee+9\alpha_{2'}^\vee\ ,
		      \end{equation}
		      and the value of $m$ is $5$ (seen from the values of $m_6$, $m_{4'}$, $m_{3'}$ and $m_{2'}$).
	\end{enumerate}
\end{example}
\noindent Now we note that each time the number $n+k-N_\rho$ increases by $1$, the number $N_\text{\cite{Fazzi:2022hal}}$ \emph{decreases} by 1.
Recalling also that $N_\text{\cite{Fazzi:2022hal}} = N_\text{here}+N_\rho$, we therefore set $N_\text{here} =-n$.\footnote{In the notation of \cite{Mekareeya:2017jgc}, we therefore have $n=-N_3=-N_{\text{here}}$, but $N_\text{\cite{Fazzi:2022hal}}=N_6$.
We remark however that the $a$ anomaly formula in \cite{Fazzi:2022hal} was expressed in terms of $N_3$.\label{foot:N6}}

We close this section with an explanation of the difference between the hierarchies of Higgs branch RG flows that have already appeared in \cite{Fazzi:2022hal} for any $k$ and in \cite{Frey:2018vpw,Giacomelli:2022drw} for a few chosen values of $k$, since they use different notions of $N$.
\begin{itemize}
	\item
	      In \cite{Fazzi:2022hal} we have kept the total number of instantons (full and fractional), i.e.  $P=N_\rho+N$ in our present notation, fixed (for ease of presentation, and in analogy with what was done in \cite{Mekareeya:2017jgc}).  Then, since in those flows the number $N_\rho$ of fractional instantons  increases along the Higgs branch RG flow,  to compensate this increment the number $N$ of full instantons has to decrease. So, we are performing a number of small $E_8$-instanton transitions down the hierarchy.  As a result,  $n$ \emph{increases} (or, more precisely, it does not decrease).

	      In that paper we explicitly verified that $\Delta a>0$ for all such flows.  The allowed transitions (compatibly with the $a$-theorem) between orbi-instantons are determined via 3d magnetic quiver subtraction. Such hierarchies can be understood as connected subdiagrams of the Hasse diagrams at fixed level $k$ that we will present below. They are obtained at fixed values of $n+k-N_\rho$, so by slicing by minimal coweight with Kac diagram $\mu=[1^k] \geq \lambda$.

	\item
	      In \cite{Frey:2018vpw}, which only analyzed the $k=4$ case, the number $N$ of full instantons (and consequently $n$) is held fixed (so there are no small $E_8$-instanton transitions involved), and the flow is between orbi-instantons from higher to lower number of fractional instantons, i.e.  $N_\rho$ decreases (or rather, does not increase) as well as the total number $P=N+N_\rho$, as a consequence.

	      Again,  in \cite{Fazzi:2022hal} we have verified that $\Delta a>0$ for this second choice. The allowed transitions are determined via a 't Hooft anomaly matching analysis.\footnote{Essentially, one subtracts from the 6d anomaly polynomial of the UV SCFT, the one defined by Kac diagram $\mu=[1^k]$,  the anomaly polynomial of all other IR SCFTs (defined by the other allowed $\rho_\infty$'s), and checks which subtractions are perfect squares, according to a criterion of \cite{Intriligator:2014eaa,Heckman:2015ola}.  All such subtractions represent allowed flows from UV to IR SCFTs.}

	\item
	      Finally, \cite{Giacomelli:2022drw} constructed RG flows for a few values of $k$, and both $N$ and $N_\rho$ are non-increasing along the flow. Again, this is done via 3d magnetic quiver subtraction.
\end{itemize}
See figure \ref{fig:flows} for a pictorial representation when $k=4$.  See figure \ref{fig:frey-rudelius} for the actual hierarchy of RG flows in the prescription of the first, resp. second, bullet here above.%
\begin{figure}[t]
	\centering
	\includegraphics[width=0.9\textwidth]{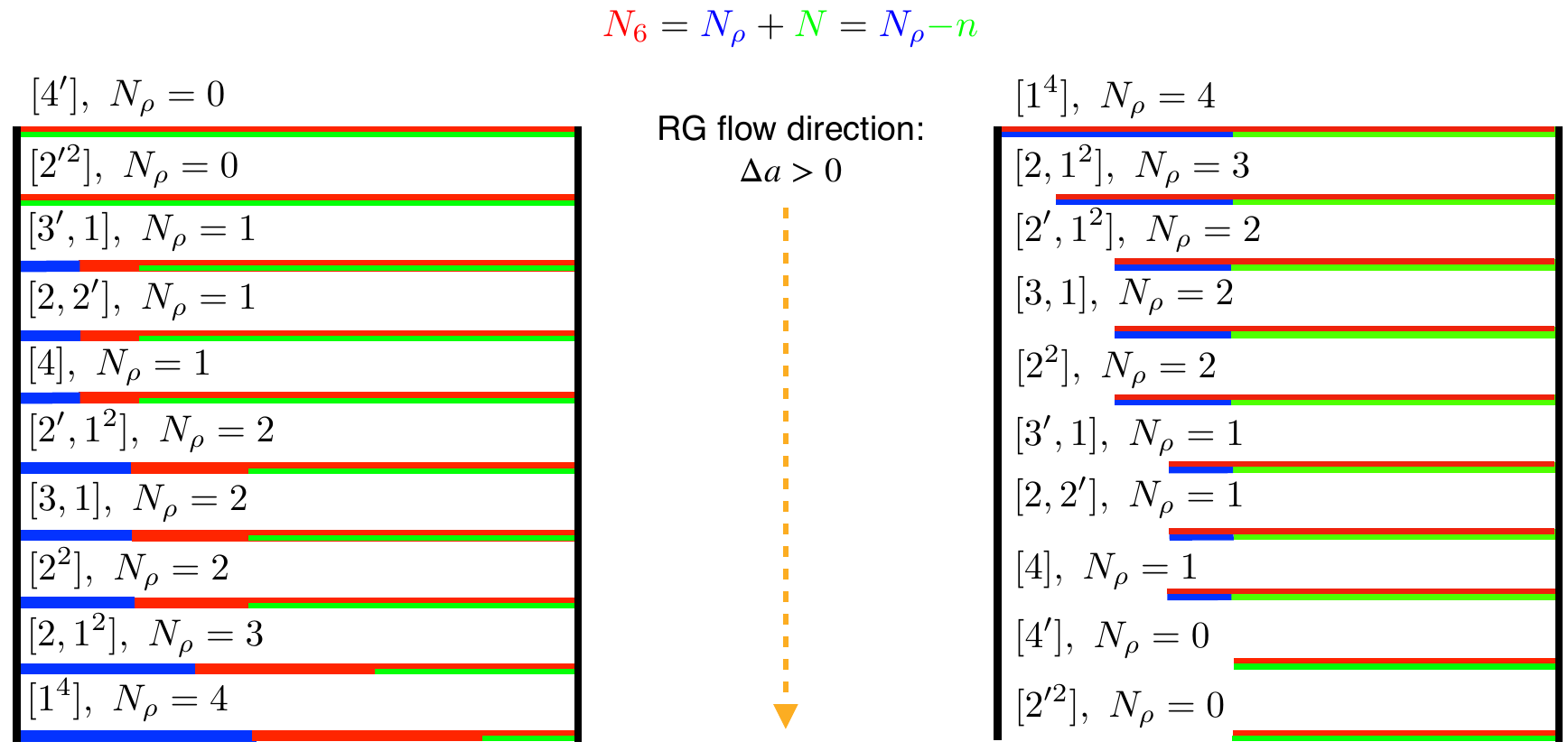}
	\caption{The hierarchy of Higgs branch RG flows for $k=4$ from \cite[Fig. 11(c)]{Fazzi:2022hal} (left) and \cite[Fig. 12]{Frey:2018vpw} (right), which are also shown in figure \ref{fig:frey-rudelius} (on the left and right respectively), and with red and blue nodes respectively in the right panel of \fref{fig:k34notvec}. (Notice that in the above diagrams we are disregarding branchings between flows.) In the first the value of $n$ increases (as $-n$ decreases), whereas it remains constant in the second.  The $a$ conformal anomaly decreases from top to bottom along the yellow arrow.}
	\label{fig:flows}
\end{figure}%
\begin{figure}[t]
	\centering
	\hskip -15pt
	\subfigure[$P=N+N_ρ$ fixed.]{
		\begin{tikzpicture}[node distance=45pt, every node/.style={scale=0.7}]
			\tikzstyle{arrow} = [thick,->,>=stealth]
			\node (n1) {$\overset{(N,N_\rho =0)}{[4']\leftrightarrow \f =\su{8}\oplus\su{2}}$};
			\node (n2) [below left=20pt and -10pt of n1] {$\overset{(N,N_\rho =0)}{[2'^2]\leftrightarrow \f =\so{16}}$};
			\node (n3) [below right=20pt and -20pt of n1] {$\overset{(N-1,N_\rho =1)}{[3',1]\leftrightarrow \f =\su{8}\oplus \uu{1}}$};
			\node (n4) [below=60pt of n1] {$\overset{(N-1,N_\rho =1)}{[2,2'] \leftrightarrow \f =\so{12}\oplus\su{2}\oplus\uu{1}}$};
			\node (n5) [below left=20pt and -20pt of n4] {$\overset{(N-1,N_\rho =1)}{[4]\leftrightarrow \f =\so{10}\oplus \su{4}}$};
			\node (n6) [below right=20pt and -20pt of n4] {$\overset{(N-2,N_\rho =2)}{[2^\prime, 1²]\leftrightarrow \f =\so{14}}$};
			\node (n7) [below= 60pt of n4] {$\overset{(N-2,N_\rho =2)}{[3,1]\leftrightarrow \f =E_6 \oplus \su{2} \oplus \uu{1}}$};
			\node (n8) [below of=n7] {$\overset{(N-2,N_\rho =2)}{[2²]\leftrightarrow \f =E_7 \oplus \su{2}}$};
			\node (n9) [below of=n8] {$\overset{(N-3,N_\rho =3)}{[2,1²]\leftrightarrow \f = E_7 \oplus \uu{1}}$};
			\node (n10) [below of=n9] {$\overset{(N-4,N_\rho =4)}{[1^4] \leftrightarrow \f = E_8}$};

			\draw [arrow] (n1)--(n2) node [midway, above left] {} node[midway, below right] {$\mathfrak{a}_1$};

			\draw [arrow] (n1)--(n3) node [midway, above right] {} node[midway, below left] {$\mathfrak{a}_7$};

			\draw [arrow] (n2)--(n4) node [midway, left] {} node[midway, above right] {$\mathfrak{d}_8$};

			\draw [arrow] (n3)--(n4) node [midway, right] {} node[midway, above left] {$\mathfrak{a}_7$};

			\draw [arrow] (n4)--(n5) node [midway, above left] {} node[midway, below right] {$\mathfrak{d}_6$};

			\draw [arrow] (n4)--(n6) node [midway, above right=0pt and 1pt] {} node[midway, below left] {$A_1$};

			\draw [arrow] (n5)--(n7) node [midway, left] {} node[midway, above right] {$\mathfrak{a}_3$};

			\draw [arrow] (n6)--(n7) node [midway, right] {} node[midway, above left] {$\mathfrak{d}_7$};

			\draw [arrow] (n7)--(n8) node [midway, left] {} node[midway, right] {$\mathfrak{a}_1$};

			\draw [arrow] (n8)--(n9) node [midway, left] {} node[midway, right] {$A_1$};

			\draw [arrow] (n9)--(n10) node [midway, left] {} node[midway, right] {$A_3$};
		\end{tikzpicture}
	}
	\subfigure[$N$ fixed.]{
		\begin{tikzpicture}[node distance=45pt,  every node/.style={scale=0.7}]
			\tikzstyle{arrow} = [thick,->,>=stealth]
			\node (n1) {$\overset{(N,N_\rho =4)}{[1^4] \leftrightarrow \f = E_8}$};
			\node (n2) [below of=n1] {$\overset{(N,N_\rho =3)}{[2,1²]\leftrightarrow \f = E_7 \oplus \uu{1}}$};
			\node (n3) [below of=n2] {$\overset{(N,N_\rho =2)}{[2^\prime, 1²]\leftrightarrow \f =\so{14}}$};
			\node (n4) [below of=n3] {$\overset{(N,N_\rho =2)}{[3,1]\leftrightarrow \f =E_6 \oplus \su{2} \oplus \uu{1}}$};
			\node (n5) [below left=20pt and -20pt of n4] {$\overset{(N,N_\rho =2)}{[2²]\leftrightarrow \f =E_7 \oplus \su{2}}$};
			\node (n6) [below right=20pt and -20pt of n4] {$\overset{(N,N_\rho =1)}{[3',1]\leftrightarrow \f =\su{8}\oplus \uu{1}}$};
			\node (n7) [below= 60pt of n4] {$\overset{(N,N_\rho =1)}{[2,2'] \leftrightarrow \f =\so{12}\oplus\su{2}\oplus\uu{1}}$};
			\node (n8) [below of=n7] {$\overset{(N,N_\rho =1)}{[4]\leftrightarrow \f =\so{10}\oplus \su{4}}$};
			\node (n9) [below of=n8] {$\overset{(N,N_\rho =0)}{[4']\leftrightarrow \f =\su{8}\oplus\su{2}}$};
			\node (n10) [below of=n9] {$\overset{(N,N_\rho =0)}{[2'^2]\leftrightarrow \f =\so{16}}$};

			\draw [arrow] (n1)--(n2) node [midway, left] {} node[midway, right] {$\mathfrak{e}_8$};

			\draw [arrow] (n2)--(n3) node [midway, left] {} node[midway, right] {$\mathfrak{e}_7$};

			\draw [arrow] (n3)--(n4) node [midway, left] {} node[midway, right] {$\mathfrak{d}_7$};

			\draw [arrow] (n4)--(n5) node [midway, above left] {} node[midway, below right] {$\mathfrak{a}_1$};

			\draw [arrow] (n4)--(n6) node [midway, above right] {} node[midway, below left] {$\mathfrak{e}_6$};

			\draw [arrow] (n5)--(n7) node [midway, left] {} node[midway, above right] {$\mathfrak{e}_7$};

			\draw [arrow] (n6)--(n7) node [midway, right] {} node[midway, above left] {$\mathfrak{a}_7$};

			\draw [arrow] (n7)--(n8) node [midway, left] {} node[midway, right] {$\mathfrak{d}_6$};

			\draw [arrow] (n8)--(n9) node [midway, left] {} node[midway, right] {$\mathfrak{d}_5$};

			\draw [arrow] (n9)--(n10) node [midway, left] {} node[midway, right] {$\mathfrak{a}_1$};
		\end{tikzpicture}
	}
	\caption{Hierarchies of left flavor Higgsings for $k=4$: on the left $P=N+N_\rho$ is held fixed (both $N_\rho$ and $n$ do not decrease), on the right only $N$ is fixed (both $N_\rho$ and $n$ do not increase).
	}
	\label{fig:frey-rudelius}
\end{figure}%

In sum, because of the periodic structure of the double affine Grassmannian's Hasse diagram, we can decrease $N_\rho$ and $N$ simultaneously: within a fixed-$N$ slicing of it (i.e. subdiagram) only $N_\rho$ is reduced, but we can also ``transition'' from an orbi-instanton at rank $N$ (i.e. with $N$ M5-branes) to another at $N-1$ (in general defined by a different Kac diagram) by performing a small $E_8$-instanton transition.
In the next two sections we will construct such Hasse diagrams and prove that $\Delta a>0$ for both types of transition.

\subsubsection{Geometry of the double affine Grassmannian}
\label{subsub:newdeg}

As remarked earlier, the symplectic leaves in a slice ${\mathcal M}_\text{C}(\mu,\lambda)$ in the double affine Grassmannian (when $k>1$) are expected to be indexed by two pieces of data,  see again \eqref{eq:strata}:
\begin{itemize}
	\item
	      coweights $\nu$ with $\lambda\leq \nu\leq \mu$. Each of them corresponds to a different holonomy at infinity $\rho_\infty$ defined by $\nu_\text{Kac}$ with $\nu=(\nu_\text{Kac},n)$, i.e.  identifies the orbi-instanton triple $(N,k,\rho_\infty)$: we have $N$ full M5's at the singularity within the M9 (fractionating into $N_\rho$ bits).
	\item
	      For each such $\nu$, and for each $M$ such that $\nu\leq\mu-Mc$, a partition $[m_i]$ of $M$. Namely, we remove a total of $M$ instantons (full M5's \emph{or} fractions of M9) from the singularity while still keeping them on the wall (see again figure \ref{fig:decoupling}). In other words, we have a decoupled product of a lower-rank orbi-instanton defined by the triple $(N-M,k,\rho_\infty)$ and a bunch of rank-$m_i$ E-strings such that $\sum_{i=1}^m m_i = M$.
\end{itemize}
Let us denote by ${\mathcal M}^{[m_i]}_\text{C}(\nu,\lambda)$ the symplectic leaf corresponding to $\nu$ and the partition $[m_i]$.
Note that we count the case $M=0$, i.e. the trivial partition $\emptyset$, and we have ${\mathcal M}_\text{C}^\emptyset(\nu,\lambda)={\mathcal M}_\text{C}^\text{smooth}(\nu,\lambda)$; in general we have ${\mathcal M}_\text{C}^{[m_i]}(\nu,\lambda)\cong {\mathcal M}_\text{C}^\text{smooth}(\nu,\lambda)\times {\rm Sym}_{[m_i]}^M({\mathbb C}^2/{\mathbb Z}_k \! \setminus\! \{ 0\})$, as noted in \eqref{eq:stratMC}.
The partial order on symplectic leaves restricts to the dominance order on coweights outlined in section \ref{subsub:partial}.
The dominance order also replicates some of the partial order for nontrivial partitions: if $\lambda\leq \xi\leq \nu\leq\mu-Mc$ then
\begin{equation}
	\overline{{\mathcal M}_\text{C}^{[m_i]}(\nu,\lambda)}\supset {\mathcal M}_\text{C}^{[m_i]}(\xi,\lambda)\ .
\end{equation}
There are three further transitions generating the full Hasse diagram of symplectic leaves in ${\mathcal M}_\text{C}(\mu,\lambda)$, as we now explain.

\begin{enumerate}[label=\emph{\roman*})]
	\item[a)]
	      Firstly, let us subdivide the partition $[m_i]$ into two subpartitions $[m_i^{(1)}]$, $[m_j^{(2)}]$ of $M_1$, $M_2$ respectively (with $M_1+M_2=M$).
	      Clearly, $\nu\leq\mu-Mc$ implies $(\nu+M_2 c)\leq\mu-M_1 c$, so there is a copy of ${\rm Sym}^{M_1}({\mathbb C}^2/{\mathbb Z}_k)$ lying between the (unadorned) strata with coweights $\nu+M_2 c$ and $\mu$.
	      In particular, this gives rise to the symplectic leaf ${\mathcal M}_\text{C}^{[m_i^{(1)}]}(\nu+M_2 c,\lambda)$, the closure of which should contain ${\mathcal M}_\text{C}^{[m_i]}(\nu,\lambda)$.
	      This aspect of the partial order was considered in depth in \cite{Bourget:2022ehw,Bourget:2022tmw} (see e.g. \cite[Fig. 5(a)]{Bourget:2022tmw}).
	\item[b)]
	      On the other hand, there is a copy of ${\rm Sym}^{M_2}({\mathbb C}^2/{\mathbb Z}_k)$ lying between the strata with coweights $\nu$ and $\mu-M_1 c$ and hence also between $\nu$ and $\mu$.
	      By the above, ${\rm Sym}^M({\mathbb C}^2/{\mathbb Z}_k)$ contains a copy of ${\rm Sym}^r({\mathbb C}^2/{\mathbb Z}_k)$ for all $r\leq M$, and the inclusions $\{ 0\}\subset {\rm Sym}^1({\mathbb C}^2/{\mathbb Z}_k)\subset {\rm Sym}^2({\mathbb C}^2/{\mathbb Z}_k)\subset\ldots$ are the obvious ones, where the extra component at each step is $0$.
	      The symplectic leaf ${\mathcal M}_\text{C}^{[m_i^{(1)}]}(\nu,\lambda)$ is contained in the closure of ${\mathcal M}_\text{C}^{[m_i]}(\nu,\lambda)$.
	\item[c)]
	      Finally, if we fix the coweight $\nu$ and the integer $M$ such that $\nu\leq\mu-Mc$, then the strata are in one-to-one correspondence with the partitions of $M$.
	      Joining any of the parts of the partition $[m_i]$ together (always possible unless $[m_i]=[M]$), one obtains a coarser partition $[m'_i]$ of $M$; then the closure of ${\mathcal M}_\text{C}^{[m_i]}(\nu,\lambda)$ contains ${\mathcal M}_\text{C}^{[m'_i]}(\nu,\lambda)$.
\end{enumerate}

\noindent Considering the last two types of transition, and taking all triples $(\nu,M,[m'_i])$ where $\nu\leq\mu-Mc$ and $[m'_i]$ is a partition of $M$, we obtain a subhierarchy as in \cite[Fig. 5(b)]{Bourget:2022tmw}.
In fact, a transverse slice from ${\mathcal M}_\text{C}^\text{smooth}(\nu,\lambda)$ to $\overline{{\mathcal M}_\text{C}^{[1^M]}(\nu,\lambda)}$ is expected to be isomorphic to $\Sym^M({\mathbb C}^2/{\mathbb Z}_k)$.
This symplectic singularity is well understood (for example, it admits a symplectic resolution).

The above considerations lead us to the classification of minimal degenerations of strata (i.e. edges in the Hasse diagram), according to four basic types.
We can describe the singularity (of a transverse slice) in each case.
Recalling that the slice from ${\mathcal M}_\text{C}^{[m_i]}(\nu,\lambda)$ is independent of $\lambda$, it follows that the singularity associated to each minimal degeneration is also independent of the choice of $\lambda$.  (In the following, $[m_i]$ is a partition of $M$, with the parts taken in any order.)
\begin{enumerate}[label=\emph{\roman*})]
	\item \otherlabel{bullet:i325}{\emph{i)}}
	      \azure{Changing boundary condition}: degenerations $\overline{{\mathcal M}_\text{C}^{[m_i]}(\nu,\lambda)}\supset {\mathcal M}_\text{C}^{[m_i]}(\xi,\lambda)$, where $\nu > \xi$ is a minimal degeneration of coweights and $\nu\leq\mu-Mc$.
	      The partition $[m_i]$ has no impact on the singularity: if $\nu=\xi+\hat\alpha_I^\vee$ where $I$ is a connected subdiagram of the affine Dynkin diagram, then we obtain a minimal singularity of type $I$; if $\nu=\xi+\alpha_i^\vee$ then we obtain the surface singularity ${\mathbb C}^2/{\mathbb Z}_{n_i+2}$.  Physically, we are simply changing the holonomy at infinity $\rho_\infty$ on the M9-wall from $\xi_\text{Kac}$ to $\nu_\text{Kac}$, without modifying the number of decoupled E-strings.
	\item \otherlabel{bullet:ii325}{\emph{ii)}}
\red{Performing small $E_8$-instanton transitions}: degenerations $\overline{{\mathcal M}_\text{C}^{[m_i,1^{l-1}]}(\nu,\lambda)}\supset {\mathcal M}_\text{C}^{[m_i,1^l]}(\nu-c,\lambda)$, where $[m_i,1^{l-1}]$ is a partition of $M$ (all $m_i>1$) and $\nu\leq\mu-Mc$.
	      The singularity is a union of $l$ copies ({\it branches}) of the minimal singularity ${\mathfrak e}_8$, where $l$ is the multiplicity of $1$ in the partition $[m_i,1^l]$ (see e.g. \cite[Fig. 5(a)]{Bourget:2022tmw}).
	      (Note the presence of branching, which does not occur in type \ref{bullet:i325} above.) Physically, we are performing a small $E_8$-instanton transition, i.e. we are dissolving one out of $l$ separated M5's into flux.
	\item \otherlabel{bullet:iii325}{\emph{iii)}}
	     	     \orange{Separating $m_i$ additional M5's}: degenerations $\overline{{\mathcal M}_\text{C}^{[m_i]}(\nu,\lambda)}\supset {\mathcal M}_\text{C}^{[\ldots ,m_{i-1},m_{i+1},\ldots]}(\nu,\lambda)$ (including as a special case $\overline{{\mathcal M}_\text{C}^{[M]}(\nu,\lambda)}\supset {\mathcal M}_\text{C}^\text{smooth}(\nu,\lambda)$).
	      The singularity in all such cases is ${\mathbb C}^2/{\mathbb Z}_k$ (where $k$ is the level of $\lambda$).  Physically,  we are removing a stack of $m_i$ M5's from the singularity while still keeping them on the wall, i.e. we are adding a rank-$m_i$ E-string.
	\item \otherlabel{bullet:iv325}{\emph{iv)}}
	     \nnyellow{Splitting $m_i+m_{i+1}$ separated M5's}: degenerations $\overline{{\mathcal M}_\text{C}^{[m_i]}(\nu,\lambda)}\supset{\mathcal M}_\text{C}^{[\ldots ,m_i+m_{i+1},\ldots]}(\nu,\lambda)$ (where there exist at least two parts of the partition $[m_i]$, and the parts can be in any order).
	      Let $l$ be the multiplicity of $m_i+m_{i+1}$ in $[\ldots ,m_i+m_{i+1},\ldots]$.
	      Then the singularity associated to this degeneration has $l$ isomorphic branches: these are given by $A_1={\mathbb C}^2/{\mathbb Z}_2$ if $m_i=m_{i+1}$, and by the non-normal singularity $m$ from \cite{FJLS} otherwise.  Physically,  we are splitting one of the $l$ substacks of $m_i+m_{i+1}$ separated M5's into two smaller substacks containing $m_i$ and $m_{i+1}$ branes each.
\end{enumerate}
In figure \ref{fig:degs} we sketch the M-theory brane realization of the four degenerations.
Note that a mathematical proof of these results depends on extending \cite[Thm. 7.26]{Nakajima:2016guo} to affine $E_8$. 
However, the classification of the singularities is not needed for our proof of the $a$-theorem (although the classification (\ref{eq:stratMC}) of strata is).
\begin{figure}
	\centering
	\subfigure{\includegraphics[width=0.7\textwidth]{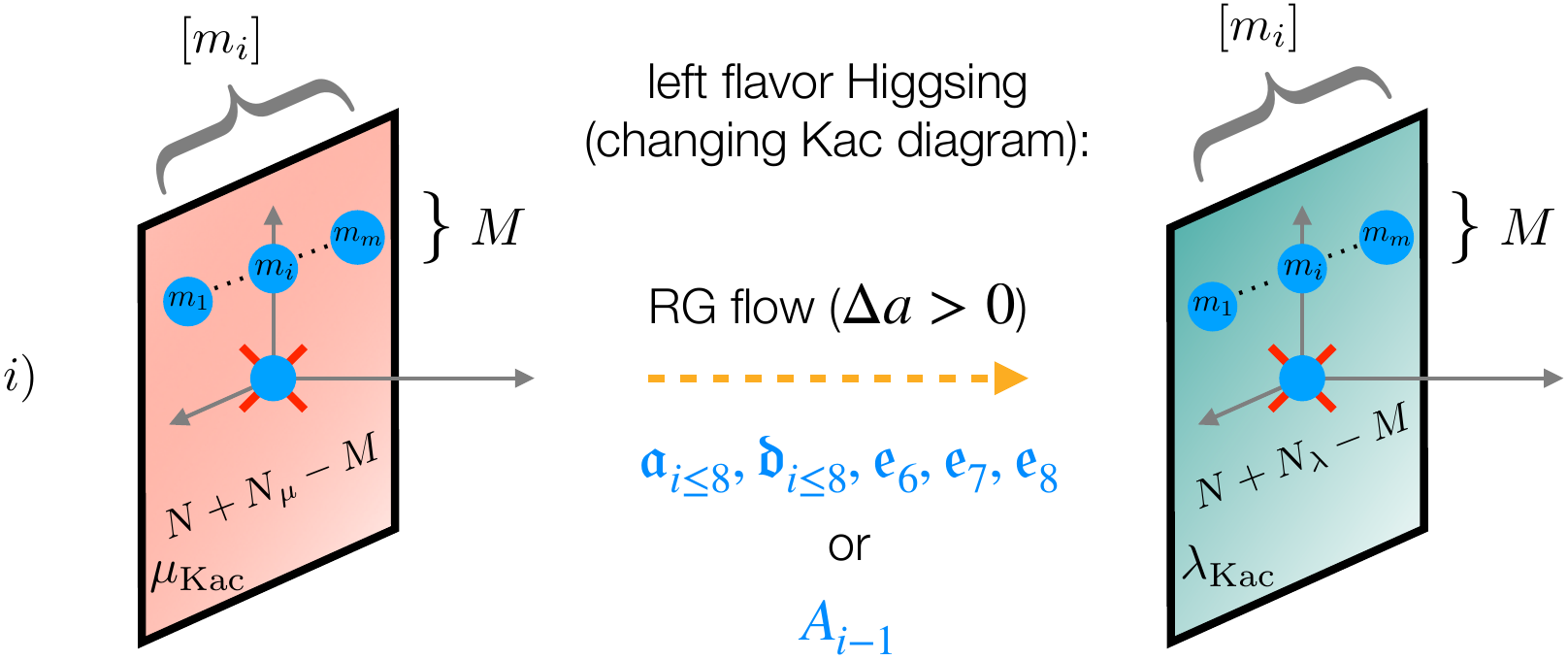}}
	\hfill
	\subfigure{\includegraphics[width=0.7\textwidth]{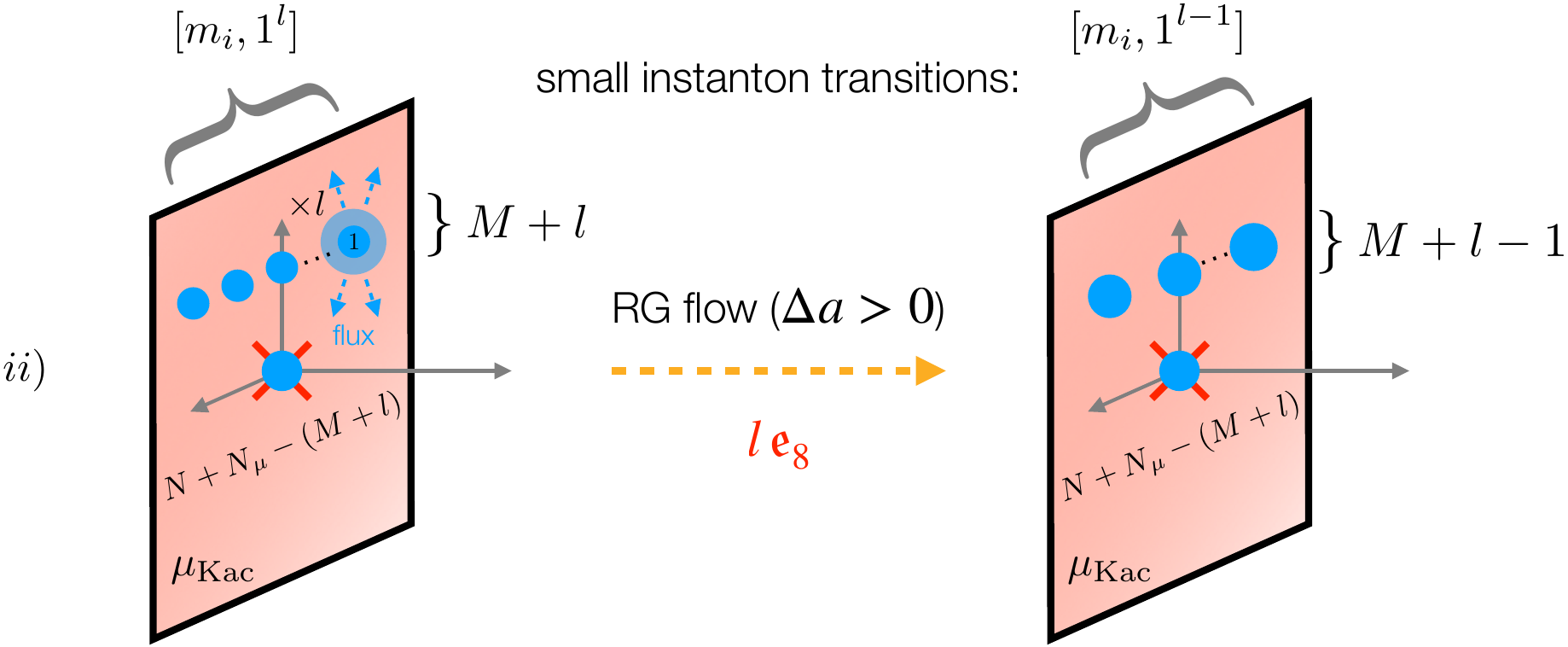}}
	\subfigure{\includegraphics[width=0.7\textwidth]{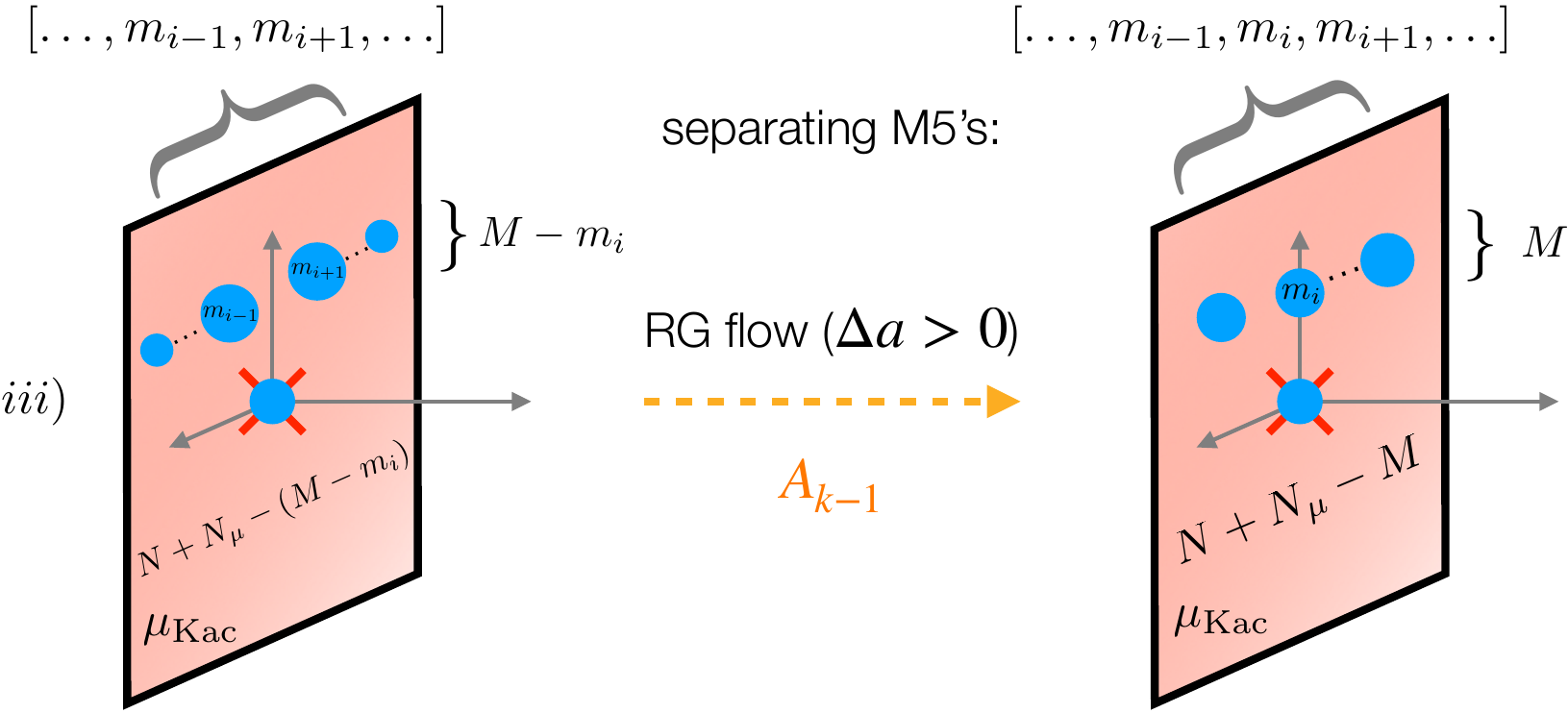}}
		\hfill
	\subfigure{\includegraphics[width=0.7\textwidth]{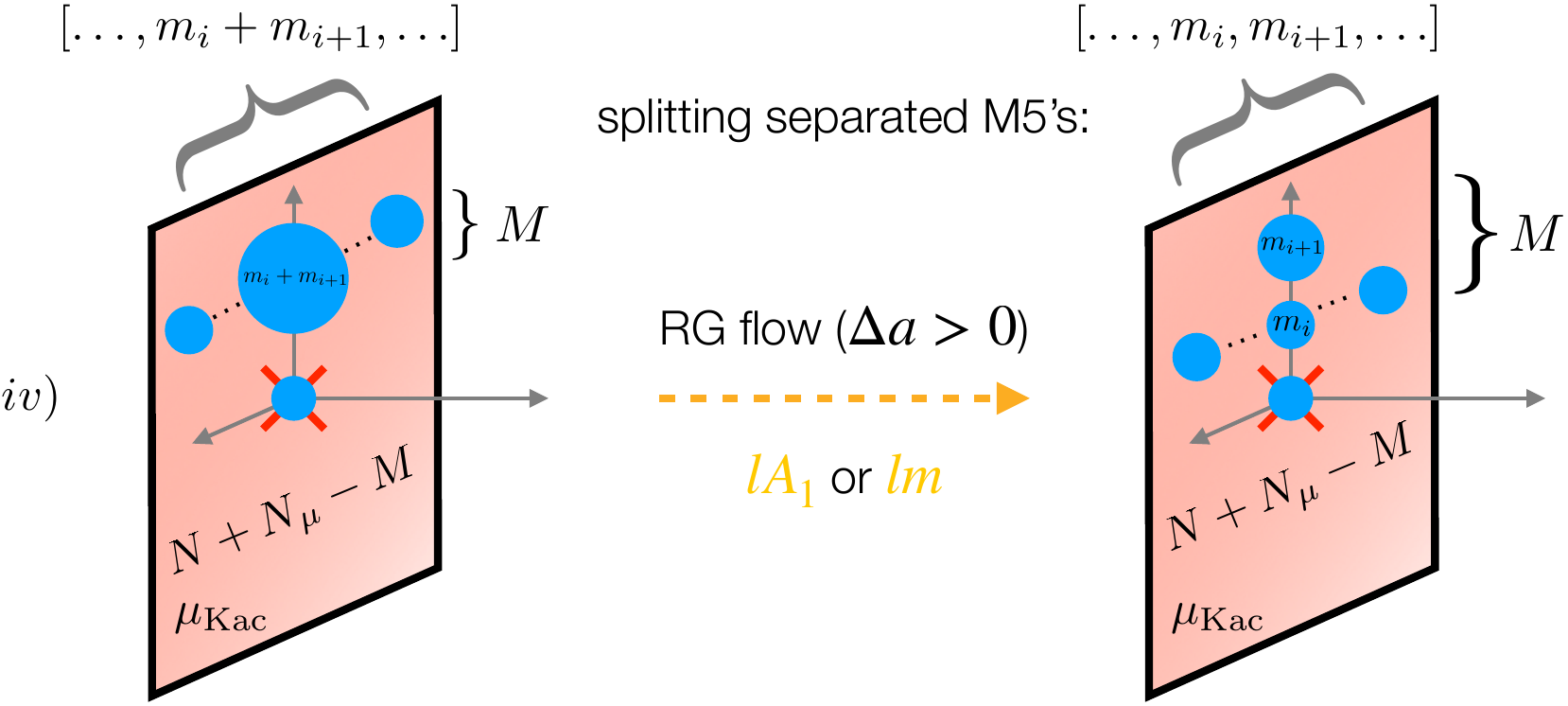}}
	\caption{An M-theory realization of the degenerations \emph{\ref{bullet:i325}},  \emph{\ref{bullet:ii325}},  \emph{\ref{bullet:iii325}}, and \emph{\ref{bullet:iv325}} as brane moves.}
	\label{fig:degs}
\end{figure}

\section{Hierarchy of RG flows as Hasse diagram of dominant coweights}
\label{sec:check}

We can now bring the abstract lessons learned in the previous section to fruition, and construct the hierarchy of left flavor Higgsings among A-type orbi-instantons, as the Hasse diagram of strata in the double affine Grassmannian of $E_8$ corresponding to dominant coweights $\lambda,\mu$ connected by minimal degenerations $\lambda <\mu$ (transverse slices). We will do so explicitly for $k=1,\ldots,7$ and generic value of $N$, even though it is clear the algorithm presented above can be applied at any arbitrary level $k$.

The power of this setup, using the double affine Grassmannian,  is to bring to light the iterative features of the Hasse of strata/hierarchy of (Higgs branches of) SCFTs and slices/RG flows,  which are present, but nowhere near as obvious, in other approaches such as the one in \cite{Bourget:2023dkj,Bourget:2024mgn}.

\subsection[$k=1$]{\texorpdfstring{$\bm{k=1}$}{k=1}}
\label{sub:k=1}

The $k=1$ case is somewhat degenerate and deserves a separate discussion.
Since there is no orbifold to begin with, the theories in this class are just E-strings, with a simple electric quiver on the tensor branch given by
\begin{equation}\label{eq:Estrele}
	[E_8]\underbrace{12\cdots 2}_N\ .
	\end{equation}
The only allowed Kac diagram is $\lambda_\text{Kac}=[1]$, so $\lambda=(\lambda_\text{Kac},n)=([1],n)$ is the only dominant coweight at level $k=1$ and fixed value of $n=-N$.
Thus $([1],n)<([1],n+1)=([1],n)+c$ is the only possible minimal degeneration between coweights (see Thm. \ref{Roytheorem}).
A decomposition along the lines of (\ref{eq:strata}) of ${\mathcal M}_\text{C}(\mu,\lambda)$ still holds, however the strata are in general unions of several of these subsets.
This is illustrated by the case $\mu=\lambda+c$: then (\ref{eq:strata}) gives three subsets
$${\mathcal M}_\text{C}(\lambda,\lambda) < {\mathcal M}_\text{C}(\lambda,\lambda)\times {\rm Sym}^1_{[1]}(\cc^2\setminus\{ 0\}) < {\mathcal M}^\text{smooth}_\text{C}(\mu,\lambda)$$
where the notation $<$ indicates the closure order.
Clearly, the union of the first two of these subsets is smooth (and equal to $\cc^2$), hence forms a stratum.
In general, given an interval $\lambda<\mu=\lambda+Nc$, the strata of ${\mathcal M}_\text{C}(\mu,\lambda)$ are in correspondence with the pairs $(\nu,[m_i])$ of a coweight $\nu=\mu-Mc$ and a partition $[m_i]$ of $M$.
(Note the distinction with the case $k>1$, where $\nu$ is also associated to strata for partitions of $M-1$, $M-2$ etc.)
Given this clarification, the analysis of section \ref{subsub:newdeg} holds, but with the caveat that there are no degenerations of type \ref{bullet:i325} or \ref{bullet:iii325}.

In the M-theory setup, there are two ways to reduce the number of coincident M5's probing the M9: tensor branch flows,  corresponding to separating ``horizontally'' (i.e. along $x^{6}$ --  see again figure \ref{fig:decoupling}) one M5 at a time;\footnote{From the geometric perspective, this was already analyzed in \cite{Heckman:2015ola}; from the 6d anomaly polynomial perspective, in \cite{Intriligator:2014eaa}.  Each tensor branch flow contracts the $-1$ curve in (\ref{eq:Estrele}) to zero size, and F-theory requires \cite{Heckman:2013pva} that the adjacent $-2$ in turn blows down to $-1$, leaving behind a rank-$(N-1)$ E-string. In M-theory it corresponds to separating one M5 from the M9 horizontally, i.e. along a direction transverse to the M9 worldvolume.  This process can be iterated all the way down to $[E_8]1$, the rank-1 E-string corresponding to $([1],1)$, or even to the empty theory (denoted $\emptyset$ at the top of figure \ref{fig:outputk=2}) corresponding to $([1],0)$ (i.e. with 0 M5's).} Higgs branch flows,  of interest to this paper, corresponding to either separating ``vertically'' (i.e. along the $x^{789}$-$x^{10}$ plane) some of the M5's (which is the minimal degeneration type \ref{bullet:iv325} from section \ref{subsub:newdeg}) or to dissolving some (or all) back into flux (which is degeneration type \ref{bullet:ii325}). For a Higgs branch example which includes the full Hasse diagram of a rank-4 E-string see the left of figure \ref{fig:outputk=2}.\footnote{The schematic form of the full Higgs branch of the rank-$N$ E-string can be seen in \cite[Fig. 3.6]{Lawrie:2023uiu} and was given already in \cite[Fig. 45]{Martone:2021ixp} for $N=2$. It can also be obtained via the new algorithm proposed in \cite{Bourget:2023dkj,Bourget:2024mgn}, which supersedes the decorated quiver subtraction algorithm of \cite{Bourget:2022ehw,Bourget:2022tmw}. This algorithm is used here to produce figure \ref{fig:outputk=2} in appendix \ref{app:fullhasse}.} 

What about the 3d perspective? If we include the center-of-mass hypermultiplet degrees of freedom,  the Higgs branch of the rank-$N$ E-string has $\dim_\mathbb{H} \text{HB}= 30N$, which for $N=1$ gives $30=29+1=\dim_\mathbb{H} \overline{\min_{E_8}} + \dim_\mathbb{H} \mathbb{H}$.  This agrees with the dimension of the Coulomb branch of the magnetic quiver for the ``$k=1$ $N=1$ orbi-instanton'' (i.e. the rank-1 E-string) in \cite[Eq. (2.20)]{Cabrera:2019izd},\footnote{This quiver is the ``over-extended'' affine $E_8$ Dynkin quiver introduced in \cite{Cremonesi:2014vla,Cremonesi:2014xha}.}
\begin{equation}
	1-1-2-3-4-5-\overset{\underset{|}{\displaystyle 3}}{6}-4-2\ ,
\end{equation}
given by the total sum of the gauge ranks minus 1. (See appendix \ref{app:magquivs} for general definitions and results on magnetic quivers, in particular the dimension formula \eqref{eq:dimmodspE8}, when $k>1$.) This is not a coincidence, as the Coulomb branch of the above quiver is by construction the same as the Higgs branch of the E-string SCFT (via compactification on $T^3$ followed by mirror symmetry).  Decoupling the center-of-mass hypermultiplet,\footnote{A free hypermultiplet, corresponding to the center of mass of the $N$ M5's decouples from the dynamics. It corresponds to a translational mode of the codimension-4 instantons on $\cc^2_{789,10}$ \cite{Intriligator:2014eaa}.} the Higgs branch of the \emph{interacting} part of the E-string theory is given by the reduced (i.e. centered) moduli space of $N$ $E_8$-instantons on $\cc^2$ \cite{Ohmori:2014pca}, i.e.  $\dim_\mathbb{H} \text{HB}_\text{E-str}= 30N-1$. For $N=1$, this has quaternionic dimension 29, as does $ \overline{\min_{E_8}}$ or the Coulomb branch of
\begin{equation}
	1-2-3-4-5-\overset{\underset{|}{\displaystyle 3}}{6}-4-2\ ,
\end{equation}
which is the magnetic quiver originally proposed in \cite[Eq. (2.27)]{Hanany:2018uhm}.\footnote{Here $k_\text{\cite{Hanany:2018uhm}}=4$.}

Finally,  for $k=1$ we can already give a direct computational proof of the $a$-theorem, given the simplicity of the transverse slices between the strata considered.  (Alternatively, the proof descends from Lemma \ref{lemma:B2} in appendix \ref{app:anomaly_decoupled}.) This is done as follows (we repeat the argument of \cite{Cordova:2015fha} for the convenience of the reader).  The eight-form 6d anomaly polynomial of the rank-$N$ E-string plus free hypermultiplet $I = I_\text{E-str} + I_\text{free hyper}$ was computed in \cite{Ohmori:2014pca} (to which we refer the reader for the notation), and reads:
\begin{align}\label{eq:Estrpoly}
	I                   & = \frac{4 N^3+6 N^2+3N}{24} c_2(R)^2  -\frac{6 N^2+5N}{48}  p_1(T)c_2(R) + \frac{7 N}{192} p_1(T)^2 -\frac{N}{48} p_2(T)\ , \\
	I_\text{E-str}      & =\frac{4 N^3+6 N^2+3N}{24} c_2(R)^2  -\frac{6 N^2+5N}{48}  p_1(T)c_2(R) \nonumber                                           \\
	                    & \quad+ \frac{7 (30N-1)}{5760} p_1(T)^2 -\frac{120 N-4}{5760} p_2(T)\ , \label{eq:Estrpolynohyper}                           \\
	I_\text{free hyper} & =  \frac{7 p_1(T)^2 -4 p_2(T)}{{5760}}\ .
\end{align}
In particular, from the relation $I \supset \dim_\mathbb{H} \text{HB} \, \frac{7 p_1(T)^2-4p_2(T)}{5760}$ one can read off the dimension of the Higgs branch of the theory, since $\frac{7 p_1(T)^2-4p_2(T)}{5760}$ is the contribution to the 6d anomaly polynomial of a single hypermultiplet. Then,  \eqref{eq:Estrpoly} gives $\dim_\mathbb{H} \text{HB} = 30N$, while \eqref{eq:Estrpolynohyper} gives $\dim_\mathbb{H} \text{HB} = 30N-1$, as expected.  The $a$ anomaly can now be computed applying the well-known relations \cite[Eqs. (1.6) \& (1.7)]{Cordova:2015fha} to \eqref{eq:Estrpoly},  yielding \cite[Eq. (5.2)]{Cordova:2015fha}, namely:
\begin{align}\label{eq:aEstrplushyper}
	a_\text{E-str $+$ free hyper}(N) & = \frac{64}{7}N^3+\frac{144}{7}N^2+\frac{99}{7}N\ ;                                        \\
	a_\text{E-str}(N)                & = \frac{64}{7}N^3+\frac{144}{7}N^2+\frac{99}{7}N - \frac{11}{7\cdot 30}\ .\label{eq:aEstr}
\end{align}
It is easy to verify the $a$-theorem for any Higgs branch flow, e.g.  the one triggered by dissolving $M$ separated M5's (out of $N$) into the M9:
\begin{align}\label{eq:DeltaaHiggs}
	\Delta a_\text{Higgs} & = a_\text{E-str $+$ free hyper}(N)-30M a_\text{free hyper}= a_\text{E-str}(N)-29M a_\text{free hyper} \\
	                      & = \frac{64}{7}N^3+\frac{144}{7}N^2+\frac{99}{7}N - 30\, \frac{11}{7\cdot 30} M\ ,
\end{align}
which is positive for
\begin{equation}
	{64}N^3 + {144} N^2 + 99 N > 11M\ ,
\end{equation}
i.e. for any $0\leq M\leq N$ when $N \geq 1$.\footnote{Analogously,  for a tensor branch flow and for any $N\geq 0$:
\begin{align}
	\Delta a_\text{tensor} & = a_\text{E-str $+$ free hyper}(N+1)-a_\text{E-str $+$ free hyper}(N) = a_\text{E-str}(N+1)-a_\text{E-str}(N) = \tfrac{192}{7} N^2 +\tfrac{480}{7}N+\tfrac{307}{7} > 0\, . \nonumber
\end{align}}
\subsection[$k=2$]{\texorpdfstring{$\bm{k=2}$}{k=2}}
\label{sub:k=2}

The $k=2$ case is the first for which actual orbi-instantons exist (i.e. we have a $\cc^2/\zz_2$ orbifold in M-theory).

In figure \ref{fig:k=2hasse} we can see the repeating pattern of the semi-infinite Hasse diagram: in red we have highlighted the hierarchy of RG flows at fixed number $N$ of full instantons,  whereas in blue the hierarchy at fixed total number of instantons (full plus fractional), which coincides with \cite[Fig.  11(a)]{Fazzi:2022hal}.
\begin{figure}[ht!]
	\hskip 0.25\textwidth
	\def\svgwidth{.65\textwidth}
	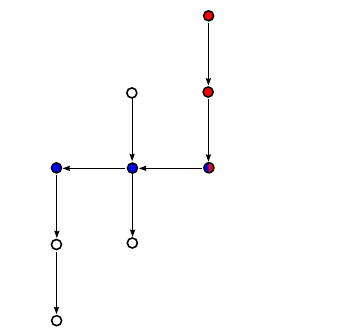
	\caption{A cutout from the semi-infinite periodic Hasse diagram of dominant coweights at level $k=2$, that is the hierarchy of left flavor Higgsings between orbi-instantons for $k=2$ and non-increasing numbers of full ($N$) and fractional ($N_\rho$)  instantons. The labels on the transitions are the minimal degeneration types from section \ref{sec:grass} (and would also correspond to the result of quiver subtraction between the 3d magnetic quivers associated with the UV and IR orbi-instantons).  E.g.  the $\mathfrak{a}_1=A_1=\cc^2/\zz_2$ flavor Higgsing is triggered by a VEV for a hypermultiplet charged under the $\su{2}$ subalgebra of $\mathfrak{f}=E_7\oplus \su{2}$ associated with Kac diagram $[2]$. The $a$ conformal anomaly decreases along all allowed oriented paths (RG flows).}
	\label{fig:k=2hasse}
\end{figure}

\subsection[$k=3,\ldots,7$]{\texorpdfstring{$\bm{k=3,\ldots,7}$}{k=3...7}}
\label{sub:k=37}

In this section we simply showcase the Hasse diagrams of dominant coweights of affine $E_8$ at level $k=3,\ldots,7$. Once again, the $a$ anomaly decreases along any allowed path, and by slicing at fixed $N+N_\rho$ we recover the hierarchies of \cite[App. A]{Fazzi:2022hal}.
\begin{figure}
	\centering
	\subfigure[$k=3$]{\includegraphics[height=0.9\textheight]{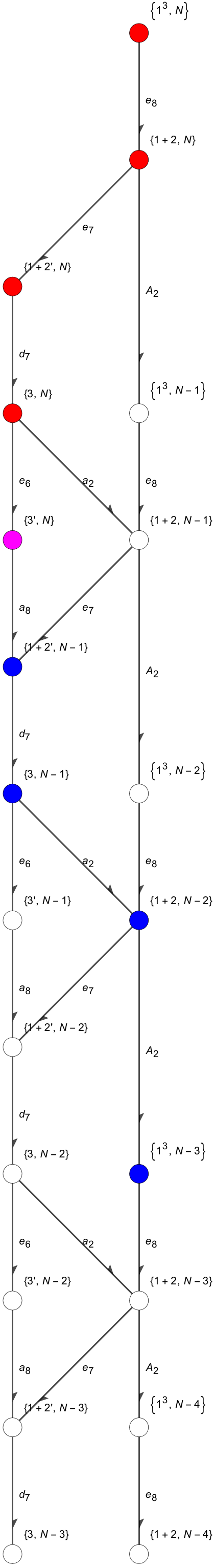}}
	\hskip 50pt
	\subfigure[$k=4$]{\includegraphics[height=0.9\textheight]{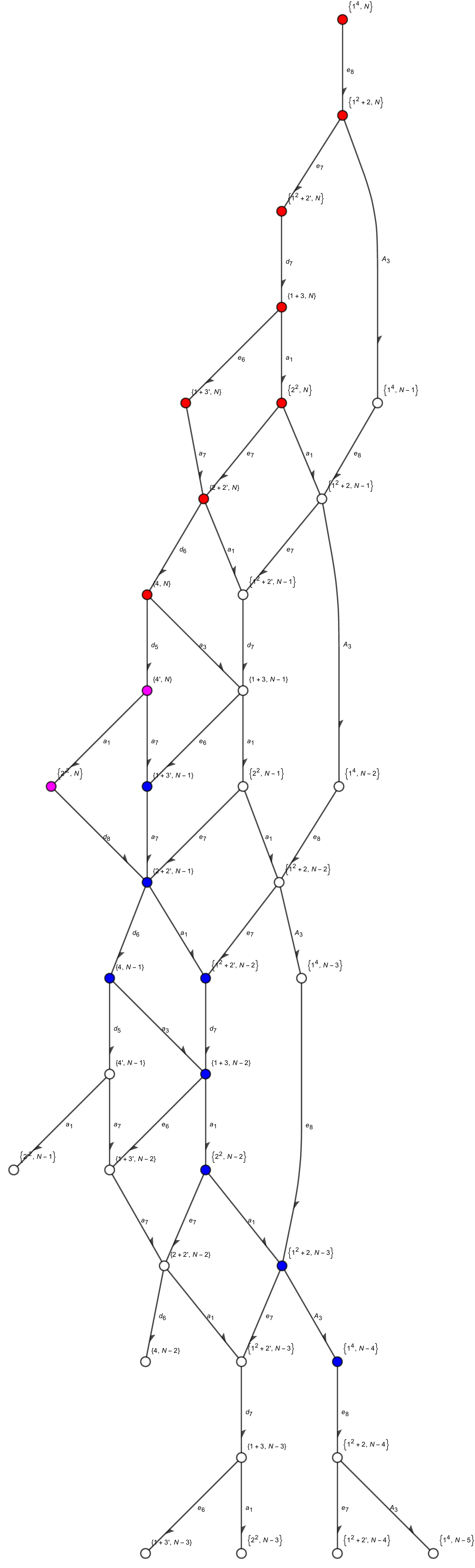}}
	\caption{A cutout from the semi-infinite periodic Hasse diagram at level $k=3,4$. The magenta node is shared between blue and red slicing (connected subdiagram). The blue slicing coincides with \cite[Fig.  11(b,c)]{Fazzi:2022hal}.\label{fig:k34notvec}}
\end{figure}
\begin{figure}
	\centering
	\subfigure[$k=5$]{\includegraphics[height=0.9\textheight]{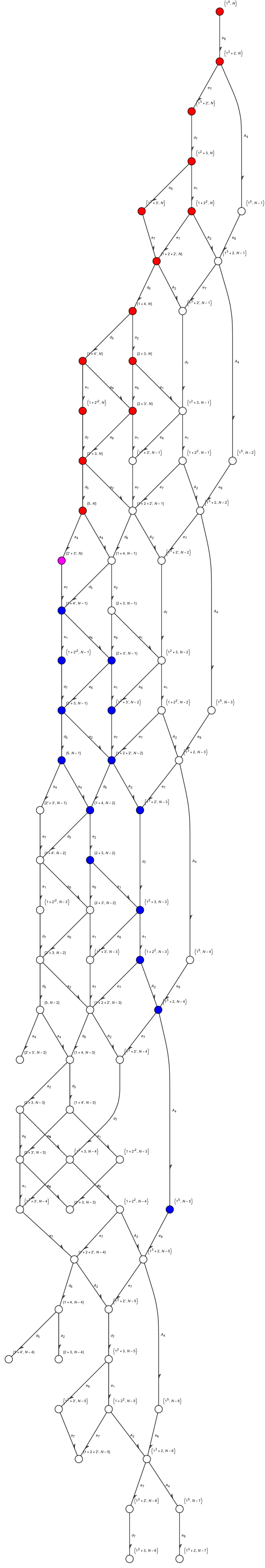}}
	\hskip 50pt
	\subfigure[$k=6$]{\includegraphics[height=0.9\textheight]{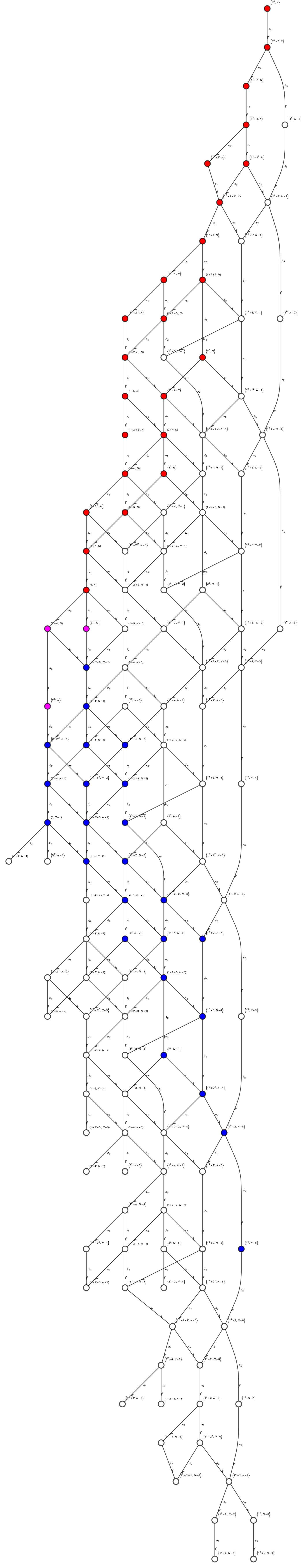}}
	\caption{A cutout from the semi-infinite periodic  Hasse diagram at level $k=5,6$. The magenta node is shared between blue and red slicing (connected subdiagram). The blue slicing coincides with \cite[Fig.  11(d), Fig. 3]{Fazzi:2022hal}.\label{fig:k56notvec}}
\end{figure}

\begin{figure}[t]
	\centering
	\subfigure[]{\includegraphics[width=0.3\textwidth]{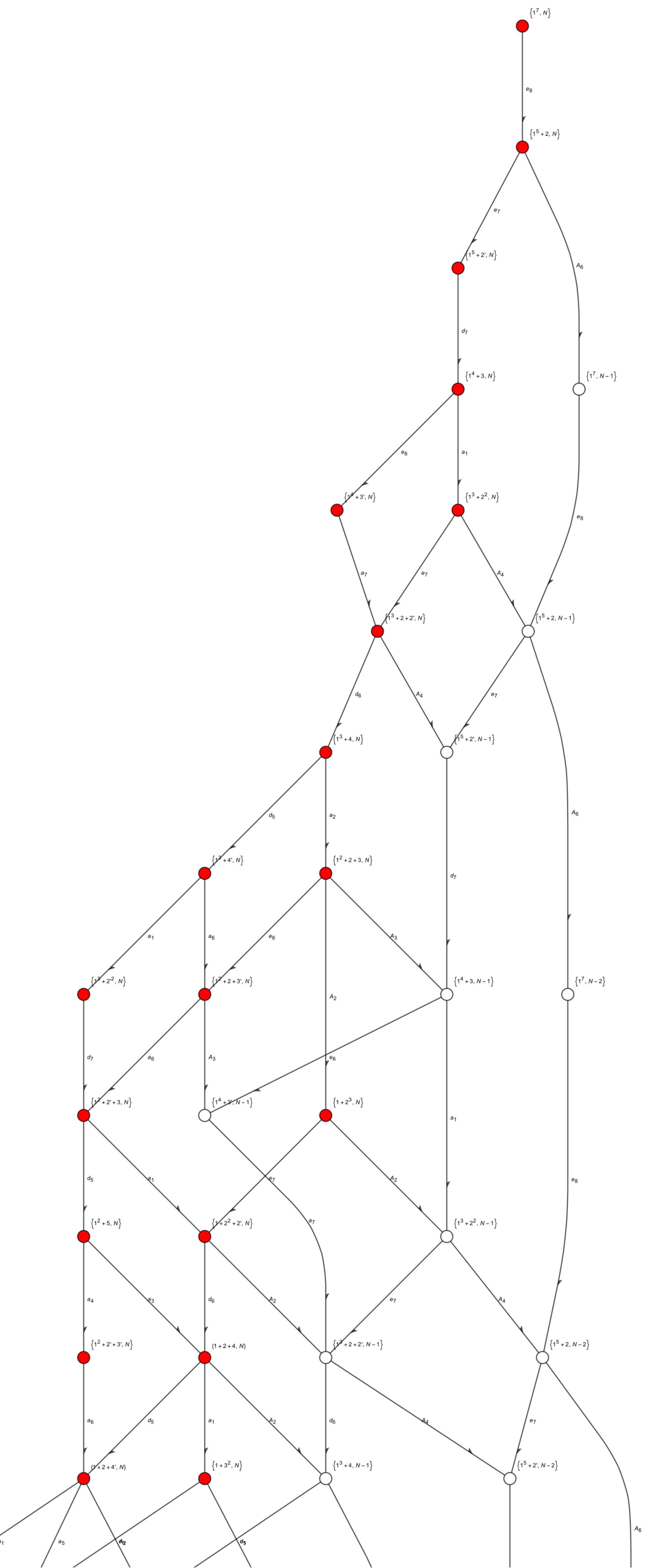}}
	\hfill
	\subfigure[]{\includegraphics[width=0.3\textwidth]{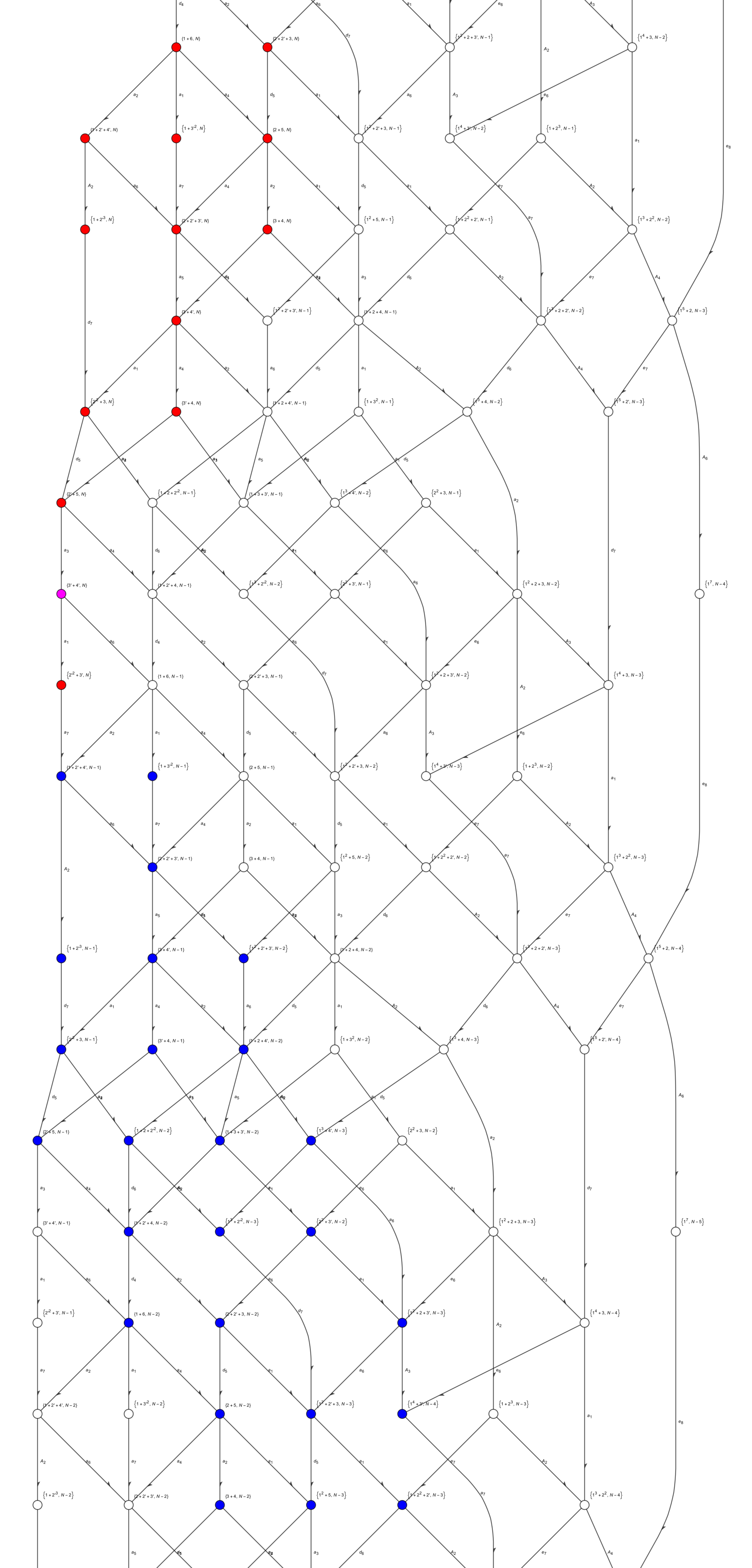}}
	\hfill
	\subfigure[]{\includegraphics[width=0.3\textwidth]{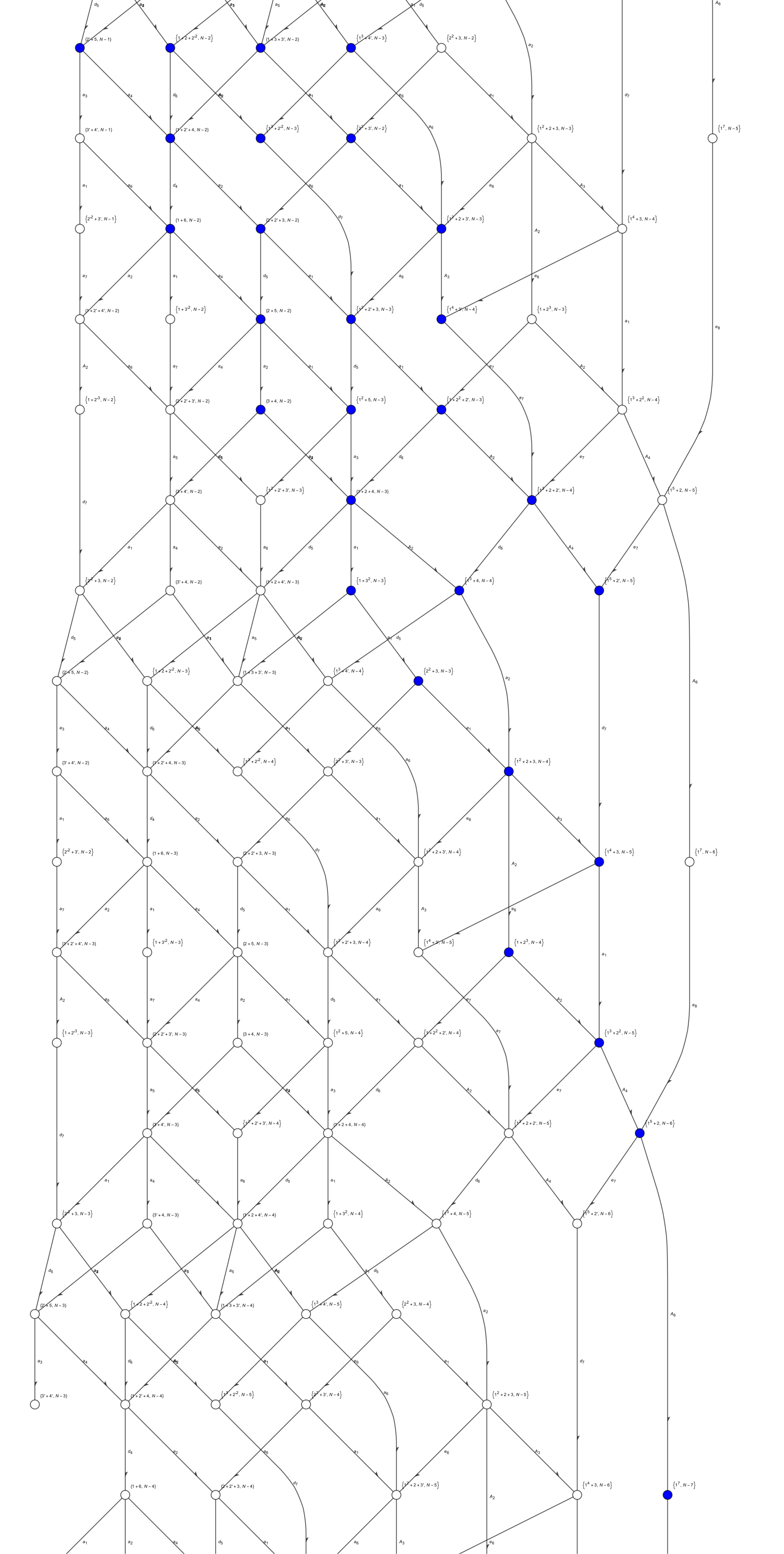}}
	\caption{A cutout from the semi-infinite periodic Hasse diagram at level $k=7$. The full diagram is too long to present here, so it has been cut into three pieces corresponding to (a), (b), (c), with each beginning where the previous ends. Some overlapping regions have been presented for clarity. The magenta node is shared between blue and red slicing (connected subdiagram). The blue slicing coincides with \cite[Fig.  13]{Fazzi:2022hal}.\label{fig:k7notvec}}
\end{figure}

\section[Proving the \textbf{\textit{a}}-theorem]{Proving the \texorpdfstring{$\bm{a}$}{\textbf{\textit{a}}}-theorem}
\label{sec:a}

We are finally in a position to prove the $a$-theorem for Higgs branch RG flows of the flavor Higgsing type between 6d A-type orbi-instantons.  (The $a$-theorem for decoupling flows descends from it, as proven in appendix \ref{app:anomaly_decoupled}.) To prove that $\Delta a>0$ for all allowed flows we use the partial order on the Hasse diagram of dominant coweights of affine $E_8$, and express $a$ in terms of combinatorial data of the latter.

\subsection[The $a$ anomaly of A-type orbi-instantons]{The \texorpdfstring{$\bm{a}$}{a} anomaly of A-type orbi-instantons}
\label{sub:afun}

We retain all the notation $\lambda=(\lambda_{\text{Kac}},n)=(k,\overline\lambda,n)$, $n_i$, $N_\rho$ (defined in Lemma \ref{sigmalem}) and the relation $N=-n$ determined in section \ref{subsub:relation}.  In particular, the condition (which holds for generic orbi-instantons $(N,k,\rho_\infty)$, i.e. for generic $N$) that the total number of full and fractional instantons is at least equal to $k$ is given in our notation by $n\leq N_\rho-k$. Recall that there is a bilinear form $\langle .\, , . \rangle$ on $Y(T)$ which is invariant with respect to the Weyl group and which satisfies $\langle \overline\lambda,\alpha^\vee\rangle=\overline\lambda(\alpha)$ for any root $\alpha$.
For distinct simple coroots $\alpha_i^\vee,\alpha_j^\vee$, we therefore have $\langle\alpha_i^\vee,\alpha_i^\vee\rangle=2$ and $\langle \alpha_i^\vee,\alpha_j^\vee\rangle$ is $-1$ if $i-j$ is an edge of the Dynkin diagram, and is zero otherwise.
Expressing $\overline\lambda$ as $\sum_{i\in\Delta} b_i \alpha_i^\vee$, we therefore have:
\begin{equation}\label{eq:lambdalambda}
	\langle\overline\lambda,\overline\lambda\rangle = 2(b_2^2-b_2b_3+b_3^2-\ldots +b_6^2-b_6b_{3'}+b_{3'}^2-b_6b_{4'}+b_{4'}^2-b_{4'}b_{2'}+b_{2'}^2)\ .
\end{equation}
The quadratic form $\langle\overline\lambda,\overline\lambda\rangle$ is the {\it length function}; it is simply the bilinear form obtained from the Cartan matrix for $E_8$.
When $\overline\lambda=\sum_{i\in\Delta} n_i \varpi_i^\vee$, then the length function is obtained from the {\it inverse} Cartan matrix.
One important property is that $\langle\overline\lambda,\overline\lambda\rangle=2$ if and only if $\overline\lambda$ is a coroot.

An equally important function is obtained from the {\it Weyl vector} $\rho=\frac{1}{2}\sum_{\alpha\in\Phi^+}\alpha=\sum_{i\in\Delta}\varpi_i\in X(T)$.
This satisfies $\alpha_i^\vee(\rho)=1$ for all simple coroots $\alpha_i^\vee$, hence we obtain the {\it height function}: $\overline\lambda(\rho)=\sum_{i\in \Delta} b_i$ where $\overline\lambda=\sum_{i\in\Delta} b_i \alpha_i^\vee$.

The $a$ anomaly of orbi-instantons can be extracted from the anomaly inflow calculation in \cite[Sec. 3.4]{Mekareeya:2017jgc}.
In our notation, i.e.  when expressed in terms of the data $\overline\lambda, k, n$, it reads:
\begin{equation}\label{eq:a6danalytic}
	\begin{split}
		a= & -\frac{3}{2} \overline\lambda(\rho) - \frac{16}{7} \langle \overline\lambda,\overline\lambda\rangle \Big[-15n+ k \left( 6n^2 - 6k n + 2 k^2 + 13 \right) \Big] +  \frac{48}{7} \langle \overline\lambda,\overline\lambda\rangle^2 \Big[ k -n\Big] \\
		   & + \frac{1}{420} \Big[ -502 + 14415 k + 251 k^2+3680 k^3 + 576 k^5                                                                                                                                                                                 \\
		   & - (18900 +12960 k^2 + 2880 k^4) n
		+ 2880k  \left( 5 + 2k^2 \right) n^2  - 3840k^2 n^3                                                                                                                                                                                                    \\
		   & + 400 \sum_{{\alpha}\in \Phi^+} \overline\lambda(\alpha)^3 - 96 \sum_{\alpha\in \Phi^+} \overline\lambda(\alpha)^5\Big].\
	\end{split}
\end{equation}
Since $N=-n$, we see that $a$ can only depend on $N$ and the choice of Kac diagram $\lambda_\text{Kac}=(k,\overline\lambda)$, and its large-$N$ leading term ($\tfrac{64}{7}N³k²$) scales like $N^3$ and is universal, as it should.\footnote{See \cite{Cremonesi:2015bld,Apruzzi:2017nck} for analogous calculations in T-brane theories. This leading term was also obtained via holography in \cite{Bah:2017wxp,Apruzzi:2017nck}.} For example, it is straightforward to verify that for $k=1$ (which implies $\overline\lambda= \overline\lambda(\rho) =\langle \overline\lambda,\overline\lambda\rangle =0$) the expression in \eqref{eq:a6danalytic} correctly reduces to \eqref{eq:aEstrplushyper}.

\subsection{The proof}
\label{sub:proof}

Because of Roy's Theorem \ref{Roytheorem}, the proof of the $a$-theorem for $k>1$ is reduced to the statement: $\Delta a = a(\lambda)-a(\mu)> 0$ for any pair $\lambda<\mu$ of dominant coweights of either of the following two forms:
\begin{itemize}
	\item[{}]
	      Type \emph{i}): $\mu=\lambda+\hat\alpha_I^\vee$ where $I$ is an irreducible component of the subdiagram of zeros of $\lambda$.
	\item[{}]
	      Type \emph{ii}): $\mu=\lambda+\alpha_i^\vee$ where $n_j>0$ for all $j$ which are connected to $i$ by an edge.
\end{itemize}
The proof comes down to a case-by-case check for each of the 53 proper connected subdiagrams of the affine Dynkin diagram in Type \emph{i}), and each of the nine simple coroots in Type \emph{ii}).  Although one could in principle deal with these cases by hand, many of the calculations become rather complicated; we therefore checked the details with a computer.
(We used the computational algebra platform GAP.)
We will however sketch an argument that covers most of the cases in Type \emph{i}).
Note that this type is often more straightforward to compute, because various coefficients $n_i$ must be zero.

Let $\lambda<\mu$ be a pair as in Type \emph{i}).
For simplicity, assume until further notice that $I$ does not contain the affine node $\alpha_1$.
Then $\lambda=(k,\overline\lambda,n)$ and $\mu=(k,\overline\lambda+\hat\alpha_I^\vee,n)$ (i.e. the value of $n$ does not change -- there are no instanton transitions in this RG flow).
Hence the second and third lines in the formula for the $a$ anomaly (\ref{eq:a6danalytic}) are unchanged in passing from $\lambda$ to $\mu$.
It follows from our earlier discussion that $\overline\mu(\rho) =\overline\lambda(\rho)+h_I-1$, where $h_I$ is the Coxeter number of the root subsystem spanned by $I$.
Moreover, by assumption we have $\langle \overline\lambda,\hat\alpha_I^\vee\rangle=0$ and hence
\begin{equation}
	\langle\overline\mu,\overline\mu\rangle=\langle\overline\lambda+\hat\alpha_I^\vee,\overline\lambda+\hat\alpha_I^\vee\rangle=\langle\overline\lambda,\overline\lambda\rangle+\langle\hat\alpha_I^\vee,\hat\alpha_I^\vee\rangle=\langle\overline\lambda,\overline\lambda\rangle+2\ .
\end{equation}
These observations lead to a simple formula for the change in the first line of (\ref{eq:a6danalytic}).
For the final line, we need to analyze the change in values of $\sum_{\alpha\in\Phi^+}\overline\lambda(\alpha)^3$ and $\sum_{\alpha\in\Phi^+}\overline\lambda(\alpha)^5$.

\subsubsection{Considerations on sums of powers of roots}

Fix an integer $m\geq 1$.
Let
\begin{equation}
	F_m(\overline\lambda)=\sum_{\alpha>0}\overline\lambda(\alpha)^m\ ,
\end{equation}
which we think of as a homogeneous form (i.e. a multivariate polynomial in the coefficients $n_i$) of degree $m$.
Introduce bihomogeneous forms of degree $(m-i,i)$, defined as follows:
\begin{equation}
	F_m^{(i)}(\overline\lambda,\theta)=\frac{m!}{i!(m-i)!}\sum_{\alpha\in\Phi^+}\overline\lambda(\alpha)^{m-i}\theta(\alpha)^i\ .
\end{equation}
Then $F_m(\overline\lambda+\theta)= \sum_{i=0}^m F_m^{(i)}(\overline\lambda,\theta)$ for all $\overline\lambda$, $\theta$.
In particular we have $F_m^{(0)}(\overline\lambda,\theta)=F_m(\overline\lambda)$ and $F_m^{(m)}(\overline\lambda,\theta)=F_m(\theta)$.
We will carry out a general analysis of the linear and quadratic functions $F_m^{(1)}(\overline\lambda,-)$ and $F_m^{(2)}(\overline\lambda,-)$.
We assume $\overline\lambda=\sum_i n_i\varpi_i^\vee$ and $\theta=\sum_i c_i \alpha_i^\vee$, where $n_i\geq 0$ and $c_i\geq 0$ for all $i$.
(In this section, all such sums are over $i\neq 1$, i.e. $i\in\Delta$.)
Note in particular that the assumption on $\theta$ holds for any positive coroot.
We start with the following observations.

\begin{lemma}
	If $\langle\overline\lambda,\theta\rangle=0$ then $F_m^{(1)}(\overline\lambda,\theta)=0$ for $m>1$.
\end{lemma}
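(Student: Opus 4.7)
The linear functional reads $F_m^{(1)}(\overline\lambda,\theta) = m\,\theta(v)$, with
\[
v \;=\; \sum_{\alpha\in\Phi^+}\overline\lambda(\alpha)^{m-1}\,\alpha \;\in\; \mathfrak{t}^*,
\]
so the claim is equivalent to showing that $v$ is proportional to $\overline\lambda^{\flat}$, where $\flat$ denotes the isomorphism $\mathfrak{t}\cong\mathfrak{t}^*$ induced by the invariant form $\langle\cdot,\cdot\rangle$.

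The plan is twofold. First, because $m>1$, only roots $\alpha$ with $\overline\lambda(\alpha)>0$ contribute to $v$, and the stabilizer $W_{\overline\lambda}$ (the parabolic generated by those $s_{\alpha_i}$ with $n_i=0$) permutes this subset while preserving the values $\overline\lambda(\alpha)$; hence $v$ sits inside the $W_{\overline\lambda}$-fixed subspace of $\mathfrak{t}^*$. Second, I would interpret $v$ as (a multiple of) the gradient, at the dominant point $\overline\lambda$, of the $W$-invariant function
\[
G_m(\mu) \;:=\; \sum_{\alpha\in\Phi^+}|\mu(\alpha)|^m,
\]
which is $C^{1}$ for $m\geq 2$ and agrees with $F_m$ on the dominant chamber. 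The $E_8$-specific input then enters through the fact that the ring of $W$-invariant polynomials on $\mathfrak{t}$ has generators only in degrees $2,8,12,14,18,20,24,30$: for even $m\in\{2,4,6\}$ the space of $W$-invariants of degree $m$ is one-dimensional (spanned by $\langle\cdot,\cdot\rangle^{m/2}$), so $F_m$ must be a multiple of $\langle\overline\lambda,\overline\lambda\rangle^{m/2}$ and hence $\nabla F_m\propto\overline\lambda^{\flat}$, giving the lemma at once; the genuinely subtle cases are $m=3,5$, where one must argue that the $W_{\overline\lambda}$-invariance of $v$, combined with the absence of odd-degree fundamental invariants, still forces the gradient to align with $\overline\lambda^{\flat}$.

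The main obstacle is precisely this last point, since for odd $m$ the function $G_m$ is merely $C^{1}$ (not polynomial) and Chevalley's theorem cannot be invoked directly; moreover, the $W_{\overline\lambda}$-fixed subspace, spanned by $\{\varpi_j^{\vee,\flat}:n_j\neq 0\}$, is in general of dimension greater than one, so invariance alone does not close the argument. I expect the cleanest fallback is a stratification by the conjugacy type of the pseudo-Levi $\mathfrak{g}^{\theta}$ of Section~\ref{sub:diag}: on each stratum $v$ is polynomial of fixed degree $m-1$ in $\overline\lambda$, so the desired proportionality reduces to a finite computation. Since the authors in any case resort to a computer check for $\Delta a>0$, one would naturally complement the conceptual argument above with an analogous GAP verification over the finitely many combinatorial types that actually arise in $E_8$.
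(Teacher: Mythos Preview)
Your reformulation is correct: asking that $F_m^{(1)}(\overline\lambda,\theta)=0$ for \emph{every} $\theta$ orthogonal to $\overline\lambda$ is exactly the claim that $v=\sum_{\alpha>0}\overline\lambda(\alpha)^{m-1}\alpha$ is proportional to $\overline\lambda^{\flat}$. The problem is that for $m=3,5$ (the only cases the paper actually uses) this stronger claim is \emph{false}. Indeed, by the paper's own Corollary after Lemma~\ref{n=3lem} one has $\partial F_m/\partial b_i = p_i(\overline\lambda)\,n_i$, so $F_m^{(1)}(\overline\lambda,\theta)=\sum_i p_i(\overline\lambda)\,n_i c_i$; this vanishes for all $\theta$ with $\sum_i n_i c_i=0$ only if the $p_i(\overline\lambda)$ agree on the support of $\overline\lambda$, and Lemma~\ref{F3analysis} shows they do not (e.g.\ $p_{2'}-p_6\equiv 12\,n_{4'}\bmod n_{2'}$). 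Your invariant-theory argument for even $m\in\{2,4,6\}$ is correct and rather pretty, but it settles precisely the cases the paper does not need; your proposed GAP fallback for $m=3,5$ would therefore not terminate in a proof, because the assertion you are trying to verify is not true.

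What the paper actually proves is the lemma under the extra hypothesis, implicit in the first sentence of its proof and made explicit just before Lemma~\ref{quadlem}, that $\theta$ lies in the span of the simple coroots $\alpha_i^\vee$ with $n_i=0$ --- i.e.\ $c_i n_i=0$ for every $i$, not merely $\sum_i c_i n_i=0$. This is exactly the situation in the application, where $\theta=\hat\alpha_I^\vee$ for $I$ a component of the zero set of $\lambda_{\rm Kac}$. Under this assumption the argument is a two-line root-system exercise: by linearity reduce to $\theta=\alpha_i^\vee$ with $n_i=0$; the simple reflection $s_{\alpha_i}$ permutes $\Phi^+\setminus\{\alpha_i\}$ and fixes $\overline\lambda$, so the positive roots $\beta\neq\alpha_i$ with $\alpha_i^\vee(\beta)\neq 0$ come in pairs $\{\beta,\beta+\alpha_i\}$ satisfying $\overline\lambda(\beta)=\overline\lambda(\beta+\alpha_i)$ and $\alpha_i^\vee(\beta)=-\alpha_i^\vee(\beta+\alpha_i)$, whence the contributions cancel; the remaining term $\alpha=\alpha_i$ contributes $\overline\lambda(\alpha_i)^{m-1}\cdot 2=0$ since $m>1$. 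No invariant theory, stratification, or computer check is required.
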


\begin{proof}
	By definition, $F_m^{(1)}(\overline\lambda,\theta)=m\sum_{\alpha>0}\overline\lambda(\alpha)^{m-1}\theta(\alpha)$.
	Since $\theta$ is a non-negative linear combination of coroots $\alpha_i^\vee$ with $\overline\lambda(\alpha_i^\vee)=0$, it will suffice to prove the Lemma for $\theta=\alpha_i^\vee$.
	The positive roots $\beta\in\Phi^+\setminus \{\alpha_i\}$ are either orthogonal to $\alpha_i$ (in which case $\overline\lambda(\beta)\theta(\beta)=0$) or belong to a chain $\beta,\beta+\alpha_i$ of positive roots, with $\theta(\beta)=\alpha_i^\vee(\beta)=-1$.
	Then $\overline\lambda(\beta)^{m-1}\theta(\beta)+\overline\lambda(\beta+\alpha_i)^{m-1}\theta(\beta+\alpha)=-\overline\lambda(\beta)^{m-1}+\overline\lambda(\beta)^{m-1}=0$.
	Summing up over all such chains, we obtain $F_m^{(1)}(\overline\lambda,\theta)=0$.
\end{proof}

To understand the quadratic form $F_m^{(2)}(\overline\lambda,-)$, it will be useful to express $\overline\lambda$ simultaneously in terms of fundamental coweights and simple coroots:
\begin{equation}
	\overline\lambda=\sum_i n_i \varpi_i^\vee=\sum_i b_i \alpha_i^\vee\ .
\end{equation}
Writing $\theta$ similarly as $\sum_i c_i \alpha_i^\vee$, we have:
\begin{equation}
	F_m^{(1)}(\overline\lambda,\theta)=\sum_i \frac{\partial F_m}{\partial b_i} c_i\ .
\end{equation}
We therefore obtain:

\begin{corollary}
	For any $m\geq 2$, $\frac{\partial F_m}{\partial b_i}$ is divisible by $n_i$ (as a polynomial in $n_2,\ldots ,n_{2'}$).
\end{corollary}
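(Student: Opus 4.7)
The plan is to derive this divisibility statement as a direct algebraic consequence of the preceding Lemma, using the perfect pairing between fundamental coweights and simple roots. First I would record the basic identity $\overline\lambda(\alpha_i) = n_i$, which follows from $\overline\lambda = \sum_j n_j \varpi_j^\vee$ together with $\varpi_j^\vee(\alpha_i) = \delta_{ij}$. Writing $\theta = \sum_i c_i \alpha_i^\vee$, and using the stated property $\langle \overline\lambda, \alpha^\vee\rangle = \overline\lambda(\alpha)$ of the bilinear form, this immediately gives
\begin{equation}
\langle \overline\lambda, \theta \rangle = \sum_i c_i \, \overline\lambda(\alpha_i) = \sum_i n_i c_i\,.
\end{equation}

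Next I would combine this with the identity $F_m^{(1)}(\overline\lambda,\theta) = \sum_i \frac{\partial F_m}{\partial b_i} c_i$ already established just before the Corollary. By the preceding Lemma, $F_m^{(1)}(\overline\lambda,\theta)$ vanishes whenever $\langle \overline\lambda,\theta\rangle = 0$, i.e.\ whenever $\sum_i n_i c_i = 0$. Hence, for each fixed $\overline\lambda$, the linear form $(c_i) \mapsto \sum_i \frac{\partial F_m}{\partial b_i}(\overline\lambda)\, c_i$ vanishes on the hyperplane cut out by $\sum_i n_i c_i = 0$. Two linear forms with the same kernel are proportional, so there exists a scalar $C(\overline\lambda)$ (depending on $\overline\lambda$) such that
\begin{equation}
\frac{\partial F_m}{\partial b_i}(\overline\lambda) = C(\overline\lambda) \, n_i \qquad \text{for every } i\,.
\end{equation}

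Finally, I would upgrade this pointwise identity to a statement about polynomials. The entries $n_i = \sum_j C_{ij} b_j$ are linear forms in the $b_j$, hence irreducible elements of the polynomial ring $\mathbb{Q}[b_2,\ldots,b_{2'}]$. The identity above shows that $\frac{\partial F_m}{\partial b_i}$, viewed as a polynomial in the $b_j$, vanishes on the hyperplane $\{ n_i = 0\}$; by the Nullstellensatz (or simply because $n_i$ is irreducible in a UFD), $n_i$ divides $\frac{\partial F_m}{\partial b_i}$, which is the desired statement. There is no real obstacle here — the only thing to be careful about is the interpretation: the divisibility is as polynomials in the $b_j$'s (equivalently the $n_j$'s), with $n_i$ treated as the linear form $\sum_j C_{ij} b_j$, rather than as a variable independent of the $b_j$.
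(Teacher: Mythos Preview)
Your step 2 overreaches, and its conclusion is actually false. You use the preceding Lemma for \emph{every} $\theta$ with $\langle\overline\lambda,\theta\rangle=0$ to deduce that all the partials share a single proportionality constant, i.e.\ $\frac{\partial F_m}{\partial b_i}(\overline\lambda)=C(\overline\lambda)\,n_i$ with $C$ independent of $i$. But that would mean the quotients $p_i=\frac{\partial F_m}{\partial b_i}/n_i$ all coincide, and the paper's own computations a few lines later (for $m=3$) show they do not: for instance $p_6=12\,b_6$ exactly, whereas the same mixed-partial argument yields $p_{3'}=12\,b_6+6\,n_{3'}$; at $\overline\lambda=\varpi_6^\vee+\varpi_{3'}^\vee$ one gets $p_6=540\neq 546=p_{3'}$. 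The source of the error is that the Lemma's proof only treats $\theta$ lying in the span of those $\alpha_i^\vee$ with $n_i=0$ (the opening ``Since $\theta$ is a linear combination of coroots $\alpha_i^\vee$ with $\overline\lambda(\alpha_i)=0$'' is not a consequence of $\langle\overline\lambda,\theta\rangle=0$), and indeed the Lemma as literally stated fails, e.g.\ for $\theta=\alpha_6^\vee-\alpha_{3'}^\vee$ at the $\overline\lambda$ above.

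The repair is simple and is exactly the paper's intended one-line argument: skip step 2. Apply the Lemma directly with $\theta=\alpha_i^\vee$; its hypothesis becomes $n_i=0$, and together with $F_m^{(1)}(\overline\lambda,\alpha_i^\vee)=\frac{\partial F_m}{\partial b_i}$ this says that $\frac{\partial F_m}{\partial b_i}$ vanishes identically on the hyperplane $\{n_i=0\}$. Your step 3 then finishes the job. So your final step is the right one; the detour through a common $C(\overline\lambda)$ is both unnecessary and incorrect.
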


\begin{proof}
We consider $F_m^{(1)}(\overline\lambda,\alpha_i^\vee)=\frac{\partial F_m}{\partial b_i}$ as a polynomial in $n_2,\ldots ,n_{2'}$.
By the Lemma, it is zero when the value of $n_i$ is zero; this just means that it is divisible by $n_i$ as a polynomial.
\end{proof}

We note the special case $\frac{\partial F_2}{\partial b_i}=60n_i$.
(In fact, we have $F_2(\overline\lambda)=30\langle\overline\lambda,\overline\lambda\rangle$, which can be deduced from inspection of (\ref{eq:lambdalambda}).)
For $m>2$, we write $\frac{\partial F_m}{\partial b_i}=p_{i} n_i$, where $p_{i}=p_i(\overline\lambda)$ has degree $m-2$.
Similarly to the above, we have:
\begin{equation}
	F_m^{(2)}(\overline\lambda,\theta)=\frac{1}{2}\left(\sum_i \frac{\partial^2 F_m}{\partial b_i^2} c_i^2 + \sum_{i\neq j} \frac{\partial^2 F_m}{\partial b_i\partial b_j} c_i c_j\right)\ ,
\end{equation}
where the second sum is over the unordered pairs $i\neq j$.
We are especially interested in the case $\langle \overline\lambda,\theta\rangle=0$, which is equivalent to $c_i n_i=0$ for all $i$.
Hence we only need to know the coefficients $\frac{\partial^2 F_m}{\partial b_i^2}$ modulo multiples of $n_i$ and $\frac{\partial^2 F_m}{\partial b_i\partial b_j}$ modulo linear combinations of $n_i$ and $n_j$ (written: $\bmod (n_i,n_j)$).

Recall that the {\it support} of $\theta=\sum_i c_i \alpha_i^\vee$ is the set $\{ i\in\Delta  : c_i\neq 0\}$.
The support of a coroot is a connected subset of the Dynkin diagram.

\begin{lemma}\label{quadlem}
	a) If $i-j$ is an edge then $p_i(\overline\lambda)\equiv p_j(\overline\lambda) \bmod (n_i,n_j)$.

	b) Assuming $\langle\overline\lambda,\theta\rangle=0$, we have $F_m^{(2)}(\overline\lambda,\theta)=\sum_i p_i(\overline\lambda) c_i^2-\sum_{i-j} p_i(\overline\lambda) c_i c_j$, where the second sum is over all edges $i-j$, taken in any order.
	(This is well-defined, by (a).)

	c) If in addition the support $J$ of $\theta$ is connected then $F_m^{(2)}(\overline\lambda,\theta)=\frac{1}{2} p_j(\overline\lambda) \langle\theta,\theta\rangle$ for any choice of $j\in J$.
	In particular, if $\theta$ is a coroot then $F_m^{(2)}(\overline\lambda,\theta)= p_j(\overline\lambda)$.
\end{lemma}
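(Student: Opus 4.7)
The entire lemma springs from differentiating the identity $\partial F_m/\partial b_i = p_i n_i$ once more with respect to $b_j$. Since $n_i = \overline\lambda(\alpha_i) = \sum_k C_{ki}\, b_k$, the chain rule gives
\[ \frac{\partial^2 F_m}{\partial b_j\, \partial b_i} = \frac{\partial p_i}{\partial b_j}\, n_i + C_{ji}\, p_i\,. \]
Part (a) is then immediate: imposing symmetry of mixed partials when $i$--$j$ is an edge (so $C_{ij}=C_{ji}=-1$) and subtracting the two expressions yields $p_i - p_j = (\partial p_i/\partial b_j)\, n_i - (\partial p_j/\partial b_i)\, n_j$, which lies explicitly in the ideal $(n_i,n_j)$.

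For (b), I would expand $F_m^{(2)}(\overline\lambda,\theta)=\tfrac12\sum_{i,j} (\partial^2 F_m/\partial b_i\, \partial b_j)\, c_i c_j$ and split the sum into diagonal, adjacent off-diagonal, and non-adjacent off-diagonal pieces. Plugging in the second-derivative formula above and invoking the hypothesis $n_i c_i=0$ (which in our applications follows from $\langle\overline\lambda,\theta\rangle=0$ together with $n_i,c_i\geq 0$): every term carrying a factor $(\partial p_\bullet/\partial b_\bullet)\, n_\bullet$ is annihilated after pairing with $c_i c_j$ or $c_i^2$; the $C_{ii}=2$ diagonal contributes $p_i c_i^2$; and each edge $i$--$j$ contributes $-p_j c_i c_j$ (equivalently $-p_i c_i c_j$, since their difference lies in $(n_i,n_j)$ by (a) and is therefore killed).

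For (c), with $J$ the connected support of $\theta$, the plan is to use (a) along a spanning path in $J$ so that all $p_i(\overline\lambda)$ with $i\in J$ may be replaced by a common value $p_j(\overline\lambda)$, again modulo terms killed by the hypothesis. The formula from (b) then factors as
\[ F_m^{(2)}(\overline\lambda,\theta) = p_j(\overline\lambda)\, \Bigl(\sum_{i\in J} c_i^2 - \sum_{\text{edges } i\text{--}i' \text{ in } J} c_i c_{i'}\Bigr)\,, \]
and a direct computation in the simply-laced root system identifies the bracketed factor with $\tfrac12\langle\theta,\theta\rangle$, using $\langle \alpha_i^\vee,\alpha_i^\vee\rangle=2$ and $\langle \alpha_i^\vee,\alpha_{i'}^\vee\rangle=-1$ on edges. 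Specializing to a coroot ($\langle\theta,\theta\rangle=2$) yields the final statement. \textbf{The main obstacle} is purely bookkeeping: the congruences ``$\bmod (n_i, n_j)$'' in (a) must be converted into exact equalities once substituted into (b) and (c). Each discrepancy is a polynomial multiple of some $n_i c_i c_j$ or $n_j c_i c_j$, and the hypothesis $n_i c_i=0$, applied componentwise, is calibrated to kill precisely these. Care is required in tracking ordered versus unordered sums over edges and in handling the asymmetric role of the two endpoints of each edge when invoking (a) along the spanning path.
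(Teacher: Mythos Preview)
Your proposal is correct and follows essentially the same approach as the paper: both proofs differentiate $\partial F_m/\partial b_i = p_i n_i$ once more (using $\partial n_i/\partial b_j = C_{ji}$), derive (a) from symmetry of mixed partials, obtain (b) by splitting the Hessian sum and killing the $n_i(\partial p_i/\partial b_j)$ terms via $n_i c_i=0$, and deduce (c) by noting that $n_i=0$ throughout the connected support $J$ so all $p_i$ there coincide exactly. Your explicit remark that $n_i c_i=0$ for each $i$ (rather than merely $\sum_i n_i c_i=0$) requires the additional positivity $n_i,c_i\geq 0$ is a point the paper leaves implicit.
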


\begin{proof}
	Clearly, $\frac{\partial}{\partial b_i}(n_i)=2$ and, for $j\neq i$,
	\begin{equation}
		\frac{\partial}{\partial b_j}(n_i)=\begin{cases} -1 & \mbox{if $i-j$ an edge,} \\ 0 & \mbox{otherwise.}\end{cases}.
	\end{equation}
	From the above, we have $\frac{\partial^2 F_m}{\partial b_i^2}=2 p_i + n_i \frac{\partial p_i}{\partial b_i}$ and
	\begin{equation}
		\frac{\partial^2 F_m}{\partial b_i\partial b_j} = \begin{cases} n_i \frac{\partial p_i}{\partial b_j}-p_i & \mbox{if $i-j$ an edge,} \\
              n_i \frac{\partial p_i}{\partial b_j}     & \mbox{otherwise.}\end{cases}.
	\end{equation}
	We see immediately that if $i$ is not connected to $j$ by an edge then $\frac{\partial^2 F_m}{\partial b_i\partial b_j}$ is divisible by $n_i$, so the $c_i c_j$ term can be omitted from the sum when $\langle\overline\lambda,\theta\rangle=0$.
	Furthermore, if $i-j$ is an edge then we have $\frac{\partial^2 F_m}{\partial b_i\partial b_j}\equiv -p_i\equiv -p_j\bmod (n_i,n_j)$, giving (a) and (b).
	For (c), we observe that if $i,j\in J$ and $i-j$ is an edge then (since $\langle \overline\lambda,\theta\rangle=0$) we have $n_i=n_j=0$ and $\frac{1}{2}\frac{\partial^2 F_n}{\partial b_i^2}=p_i=p_j=\frac{1}{2}\frac{\partial^2 F_m}{\partial b_j^2}=-\frac{\partial^2 F_m}{\partial b_i\partial b_j}$.
	By assumption, the indices $i,j,k,$ etc. in $J$ can be connected by edges $i-j$, $j-k$, etc. and we therefore obtain $F_m^{(2)}(\overline\lambda,\theta)=p_i(\overline\lambda)(c_i^2-c_ic_j+c_j^2-c_jc_k+c_k^2-\ldots) = p_i(\overline\lambda)\langle \theta,\theta\rangle$.
	The final assertion follows from the fact that $\langle\alpha^\vee,\alpha^\vee\rangle=2$ for any coroot $\alpha^\vee$.
\end{proof}

The above observations apply to $F_m$ for any $m>1$.
Note that if $m=2$ then equality of the second order mixed derivatives and the connectedness of the Dynkin diagram imply that $p_i=p_j$ for all $i,j$.
(It is straightforward to calculate that $p_i=60$.)
The observations in the proof of Lemma \ref{quadlem} are also quite useful when $m=3$.

\begin{lemma}\label{n=3lem}
	Assume $m=3$.
	If $i$ is not connected to $j$ via an edge then $\frac{\partial p_i}{\partial b_j}=\frac{\partial p_j}{\partial b_i}=0$.
\end{lemma}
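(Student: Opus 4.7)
The plan is to deduce both equalities from symmetry of mixed partial derivatives, combined with the fact that for $m=3$ the polynomial $p_i$ is \emph{linear} in the $b_k$'s.

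First I would record the degree count. Since $F_3$ is homogeneous of degree $3$ in the $b_k$'s, $\partial F_3/\partial b_i$ is homogeneous of degree $2$; the divisibility $\partial F_3/\partial b_i=n_i p_i$ established in the corollary (with $n_i=\sum_k C_{ik}b_k$ a homogeneous linear form, where $C$ is the Cartan matrix) then forces $p_i$ to be a homogeneous polynomial of degree $1$. Writing $p_i=\sum_k a_{ik}b_k$, we have $\partial p_i/\partial b_j = a_{ij}\in\mathbb{Q}$ is a constant.

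Next I would apply $\partial^2 F_3/\partial b_i\partial b_j=\partial^2 F_3/\partial b_j\partial b_i$ using $\partial n_\ell/\partial b_m=C_{\ell m}$ (as in the proof of Lemma \ref{quadlem}). The product rule gives
\[
\frac{\partial^2 F_3}{\partial b_j\partial b_i}=C_{ij}\,p_i+n_i\,a_{ij},\qquad \frac{\partial^2 F_3}{\partial b_i\partial b_j}=C_{ji}\,p_j+n_j\,a_{ji}.
\]
When $i$ is not connected to $j$ by an edge (and $i\neq j$) we have $C_{ij}=C_{ji}=0$, so equality of mixed partials forces the identity
\[
n_i\,a_{ij}=n_j\,a_{ji}
\]
of polynomials in the $b_k$'s.

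The final step uses invertibility of the Cartan matrix: the linear forms $n_1,\ldots ,n_r$ form a basis of the space of linear forms (being the image of the $b_k$-basis under $C$, with $\det C=\pm 1$ in type $E_8$), so $n_i$ and $n_j$ are linearly independent for $i\neq j$. An identity $n_i\,a_{ij}=n_j\,a_{ji}$ with $a_{ij},a_{ji}$ scalars therefore forces $a_{ij}=a_{ji}=0$, which is exactly the conclusion $\partial p_i/\partial b_j=\partial p_j/\partial b_i=0$. I do not anticipate a serious obstacle: the only mildly delicate point is the legitimacy of writing $p_i$ as a genuine polynomial in $b_k$ (rather than in a localization), but this follows at once from unique factorization in $\mathbb{Q}[b_1,\ldots ,b_r]$ together with the irreducibility of the nonzero linear form $n_i$.
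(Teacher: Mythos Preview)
Your proposal is correct and is essentially the same argument as the paper's: the paper also invokes symmetry of mixed partials (via the formula $\frac{\partial^2 F_m}{\partial b_i\partial b_j}=n_i\frac{\partial p_i}{\partial b_j}$ established in the proof of Lemma~\ref{quadlem} when $i-j$ is not an edge), then uses that $p_i$ is linear for $m=3$ so the derivatives are scalars, and concludes from $n_i\cdot(\text{scalar})=n_j\cdot(\text{scalar})$ that both scalars vanish. Your write-up is just a slightly more explicit version of this, spelling out the linear-independence step via invertibility of the Cartan matrix.
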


\begin{proof}
	By the remark in the proof of the above Lemma, we have $n_i \frac{\partial p_i}{\partial b_j}=n_j\frac{\partial p_j}{\partial b_i}$ as polynomials.
	Since $p_i$ is linear, it follows that $\frac{\partial p_i}{\partial b_j}$ and $\frac{\partial p_j}{\partial b_i}$ are scalars, hence they must both be zero.
\end{proof}

The previous two lemmas suffice to determine (up to a common scalar multiple) the polynomials $p_i$ modulo $n_i$.
Although computer verification of the following Lemma is very straightforward, we have retained the conceptual proof because the same argument can be applied to an arbitrary simply-laced Lie algebra.

\begin{lemma}\label{F3analysis}
	In the case $m=3$, we have $p_i\equiv 12\, b_6$ for $i=6,3',4',5$ and:
	\begin{align}
		 & p_{2'}\equiv 12 (b_6+n_{4'}) \bmod n_{2'},\;\; p_4\equiv 12(b_6+n_5)\bmod n_4\ ,      \\
		 & p_3\equiv 12 (b_6+n_5+2n_4)\bmod n_3,\;\; p_2\equiv 12(b_6+n_5+2n_4+3n_3)\bmod n_2\ .
	\end{align}
\end{lemma}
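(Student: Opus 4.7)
My plan splits into two stages. First, I would establish the base case $p_6 \equiv 12\,b_6 \pmod{n_6}$; second, I would propagate it to the remaining nodes using Lemmas \ref{quadlem}(a) and \ref{n=3lem}.

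For the base case, I unpack the definition: $p_6 n_6 = \partial F_3/\partial b_6 = 3\sum_{\alpha>0}\overline\lambda(\alpha)^2\alpha_6^\vee(\alpha)$. The simple reflection $s_{\alpha_6}$ permutes $\Phi^+\setminus\{\alpha_6\}$ with orbits consisting either of singletons $\{\alpha\}$ (when $\alpha_6^\vee(\alpha)=0$) or doubletons $\{\alpha,\alpha+\alpha_6\}$ (when $\alpha_6^\vee(\alpha)=-1$). Each doubleton contributes $\overline\lambda(\alpha+\alpha_6)^2-\overline\lambda(\alpha)^2 = 2n_6\overline\lambda(\alpha)+n_6^2$, and together with the diagonal term from $\alpha=\alpha_6$ this will give
\[
p_6 \equiv 6\sum_{\text{doubletons}}\overline\lambda(\alpha)\pmod{n_6}.
\]
When $n_6=0$, $\overline\lambda$ is $s_{\alpha_6}$-invariant, so the two elements of each doubleton have equal $\overline\lambda$-value, and the sum reduces to $\overline\lambda(\rho-\rho')$, where $\rho'$ denotes the Weyl vector of the $E_7$ root subsystem orthogonal to $\alpha_6$. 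To close the base case I would then verify the combinatorial identity $\overline\lambda(\rho-\rho')\equiv 2\,b_6 \pmod{n_6}$ inside $E_8$, either by expressing $\rho-\rho'$ in the $E_8$ simple-coroot basis, or by evaluating both sides on a spanning family of test coweights (e.g.\ $\overline\lambda=\varpi_6^\vee$).

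For the propagation stage, fix $i\neq 6$. By Lemma \ref{n=3lem}, $p_i$ is linear in $\{b_j : j=i \text{ or } j\sim i\}$; by Lemma \ref{quadlem}(a), for every edge $i\sim j$ the difference $p_i-p_j$ lies in $(n_i,n_j)$. I would work outward from node $6$ along each branch of the $E_8$ Dynkin diagram, parametrizing $p_i = p_{\mathrm{parent}(i)}+A_i n_i+B_i n_{\mathrm{parent}(i)}$ and solving for the scalars $A_i,B_i$ by demanding that $p_i$ depend only on its allowed variables. This will pin down $p_i$ uniquely modulo $n_i$; recursing along the three branches $6\to 3'$, $6\to 4'\to 2'$ and $6\to 5\to 4\to 3\to 2$ should produce precisely the claimed reduced forms, with the successive correction terms $n_{4'}$, $n_5$, $n_5+2n_4$ and $n_5+2n_4+3n_3$ emerging automatically as forced constants in the linear algebra.

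The principal obstacle will be the base case: a conceptual derivation of $p_6\equiv 12\,b_6\pmod{n_6}$ requires identifying $\rho-\rho'$ explicitly inside $E_8$, which depends on the Borel--de Siebenthal-style description of the orthogonal $E_7\subset E_8$ subsystem. Once this identity is in hand (e.g.\ checked by GAP, as the authors do), the propagation step becomes entirely mechanical and will run verbatim for any simply-laced Lie algebra.
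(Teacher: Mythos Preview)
Your proposal is correct, and the propagation stage is essentially the paper's argument. The genuine difference lies in how you handle the base node $6$.

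The paper never computes $p_6$ from the definition. Instead it exploits that $6$ is the \emph{trivalent} node: by Lemma~\ref{n=3lem}, $p_6$ is a linear combination of $b_5,b_6,b_{3'},b_{4'}$, while $p_{3'}$ involves only $b_{3'},b_6$ and $p_{4'}$ only $b_{4'},b_6,b_{2'}$. The congruences $p_6\equiv p_{3'}\bmod(n_6,n_{3'})$ and $p_6\equiv p_{4'}\bmod(n_6,n_{4'})$ force the coefficients of $b_5,b_{3'},b_{4'}$ in $p_6$ to all be equal; writing $p_6=\xi(b_5+b_{3'}+b_{4'})+\eta b_6=-\xi n_6+(\eta+2\xi)b_6$ and comparing once more with $p_{3'}$ yields $\xi=0$, so $p_6=\eta b_6$ \emph{exactly} (not just modulo $n_6$). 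The constant is then fixed by the one-line observation $F_3(\alpha_6^\vee)=8$, giving $\eta=12$. No root data beyond the shape of the Dynkin diagram enters.

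Your route instead computes $p_6\bmod n_6$ directly via $\alpha_6$-strings and reduces it to $6\,\overline\lambda(\rho-\rho')$, which is correct but leaves you with the $E_8$-specific task of identifying $\rho-\rho'$ for the orthogonal $E_7$ (or checking the identity $\overline\lambda(\rho-\rho')\equiv 2b_6$ on a spanning set). This works, and generalises in a different direction (it applies at any node, not just the branch node), but it trades the paper's structural shortcut for a computation. One small correction: the authors do \emph{not} invoke GAP for this lemma---the proof is entirely conceptual; GAP is used only for the degree-$5$ analogue (Lemma~\ref{F5lem}).
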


\begin{proof}
	By Lemma \ref{n=3lem}, $p_i$ depends only on $b_i$ and the $b_j$ such that $i-j$ is an edge.
	It follows in particular that $p_{3'}$ is a linear combination of $b_{3'}$ and $b_6$, and $p_6$ is a linear combination of $b_6,b_{3'},b_{4'},b_5$.
	Since $p_6\equiv p_{3'}\bmod (n_{3'},n_6)$, then $p_6$ must be of the form $\xi b_5+\xi b_{4'}+\eta b_6+\nu b_{3'}$.
	By exactly the same argument using $p_{4'}$, we must have $\nu=\xi$.
	Hence $p_6=\xi(b_5+b_{4'}+b_{3'})+\eta b_6=-\xi n_6+(\eta+2\xi)b_6$.
	Writing $p_{3'}=ya_{3'}+zb_6$ similarly, we obtain (by an equality above) $zn_{3'}-yb_{3'}-zb_6 = \xi n_6-\xi (b_5+b_{4'}+b_{3'})-\eta b_6$, so $\xi=0$.
	Thus $p_6=\eta b_6$.
	Subject to calculation of $\eta$, our previous observations now allow us to determine all the coefficients $p_i$ modulo $n_i$.
	In particular, $p_{3'}$ is congruent to $\eta b_6$ modulo $(n_{3'},n_6)$, but is also a linear combination of $b_{3'},b_6$, hence we must have $p_{3'}\equiv \eta b_6\bmod n_{3'}$.
	Similarly, $p_{4'}\equiv \eta b_6\bmod (n_{4'})$ and $p_5\equiv \eta b_5\bmod (n_5)$.
	Now, $p_{2'}$ is congruent to $\eta b_6$ modulo $(n_{2'},n_{4'})$; the only linear combination of $b_{2'},b_{4'}$ which has this property is $\eta(b_6+n_{4'})=\eta(2b_{4'}-b_{2'})$.
	Similarly, we must have $p_4\equiv \eta(b_6+n_5)\equiv \eta(2b_5-b_4)\bmod n_4$.
	With the same argument, we obtain $p_3\equiv \eta(b_6+n_5+2n_4)\equiv \eta(3b_4-2b_3)\bmod n_3$ and $p_2\equiv \eta(b_6+n_5+2n_4+3n_3)\equiv \eta(4b_3-3b_2)\bmod n_2$.

	It only remains to determine $\eta$.
	It follows from our earlier observations that $\frac{\partial^3 F_3}{\partial b_6^3}=4\frac{\partial p_6}{\partial b_6}=4\eta$.
	Hence the coefficient of $c_6^3$ in $F_3(\theta)$ is $\frac{2}{3}\eta$.
	But $F_3(\alpha_6^\vee)=8$, hence $\eta=12$.
\end{proof}

\begin{corollary}\label{quadcor}
	If $\overline\lambda(\beta)=0$, then $F_3^{(2)}(\overline\lambda,\beta^\vee)\geq 12b_6$, with equality if the support of $\beta$ contains at least one of $6,5,4',3'$.
\end{corollary}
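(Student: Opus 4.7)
The plan is to reduce the statement directly to Lemma~\ref{quadlem}(c) combined with the explicit formulas for the $p_i$ furnished by Lemma~\ref{F3analysis}. First I observe that since $\beta$ is a positive root, its support $J$ is a connected subset of the finite Dynkin diagram and $\beta^\vee$ is a coroot, so Lemma~\ref{quadlem}(c) applies provided $\langle\overline\lambda,\beta^\vee\rangle=0$. But this inner product equals $\overline\lambda(\beta)$, which is zero by hypothesis; moreover, writing $\beta=\sum_{i\in J} d_i\alpha_i$ with $d_i>0$, the identity $0=\overline\lambda(\beta)=\sum_i n_i d_i$ together with $n_i,d_i\geq 0$ forces $n_j=0$ for every $j\in J$. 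Lemma~\ref{quadlem}(c) therefore gives
\begin{equation}
F_3^{(2)}(\overline\lambda,\beta^\vee)=p_j(\overline\lambda)\quad \text{for any }j\in J,
\end{equation}
a value which is consistent across choices of $j$ by Lemma~\ref{quadlem}(a) (since $n_j=0$ for all $j\in J$).

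For the equality case, suppose $J\cap\{6,5,4',3'\}$ is nonempty and pick $j$ in this intersection. Lemma~\ref{F3analysis} states $p_j\equiv 12\,b_6\pmod{n_j}$, and since $n_j=0$ this becomes the identity $p_j(\overline\lambda)=12\,b_6$. This immediately gives $F_3^{(2)}(\overline\lambda,\beta^\vee)=12\,b_6$.

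For the lower bound in the remaining cases, the support $J$ avoids $\{6,5,4',3'\}$, so $J\subseteq\{2,3,4,2'\}$. Since $2'$ is not adjacent to $4$ in the $E_8$ Dynkin diagram (the only neighbor of $2'$ is $4'$), connectedness of $J$ forces either $J\subseteq\{2,3,4\}$ or $J=\{2'\}$. In the latter case, taking $j=2'$ and using $n_{2'}=0$ in Lemma~\ref{F3analysis} gives $p_{2'}(\overline\lambda)=12(b_6+n_{4'})\geq 12\,b_6$, since $n_{4'}\geq 0$. In the former case, pick any $j\in J$ and apply the corresponding congruence from Lemma~\ref{F3analysis}: reducing modulo $n_j=0$ shows that $p_j(\overline\lambda)$ equals $12\,b_6$ plus a nonnegative integer linear combination of $n_3,n_4,n_5\geq 0$, hence $\geq 12\,b_6$. (Crucially, $b_6\geq 0$ holds because the inverse Cartan matrix of $E_8$ has nonnegative entries, so each coefficient $b_i$ of $\overline\lambda=\sum_i b_i\alpha_i^\vee$ is a nonnegative rational combination of the $n_i\geq 0$.)

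There is no real obstacle here: everything reduces to bookkeeping once Lemma~\ref{quadlem}(c) has been invoked. The only subtlety, already handled by Lemma~\ref{quadlem}(a), is the well-definedness of $p_j(\overline\lambda)$ across different $j\in J$, which is why one may freely choose $j$ in $J\cap\{6,5,4',3'\}$ whenever that intersection is nonempty, and otherwise any $j\in J$ suffices.
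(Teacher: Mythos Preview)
Your proof is correct and follows essentially the same approach as the paper: the paper's proof is a one-line appeal to Lemma~\ref{quadlem} and Lemma~\ref{F3analysis} (the printed text has a typo, citing the corollary itself rather than Lemma~\ref{F3analysis}), and you have simply spelled out the details of that appeal. One minor remark: the parenthetical about $b_6\geq 0$ is unnecessary, since your argument already establishes $p_j(\overline\lambda)=12\,b_6+(\text{nonnegative terms})$, which gives the inequality $\geq 12\,b_6$ regardless of the sign of $b_6$.
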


Note that the condition in the Corollary fails precisely when $\beta=\alpha_{2'}$ or $\beta$ is in the root subsystem generated by $\alpha_2,\alpha_3,\alpha_4$.

\begin{proof}
	This follows immediately from Lemma \ref{quadlem} and Cor. \ref{quadcor}.
\end{proof}

Cor. 5.6 completes our analysis of $F_3$.
For $F_5$, we will require a certain lower bound on $F_5^{(2)}(\overline\lambda,\beta^\vee)$.
In theory, a similar analysis to Lemma \ref{F3analysis} could be applied to determine $F_5^{(2)}(\overline\lambda,\theta)$ as a quadratic polynomial in the $c_i$.
By Lemma \ref{quadlem}, we only need to find $p_i(\overline\lambda)$ modulo $n_i$ for each $i$.
The calculations involved are rather intricate, hence we turn to computer verification for the following result.
Recall that
\begin{equation}
	k=\sum_{i=1}^6 in_i+2n_{2'}+3n_{3'}+4n_{4'}=n_1+\langle\overline\lambda,\varpi_2^\vee\rangle=n_1+b_2\ .
\end{equation}
We define various integers $\tau_{i,j}$, as follows.
For fixed $i$, the number $r(i)$ of such integers is $i-2$ if $i$ is an unprimed integer, $5$ if $i=3',4'$ and $6$ if $i=2'$.
For $2\leq i\leq 6$ we have
\begin{equation}
	\tau_{i,1}=\sum_{2\leq j< i} n_j\ ,\quad \tau_{i,2}=\sum_{3\leq j <i} n_j\ ,\quad\ldots \quad, \tau_{i,i-2}=n_{i-1}\ .
\end{equation}
Similarly,
\begin{equation}
	\tau_{3',1}=\sum_{j=2}^6 n_j\ ,\quad \tau_{3',2}=\sum_{j=3}^6 n_j\ ,\quad\ldots ,\tau_{3',5}=n_6
\end{equation}
and $\tau_{4',j}=\tau_{3',j}$ for all $j$. Finally, $\tau_{2',j}=\tau_{3',j}+n_{3'}+n_{4'}$ for $1\leq j\leq 5$ and $\tau_{2',6}=n_{4'}$.
Then we have:

\begin{lemma}\label{F5lem}
	Assume $m=5$.
	Then we have:
	\begin{equation}
		\frac{1}{20}p_i(\overline\lambda)\equiv -2(b_2-\tau_{i,1})^3+6(b_2-\tau_{i,1})\langle\overline\lambda,\overline\lambda\rangle+2\sum_{j=1}^{r(i)} \tau_{i,j}^3 \mod (n_i)\ .
	\end{equation}
\end{lemma}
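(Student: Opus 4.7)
Proof sketch. The strategy is to adapt the proof of Lemma \ref{F3analysis} to the cubic setting (for $m=5$, $p_i$ is a cubic polynomial in the $b_j$). Two constraints follow from equality of mixed partials of $F_5$: (i) for an edge $i-j$, Lemma \ref{quadlem}(a) gives $p_i \equiv p_j \pmod{(n_i, n_j)}$; (ii) for non-adjacent $i,j$, the identity $n_i (\partial p_i/\partial b_j) = n_j (\partial p_j/\partial b_i)$, combined with the coprimality of $n_i$ and $n_j$ as linear forms in the $b$'s, forces $\partial p_i/\partial b_j \in (n_j)$. The second is the natural refinement of Lemma \ref{n=3lem} valid at $m=5$: modulo $n_j$, the polynomial $p_i$ is independent of $b_j$ whenever $i$ and $j$ are non-adjacent, even though $\partial p_i/\partial b_j$ is no longer a scalar.

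The plan is then to propagate these constraints along the affine Dynkin diagram, starting from a central node (say $6$) and working outward along the chain $2-3-4-5-6$ and into the branches through $3'$, $4'$, $2'$. Modulo $n_i$, the Cartan relation $n_i=\sum_j C_{ij} b_j$ lets one eliminate a variable, and iterating along the path from $2$ to $i$ introduces the shifted quantity $b_2-\tau_{i,1}$ together with the accumulated partial sums $\tau_{i,j}$ as the natural coordinates: these are, essentially, the values of $\overline\lambda$ on positive roots in the $A_{i-1}$ sub-root system sitting along the chain. The structure of the right-hand side --- a cubic $-2(b_2-\tau_{i,1})^3$, a cross-term $6(b_2-\tau_{i,1})\langle\overline\lambda,\overline\lambda\rangle$, and an additive correction $2\sum_j \tau_{i,j}^3$ --- then reflects a decomposition of $\partial F_5/\partial b_i$ into contributions from roots inside the sub-root system (producing the sum-of-cubes term), roots orthogonal to the chain (contributing through $\langle\overline\lambda,\overline\lambda\rangle$, by a Lemma \ref{quadlem}(c)-type argument), and mixing roots (producing the cubic in the shifted variable). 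The overall factor $20$ and relative coefficients $-2, 6, 2$ are then fixed by evaluating on a single coroot, in the style of $F_3(\alpha_6^\vee)=8$ used to pin down $\eta=12$ in Lemma \ref{F3analysis}.

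The main obstacle is computational rather than conceptual. $E_8$ has $120$ positive roots, the analysis must be repeated for each of the nine nodes $i\in\Delta$, and although the propagation constraints are very restrictive, compressing them into the compact closed form above requires substantial bookkeeping. I would therefore verify the congruence symbolically (as the authors do in GAP): expand $F_5(\sum_j b_j\alpha_j^\vee)$ using an explicit list of $E_8$ positive roots, differentiate with respect to $b_i$, divide by $n_i$, and reduce modulo $n_i$; the comparison with the claimed right-hand side is then a finite polynomial identity. The conceptual framework sketched above does not by itself produce the closed form, but it explains why such a form must exist --- it is forced by the edge- and non-edge propagation constraints together with one scalar normalization --- and dictates the choice of variables $(b_2-\tau_{i,1},\tau_{i,j},\langle\overline\lambda,\overline\lambda\rangle)$ in which the answer is compact.
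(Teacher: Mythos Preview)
Your proposal is correct and, at its core, takes exactly the same approach as the paper: the paper's proof consists of the single sentence ``This can be verified by a computational proof, differentiating $F_5$ and dividing by $n_i$,'' which is precisely the symbolic check you describe in your final paragraph. Your additional conceptual scaffolding (propagation of the edge and non-edge constraints, the emergence of the $\tau_{i,j}$ variables) is plausible motivation that the paper omits, and you are right to concede that it does not by itself yield the closed form---but since the paper makes no pretense of doing anything beyond the computer check either, there is no gap.
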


\begin{proof}
	This can be verified by a computational proof, differentiating $F_5$ and dividing by $n_i$.
\end{proof}

\begin{corollary}\label{F5cor}
	For any root $\beta$ with $\overline\lambda(\beta)=0$, we have $F_5^{(2)}(\overline\lambda,\beta^\vee)+20(2b_2^3-6b_2\langle\overline\lambda,\overline\lambda\rangle)\geq 0$.
\end{corollary}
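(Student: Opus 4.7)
The plan is to apply Lemmas~\ref{quadlem}(c) and~\ref{F5lem} to reduce the corollary to an elementary cubic inequality in the coefficients of $\overline\lambda$. First, since $\beta$ is a positive root, writing $\beta=\sum_j d_j\alpha_j$ with $d_j\geq 0$, the hypothesis $\overline\lambda(\beta)=\sum_j d_j n_j=0$ combined with $n_j\geq 0$ forces $n_i=0$ for every $i$ in the (connected) support of $\beta$. Fixing any such $i$, Lemma~\ref{quadlem}(c) yields $F_5^{(2)}(\overline\lambda,\beta^\vee)=p_i(\overline\lambda)$, and Lemma~\ref{F5lem} becomes an exact equality rather than a congruence, since $n_i=0$:
\[
\tfrac{1}{20}\,p_i(\overline\lambda)= -2(b_2-\tau_{i,1})^3+6(b_2-\tau_{i,1})\langle\overline\lambda,\overline\lambda\rangle+2\sum_{j=1}^{r(i)}\tau_{i,j}^3\,.
\]

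Next, I would substitute this identity into the inequality of the corollary and expand via $b_2^3-(b_2-\tau_{i,1})^3=3b_2^2\tau_{i,1}-3b_2\tau_{i,1}^2+\tau_{i,1}^3$. After dividing by the overall factor of $40$, the claim is equivalent to
\[
3\tau_{i,1}\bigl[b_2(b_2-\tau_{i,1})-\langle\overline\lambda,\overline\lambda\rangle\bigr]+\tau_{i,1}^3+\sum_{j=1}^{r(i)}\tau_{i,j}^3\geq 0\,.
\]
When $\tau_{i,1}=0$ the inequality is immediate, because all $\tau_{i,j}$ are sub-sums of $\tau_{i,1}$ and hence vanish. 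The genuine content is the regime $\tau_{i,1}>0$, where one must bound $\langle\overline\lambda,\overline\lambda\rangle$ from above by $b_2(b_2-\tau_{i,1})$ up to a correction absorbed by the non-negative cubic tail $\tau_{i,1}^3+\sum_{j}\tau_{i,j}^3$.

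The hard part will be establishing this bound uniformly. My plan is to exploit the decomposition $\langle\overline\lambda,\overline\lambda\rangle=\sum_j n_j b_j$ and the vanishing of $n_j$ on $\operatorname{supp}(\beta)$, so that only indices outside the (connected) support of $\beta$ contribute. Combined with the identity $b_2=k-n_1$ and the pointwise domination of $\tau_{i,1}$ by $b_2$ coming from the inverse Cartan matrix of $E_8$ (each $n_j$ enters $b_2$ with coefficient equal to its dual Kac label but enters $\tau_{i,1}$ with coefficient at most one), one expects a uniform argument for ``interior'' nodes $i\in\{5,6,3',4'\}$, whose removal splits the $E_8$ Dynkin diagram into short legs and tightly constrains the roots $\beta$ that can have $i$ in their support. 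The boundary nodes $i\in\{2,3,4,2'\}$ allow only a small number of connected supports for $\beta$, and should be handled by a short case-by-case verification performed in GAP, exactly in the spirit of the computational proof of Lemma~\ref{F5lem}.
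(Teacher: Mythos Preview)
Your reduction is exactly the paper's: invoke Lemma~\ref{quadlem}(c) to replace $F_5^{(2)}(\overline\lambda,\beta^\vee)$ by $p_i(\overline\lambda)$ for any $i$ in the support of $\beta$, then use Lemma~\ref{F5lem} (which becomes an equality since $n_i=0$) to obtain the explicit cubic expression $G_i$, and finally verify $G_i+2b_2^3-6b_2\langle\overline\lambda,\overline\lambda\rangle\geq 0$. Your algebraic expansion to the form $3\tau_{i,1}[b_2(b_2-\tau_{i,1})-\langle\overline\lambda,\overline\lambda\rangle]+\tau_{i,1}^3+\sum_j\tau_{i,j}^3\geq 0$ is correct.

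Where you diverge is in the endgame. You aim for a bound $\langle\overline\lambda,\overline\lambda\rangle\leq b_2(b_2-\tau_{i,1})+(\text{correction})$ via $\langle\overline\lambda,\overline\lambda\rangle=\sum_j n_j b_j$, hoping this goes through uniformly for ``interior'' $i$ and by GAP for ``boundary'' $i$. The paper instead makes a single structural observation that short-circuits your case split: $G_i$ is \emph{non-decreasing} as $i$ moves along the Dynkin diagram from $2$ to $2'$. Since $\tau_{2,1}=0$ gives $G_2+2b_2^3-6b_2\langle\overline\lambda,\overline\lambda\rangle=0$ exactly (the inequality is sharp at $i=2$), monotonicity settles every other $i$ at once. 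Both proofs ultimately appeal to computer verification, but the paper's monotonicity reduces the check to a one-parameter family of differences $G_{i'}-G_i\geq 0$ along edges, rather than your node-by-node analysis; it also explains \emph{why} the boundary node $2$ is the extremal case rather than treating it as an exceptional one.
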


\begin{proof}
	By Lemma \ref{quadlem}(c), we only need to show that $\frac{1}{20}p_i(\overline\lambda)+2b_2^3-6b_2\langle\overline\lambda,\overline\lambda\rangle\geq 0$ for each $i$.
	By Lemma \ref{F5lem}, it is enough to prove that $G_i+2b_2^3-6b_2\langle\overline\lambda,\overline\lambda\rangle\geq 0$, where
	\begin{equation}
		G_i=-2(b_2-\tau_{i,1})^3+6(b_2-\tau_{i,1})\langle\overline\lambda,\overline\lambda\rangle+2\sum_{j=1}^r\tau_{i,j}^3\ .
	\end{equation}
	This is a straightforward computer verification.
	In fact the following stronger property holds: $G_i$ is non-decreasing as $i$ moves along the Dynkin diagram from $i=2$ to $i=2'$.
\end{proof}

Note that $p_2(\overline\lambda)+2b_2^3-6b_2\langle\overline\lambda,\overline\lambda\rangle=0$, so the inequality in the Corollary is sharp.
We need one final observation about $F_5$.

\begin{lemma}\label{F54lem}
	For any root $\beta$ with $\overline\lambda(\beta)=0$, we have $F_5^{(3)}(\overline\lambda,\beta^\vee)=0$ and $F_5^{(4)}(\overline\lambda,\beta^\vee)=\frac{5}{3}F_3^{(2)}(\overline\lambda,\beta^\vee)$.
\end{lemma}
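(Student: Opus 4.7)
The plan is to reduce both identities to a straightforward application of the simply-laced property of $E_8$ together with the first lemma of section \ref{sub:proof} (the vanishing of $F_m^{(1)}$ when $\langle\overline\lambda,\theta\rangle=0$). The crucial combinatorial observation I will exploit is: in a simply-laced root system, for any roots $\alpha,\beta$ we have $\beta^\vee(\alpha)=\langle\beta,\alpha\rangle\in\{-2,-1,0,1,2\}$, with $\pm 2$ occurring only for $\alpha=\pm\beta$. In particular, for positive $\alpha\neq\beta$ the value $\beta^\vee(\alpha)$ lies in $\{-1,0,1\}$, so $\beta^\vee(\alpha)^{2j+1}=\beta^\vee(\alpha)$ and $\beta^\vee(\alpha)^{2j}=\beta^\vee(\alpha)^2$ for any $j\geq 1$.

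For the first assertion, I would unfold the definition to $F_5^{(3)}(\overline\lambda,\beta^\vee)=10\sum_{\alpha>0}\overline\lambda(\alpha)^2\beta^\vee(\alpha)^3$ and split the sum according to whether $\alpha=\beta$ or $\alpha\neq\beta$. The term $\alpha=\beta$ carries the factor $\overline\lambda(\beta)^2=0$ by hypothesis. On the remaining positive roots the exponent $3$ collapses to $1$ by the simply-laced observation, yielding
\begin{equation}
F_5^{(3)}(\overline\lambda,\beta^\vee)=10\sum_{\alpha>0}\overline\lambda(\alpha)^2\beta^\vee(\alpha)=\tfrac{10}{3}F_3^{(1)}(\overline\lambda,\beta^\vee).
\end{equation}
Since $\langle\overline\lambda,\beta^\vee\rangle=\overline\lambda(\beta)=0$, the first lemma of section \ref{sub:proof} applies with $m=3$, $\theta=\beta^\vee$, giving $F_3^{(1)}(\overline\lambda,\beta^\vee)=0$ and hence $F_5^{(3)}(\overline\lambda,\beta^\vee)=0$.

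For the second assertion, the same split is used: $F_5^{(4)}(\overline\lambda,\beta^\vee)=5\sum_{\alpha>0}\overline\lambda(\alpha)\beta^\vee(\alpha)^4$. The contribution from $\alpha=\beta$ again vanishes thanks to $\overline\lambda(\beta)=0$, and on the remaining positive roots $\beta^\vee(\alpha)^4=\beta^\vee(\alpha)^2$ by the simply-laced observation. Hence
\begin{equation}
F_5^{(4)}(\overline\lambda,\beta^\vee)=5\sum_{\alpha>0}\overline\lambda(\alpha)\beta^\vee(\alpha)^2=\tfrac{5}{3}\cdot 3\sum_{\alpha>0}\overline\lambda(\alpha)\beta^\vee(\alpha)^2=\tfrac{5}{3}F_3^{(2)}(\overline\lambda,\beta^\vee),
\end{equation}
which is exactly the claimed equality.

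There is essentially no obstacle here: both statements are purely algebraic consequences of the simply-laced geometry of $\Phi$ plus the orthogonality hypothesis $\overline\lambda(\beta)=0$. The only point that needs attention is correctly isolating the $\alpha=\beta$ term, which is where the higher odd/even powers of $\beta^\vee(\alpha)$ fail to collapse; everywhere else in the sum these collapses are what make the two identities equivalent to the already-established $m=3$ case. No computer verification or detailed case analysis is required for this step.
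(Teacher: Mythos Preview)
Your proof is correct. Both you and the paper rely on the same simply-laced observation that $\beta^\vee(\alpha)\in\{-1,0,1\}$ for positive $\alpha\neq\beta$, and the $\alpha=\beta$ term is killed by $\overline\lambda(\beta)=0$; the identity $F_5^{(4)}=\tfrac{5}{3}F_3^{(2)}$ is then immediate in both arguments.

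The one point of divergence is how $F_5^{(3)}=0$ is obtained. The paper groups the positive roots with $\beta^\vee(\alpha)=\pm 1$ into chains $\{\alpha,\alpha+\beta\}$ with $\overline\lambda(\alpha)=\overline\lambda(\alpha+\beta)$, so the two contributions cancel for odd exponent on $\beta^\vee(\alpha)$. You instead collapse $\beta^\vee(\alpha)^3$ to $\beta^\vee(\alpha)$ and recognize the resulting sum as $\tfrac{10}{3}F_3^{(1)}(\overline\lambda,\beta^\vee)$, then invoke the earlier lemma that $F_m^{(1)}(\overline\lambda,\theta)=0$ whenever $\langle\overline\lambda,\theta\rangle=0$. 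Since that lemma is itself proved by exactly the chain-pairing argument the paper repeats here, your version is a touch more economical: it reuses the established result rather than rerunning its proof.
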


\begin{proof}
	We observe that the positive roots $\alpha$ come in three classes:
	\begin{itemize}
		\item
		      we have $\beta^\vee(\alpha)=2$ if and only if $\alpha=\beta$, in which case $\overline\lambda(\beta)=0$;
		\item
		      the remaining roots with $\beta^\vee(\alpha)\neq 0$ come in pairs $\{ \alpha,\alpha+\beta\}$, where $\beta^\vee(\alpha)=-1$ and $\beta^\vee(\alpha+\beta)=+1$ (note that $\overline\lambda(\alpha+\beta)=\overline\lambda(\alpha)$);
		\item
		      otherwise, $\beta^\vee(\alpha)=0$.
	\end{itemize}
	Let $S_\beta$ be the set of positive roots satisfying $\beta^\vee(\alpha)=-1$.
	Then
	\begin{equation}
		\sum_{\alpha\in\Phi^+}\overline\lambda(\alpha)^{m-i}\beta^\vee(\alpha)^i = 2\,\sum_{\alpha\in S_\beta} \overline\lambda(\alpha)^{m-i}(1+(-1)^i)\ .
	\end{equation}
	In particular, we have $F_5^{(3)}(\overline\lambda,\beta^\vee)=0$ and
	\begin{equation}
		F_3^{(2)}(\overline\lambda,\beta^\vee)=6\sum_{\alpha\in S_\beta} \overline\lambda(\alpha), \quad F_5^{(4)}(\overline\lambda,\beta^\vee)=10\sum_{\alpha\in S_\beta}\overline\lambda(\alpha)\ ,
	\end{equation}
	which completes our proof.
\end{proof}

\begin{corollary}\label{finalFmcor}
	Suppose further that the support of $\beta^\vee$ contains at least one of $6,5,4',3'$.
	Then we have $F_3(\overline\lambda+\beta^\vee)=F_3(\overline\lambda)+12\, b_6+8$ and
	\begin{equation}
		F_5(\overline\lambda+\beta^\vee)\geq F_5(\overline\lambda)+20(6b_2\langle\overline\lambda,\overline\lambda\rangle-2b_2^3)+20 b_6+32\ .
	\end{equation}
\end{corollary}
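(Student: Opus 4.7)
The statement is a direct assembly of the preceding analysis of the bihomogeneous forms $F_m^{(i)}$. The plan is to expand
\begin{equation}
F_m(\overline\lambda+\beta^\vee)=\sum_{i=0}^{m}F_m^{(i)}(\overline\lambda,\beta^\vee)
\end{equation}
for $m=3,5$ and identify each piece using the lemmas already established, exploiting throughout the hypothesis $\overline\lambda(\beta)=0$, which is equivalent to $\langle\overline\lambda,\beta^\vee\rangle=0$.

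First I would treat $m=3$. The term $F_3^{(0)}$ is just $F_3(\overline\lambda)$ by definition. Since $\langle\overline\lambda,\beta^\vee\rangle=0$, the first of the three general lemmas (the one showing $F_m^{(1)}$ vanishes when $\overline\lambda$ and $\theta$ are orthogonal) gives $F_3^{(1)}(\overline\lambda,\beta^\vee)=0$. The quadratic piece $F_3^{(2)}(\overline\lambda,\beta^\vee)$ equals $12\,b_6$ by Corollary \ref{quadcor}, which is exactly where the support hypothesis on $\beta^\vee$ gets used. Finally $F_3^{(3)}(\overline\lambda,\beta^\vee)=F_3(\beta^\vee)$, and a short direct computation using the three classes of positive roots in the proof of Lemma \ref{F54lem} (pairs $\{\alpha,\alpha+\beta\}$ cancel for odd $m$, the unique root $\alpha=\beta$ contributes $\beta^\vee(\beta)^m=2^m$) gives $F_3(\beta^\vee)=8$. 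Adding these yields the claimed equality.

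For $m=5$ the same expansion produces six terms. By the orthogonality lemma $F_5^{(1)}=0$; by Lemma \ref{F54lem} also $F_5^{(3)}=0$; the same lemma gives $F_5^{(4)}(\overline\lambda,\beta^\vee)=\tfrac{5}{3}F_3^{(2)}(\overline\lambda,\beta^\vee)$, which under the support hypothesis equals $\tfrac{5}{3}\cdot 12\,b_6=20\,b_6$; the top term is $F_5^{(5)}(\overline\lambda,\beta^\vee)=F_5(\beta^\vee)=2^5=32$ by the same three-class argument as above. The only non-equality comes from the quadratic piece: Corollary \ref{F5cor} provides
\begin{equation}
F_5^{(2)}(\overline\lambda,\beta^\vee)\geq 20\bigl(6b_2\langle\overline\lambda,\overline\lambda\rangle-2b_2^3\bigr)\,.
\end{equation}
Summing all six contributions yields the desired inequality for $F_5$.

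There is essentially no obstacle; the work has already been done in Lemmas \ref{quadlem}, \ref{F3analysis}, \ref{F5lem}, \ref{F54lem} and their corollaries. The only point that deserves a moment of care is to note that the support hypothesis on $\beta^\vee$ (that it contains one of $6,5,4',3'$) is used twice: once in Corollary \ref{quadcor} to get the equality $F_3^{(2)}(\overline\lambda,\beta^\vee)=12\,b_6$, and once more when invoking Lemma \ref{F54lem} to convert $F_5^{(4)}$ into $\tfrac{5}{3}F_3^{(2)}$ and thereby obtain the clean coefficient $20\,b_6$ in the final bound. One should also check that the computation $F_m(\beta^\vee)=2^m$ for $m$ odd (giving $8$ and $32$) does not need the support hypothesis, which it does not, so these numerical constants are unconditional.
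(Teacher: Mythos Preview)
Your proposal is correct and follows exactly the same approach as the paper, which simply says ``It is straightforward to check $F_3^{(3)}(\overline\lambda,\beta^\vee)=F_3(\beta^\vee)=8$, since (using the argument in the proof of Lemma \ref{F54lem}) the only class of positive roots which has a non-zero contribution to the sum is $\alpha=\beta$. Similarly, $F_5(\beta^\vee)=32$.'' You have merely made explicit the assembly of the pieces $F_m^{(i)}$ that the paper leaves implicit.
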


\begin{proof}
	It is straightforward to check $F_3^{(3)}(\overline\lambda,\beta^\vee)=F_3(\beta^\vee)=8$, since (using the argument in the proof of Lemma \ref{F54lem}) the only class of positive roots which has a non-zero contribution to the sum is $\alpha=\beta$.
	Similarly, $F_5(\beta^\vee)=32$.
\end{proof}

We use the notation $b_i$ for brevity in the above formulas.
In the sketch of the proof of the $a$-theorem which follows, it will be important to bear in mind the equalities:
\begin{align}
	 & b_2=2n_{2'}+3n_{3'}+4n_{4'}+\sum_{i=2}^6 in_i=k-n_1\ ,                                                                      \\
	 & b_6=\langle \overline\lambda,\alpha_6^\vee\rangle = 10n_{2'}+15n_{3'}+20n_{4'}+\sum_2^6 6in_i = 5k-\sum_{i=1}^5 (6-i)n_i\ .
\end{align}

\subsubsection{Finalizing the proof}

Let $(\lambda,\mu=\lambda+\hat\alpha_I^\vee)$ where $I$ is an irreducible component of the subdiagram of zeros of $\lambda$.
Assume the support of $I$ contains at least one of $6,5,4',3'$, so that Cor. \ref{finalFmcor} applies.
We recall the requirement that $n+k-N_\rho\leq 0$; in particular, we have $n\leq 0$.
We here outline the uniform argument which establishes $\Delta a = a(\lambda)-a(\mu) >  0$ in these cases; the remaining (and indeed all) cases have been checked by computer.

Inspecting the $a$ anomaly (\ref{eq:a6danalytic}) and using Cor. \ref{finalFmcor}, we can rewrite it as
\begin{align}
	\Delta a = & \  \frac{3}{2}(h_I-1)+  \frac{32}{7}(6kn^2-6k^2 n-15n+2k^3+13k)  -\frac{192}{7}\left(\langle\overline\lambda,\overline\lambda\rangle+1\right)(k-n)\ + \nonumber \\
	           & -\frac{80}{7}b_6-\frac{40}{21}(h_I+2)  +\frac{8}{35} F_5^{(2)}(\overline\lambda,\hat\alpha_I^\vee)+\frac{32}{7}b_6+\frac{16}{35}(h_I+14)\ .
\end{align}
Subtracting the strictly positive term $\frac{11}{210}h_I+\frac{229}{210}$, it will suffice to show that the sum of the remaining terms is non-negative.
Multiplying by $\frac{7}{32}$, we obtain a quadratic polynomial in $n$ of the following form:
\begin{itemize}
	\item
	      the coefficient of $n^2$ is $6k$, hence is {positive};
	\item
	      the coefficient of $n$ is $-6(k^2-\langle\overline\lambda,\overline\lambda\rangle)+9$.
\end{itemize}
Note that the linear coefficient here is {always negative (but $n$ is non-positive)}.
In fact, we observe that $\langle\overline\lambda,\overline\lambda\rangle\leq k^2$.
Indeed, we have $k\geq \sum_{i\neq 1}in_i$ and
\begin{equation}
	\langle\overline\lambda,\overline\lambda\rangle = \sum_i \langle \varpi_i^\vee,\varpi_i^\vee\rangle n_i^2 + 2 \sum_{i\neq j} \langle \varpi_i^\vee,\varpi_j^\vee\rangle n_i n_j\ ,
\end{equation}
and one sees easily that $\langle \varpi_i^\vee,\varpi_j^\vee\rangle\leq ij$.
Adding the non-negative term $6kn^2-9(n+k-N_\rho)$, it will suffice to show that the following expression is non-negative:
\begin{equation}
	2k^3-6k\langle\overline\lambda,\overline\lambda\rangle-\frac{3}{2}b_6+16k+\frac{1}{20}F_5^{(2)}(\overline\lambda,\hat\alpha_I^\vee)-9N_\rho \ .
\end{equation}
Recalling that $k=b_2+n_1$, we note that this expression is increasing with $n_1$, hence we may assume $k=b_2$.
The sum of the cubic terms is $2b_2^3-6b_2\langle\overline\lambda,\overline\lambda\rangle+\frac{1}{20}F_5^{(2)}(\overline\lambda,\hat\alpha_I^\vee)$, which is non-negative by Cor. \ref{F5cor}.

It only remains to check that $\frac{3}{2}b_6+9N_\rho\leq 16b_2$.
But it can be easily verified that:
\begin{equation}
	32b_2-3b_6 = 46 n_2+60 n_3+74 n_4+88 n_5+102 n_6+51 n_{3'}+68 n_{4'}+34 n_{2'}\ ,
\end{equation}
from which the required bound follows immediately.

This proves the desired inequality in the cases where $\overline{\mu}=\overline\lambda+\hat\alpha_I^\vee$ and $I$ contains at least one of the nodes $3',4',6,5$.
It is worth remarking again that the sharp equality in Cor. \ref{F5cor} is exactly what was needed to deal with the cubic terms in $\Delta a$.
We used the computational algebra platform GAP to check in all cases (including these, and the Type \emph{ii}) cases $\overline\mu=\overline\lambda+\alpha_i^\vee$) that the bound $\Delta a > 0$ holds.

\section{Conclusions}
\label{sec:conc}

In this paper we have realized the hierarchy of Higgs branch RG flows between the A-type orbi-instantons as the Hasse diagram of the stratification of the double affine Grassmannian of $E_8$.  The strata that we have considered are given by the orbi-instantons (potentially together with a number of decoupled E-strings) at fixed $k$ but different holonomy at infinity $\rho_\infty$ and (potentially) number $N$ of full instantons (M5-branes); the transverse slices among them are the flavor Higgsings reducing the $E_8$ flavor symmetry of the UV SCFT on its Higgs branch. Exploiting the natural partial order defined on this Hasse diagram, we have proven analytically the $a$-theorem for all such RG flows.

Several extensions of this work can be considered.
In this work we only considered the case where $\Gamma_\text{ADE}$ is cyclic, i.e. of type A.
For $\Gamma_\text{ADE}$ of type D or E, the problem of classifying homomorphisms $\Gamma_\text{ADE} \to G$ is (in general) open.
The homomorphisms from $\Gamma_{E_8}$ to exceptional type $G$ have been classified in a series of papers by Frey \cite{Frey1998-ff,Frey1998-e6,Frey2001-dh}.
Some partial results on homomorphisms from $\Gamma_{D_5}$ and $\Gamma_{D_7}$ to $E_8$ were also obtained in \cite{Frey1998-ff}.
In subsequent work with Rudelius \cite{Frey:2018vpw}, a conjectural classification of homomorphisms $\Gamma_{\textrm D}$ to $E_8$ was given in terms of orbi-instantons (and their F-theory electric quivers),\footnote{The 3d magnetic quivers can likewise be constructed \cite{Cabrera:2019dob}.} leading to a partial order on such homomorphisms via RG flows.
However, it is unclear what is the correct mathematical language to understand the flows.

For binary dihedral groups, i.e. for $\Gamma$ of type D, a ``Kac-style'' classification of homomorphisms $\Gamma\rightarrow {\rm Aut}\, G$ is possible, using the theory of involutions of simple Lie algebras.
The idea is as follows: let $\lambda_{\rm Kac}$ be a Kac diagram with an associated homomorphism $\Gamma:{\mathbb Z}_k\rightarrow T\subset G$ into the standard maximal torus of $E_8$.
There is an element $w_0$ of the Weyl group which acts as $-1$ on the Cartan subalgebra (i.e. as $t\mapsto t^{-1}$ on $T$).
Furthermore, $w_0$ has a representative $n_0$ in the normalizer of $T$, and $n_0^2=1$.
It follows that $\Gamma$ extends to a homomorphism from the dihedral group
\begin{equation}
	\langle a,b : a^k=b^2=1\,, \  bab^{-1}=a^{-1}\rangle
\end{equation}
to $E_8$, sending $a$ to $\Gamma(1)$ and $b$ to $n_0$.
One can modify this construction by replacing $n_0$ by $n_0 g$ for any $g$ which commutes with $\Gamma(a)$, i.e. which lies in the zero subalgebra of $\lambda_{\rm Kac}$.
The challenge now is to determine which $g$ lead to homomorphisms from the binary dihedral group $\Gamma_{D_{k-2}}$ to $E_8$, and when two choices of $g$ lead to conjugate homomorphisms.
We will return to this (and its physical implications) in future work.

We remark here an obvious connection with unipotent orbits in $G$. (This connection was hinted at in various places in the physics literature, perhaps starting with \cite[Sec. 5]{Heckman:2018pqx}.) Recall that the conjugacy classes of such homomorphisms are in one-to-one correspondence with the homomorphisms $\SU(2)\to G$.
Specifically, for each unipotent element $x\in G$ there exists a homomorphism $\phi:\SU(2)\to G$ such that $\phi\left(\begin{smallmatrix} 1 & 1 \\ 0 & 1 \end{smallmatrix}\right)=u$, and any two such homomorphisms are conjugate by an element of the centralizer of $u$.
If $u\neq 1$ then $\phi$ is either injective or has kernel $\{ -I\}$.
Thus, any finite subgroup $\Gamma_\text{ADE}\subset\SU(2)$ has image $\phi(\Gamma_\text{ADE})\subset G$ which is isomorphic to $\Gamma_\text{ADE}$ or $\Gamma_\text{ADE}/\{ \pm I\}$.
For a fixed subgroup $\Gamma_\text{ADE}$, this construction (ranging over various unipotent orbits) produces various homomorphisms $\Gamma_\text{ADE}\to G$.
It is clear from Kac's classification there exist homomorphisms from finite cyclic groups $\Gamma$ to $G$ which cannot be extended to homomorphisms from $\SU(2)$ to $G$; by considering an involution which acts as $-1$ on a fixed Cartan subalgebra, this is also true for binary dihedral groups.
On the other hand, in Frey's classification it appears that all homomorphisms from the binary icosahedral group $I$ to $E_8$ arise as the restriction of a homomorphism $\SU(2)\to G$.
However, it is quite possible for two distinct homomorphisms $\phi_1,\phi_2: \SU(2)\to G$ to have the same restriction to a finite subgroup $\Gamma_\text{ADE}\subset\SU(2)$.
We mention for example that in Frey's classification \cite{Frey1998-ff,Frey2001-dh} there is exactly one homomorphism $I\to G$ whose image has trivial connected centralizer.
By a straightforward calculation involving standard results on centralizers \cite{lawther-testerman}, one observes that this holds for the restriction of the $\SU(2)$ for each of the following seven (distinguished) unipotent orbits: those with Bala--Carter labels $E_8$ (regular), $E_8(a_1)$ (subregular), $E_8(a_2)$, $E_8(a_3)$, $E_8(a_4)$, $E_8(b_5)$ and $E_8(a_7)$.
This complicates the relationship between unipotent orbits and homomorphisms $I\rightarrow G$: while one might expect such homomorphisms to be encapsulated in the set of unipotent elements (the unipotent cone), this is \emph{not} consistent with the repetition observed above.

Another direction worth exploring is the study  of the Coulomb branch of the magnetic quiver of 6d orbi-instantons as a quiver variety. The problem, as we mentioned in the introduction, is that the former is neither of finite nor affine type, so it is beyond reach using  results already available in the mathematics literature. However, given its shape (see \eqref{eq:genericmagquivinf}), a possible trick would be to ``embed'' in it a long quiver of type D to study (at least partially) the root space, which is needed to apply the results of \cite{Nakajima:2015txa}.

One final comment regards the study of the full Higgs branch of the orbi-instantons, by which we mean also including flavor Higgsings of the \emph{right} $[\SU(k)]$ factor in \eqref{eq:genelquivpreblow}, i.e. $k$-changing transitions. (Some early examples include \cite{Mekareeya:2017jgc,Fazzi:2022yca,Lawrie:2023uiu}. An explicit example of the full Higgs branch for $N=k=2$ i.e. $P=N+N_\rho=2+2$ is given in appendix \ref{app:fullhasse}.) As is by now well known, these are classified by nilpotent orbits of its $\su{k}$ algebra, so the study of the full Higgs branch would involve mixing left homomorphisms $\text{Hom}(\zz_k,E_8)$ and right orbits $\mathcal{O}_\su{k}$. However we also remark that the $a$-theorem for such a combined flow would still boil down to $a$-theorems for its constituents (at least for long-enough quivers, i.e. generic choices of $N$), which are either already proven \cite{Heckman:2016ssk, Mekareeya:2016yal} (right orbit) or have been proven here (left homomorphism).

\section*{Acknowledgments}

We would like to thank A. Bourget,  S. Giacomelli, J. Grimminger, D. Juteau,  M. Sperling, Y. Tachikawa,  A. Tomasiello, Z. Zhong and especially H. Nakajima for helpful discussions and correspondence.  MF would like to thank the University of Milano--Bicocca,  the Pollica Physics Center,  the University of Lancaster, Kavli IPMU at the University of Tokyo for the warm hospitality at various stages of this work. MF and SG acknowledge support from the Simons Center for Geometry and Physics,  Stony Brook University (2022 workshops ``5d $\mathcal{N}=1$ SCFTs and Gauge Theories on Brane Webs'' and ``Supersymmetric Black Holes, Holography and Microstate Counting'', MF; 2023 Simons Physics Summer Workshop, MF and SG) at which some of the research for this paper was performed.  MF would like to thank the organizers of the 2023 ``Symplectic Singularities and Supersymmetric QFT'' conference held at UPJV in Amiens for providing a stimulating environment. SG would like to thank the University of Milano--Bicocca for hospitality during early stages of this work. The work of MF is supported in part by the Knut and Alice Wallenberg Foundation under grant KAW 2021.0170, the Swedish Research Council grant VR 2018-04438, the Olle Engkvists Stiftelse grant No. 2180108, the European Union's Horizon 2020 research and innovation programme under the European Research Council (ERC) grant agreement No. 851931~-~MEMO. The work of SG was conducted with funding awarded by the Swedish Research Council grant VR 2022-06157.

\appendix

\section{Magnetic quivers and hyperkähler quotients}
\label{app:magquivs}

In this appendix we summarize the results of \cite{Cabrera:2019izd}, which constructed 3d magnetic quivers for 6d orbi-instantons.  The former are 3d $\mathcal{N}=4$ unitary quiver gauge theories which, in the case of orbi-instantons, flow in the IR to SCFTs.  (They are ``good'' quivers in the sense of \cite{Gaiotto:2008ak}.)

On the tensor branch of the 6d orbi-instanton (i.e. at finite coupling for all gauge algebras in \eqref{eq:genelequiv} or \eqref{eq:genelequivNmu0}), the associated magnetic quiver is star-shaped with many arms:
\begin{equation}\label{eq:genericmagquiv}
	{\displaystyle
	1 - 2 - \cdots - (k-1) - \hspace{-.4cm}\overset{\overset{\displaystyle \overbrace{\displaystyle 1 \cdots 1}^{\widetilde{P}}}{ \rotatebox[origin=c]{190}{$\setminus$}\ \rotatebox[origin=c]{-190}{$/$} }}{k} \hspace{-.4cm} - r_1 -r_2 -r_3 - r_4 -r_5-\overset{\overset{\displaystyle r_{3'}}{\vert}}{r_6}-r_{4'}-r_{2'}
	}
\end{equation}
with $\widetilde{P}\coloneqq N+\sum_{i=1}^6 n_i$, or, equivalently,
\begin{equation}\label{eq:genericmagquivmod}
	{\scriptstyle
		1 - 2 - \cdots - (k-1) - \hspace{-0.275cm}\overset{\overset{\scriptstyle \overbrace{\scriptstyle 1 \ \cdots \ 1}^{P}}{ \rotatebox[origin=c]{190}{$\setminus$}\ \rotatebox[origin=c]{-190}{$/$} }}{k} \hspace{-0.275cm} - (r_1-p) -(r_2-2p) -(r_3-3p) - (r_4-4p) -(r_5-5p)-\overset{\overset{\scriptstyle (r_{3'}-3p)}{\vert}}{(r_6-6p)}-(r_{4'}-4p)-(r_{2'}-2p)
	}
\end{equation}
with $P \coloneqq \widetilde{P}+p = N+N_\rho$, remembering the definition in \eqref{eq:Nmu}.  

The ranks $r_i,r_{i'}$  are given by:
\begin{align}
	 & r_j \coloneqq (1-\delta_{j6})\sum_{i=1}^{6-j} i n_{i+j} + 2n_{2'}+3n_{3'}+4n_{4'} =k-\sum_{i=1}^6 i n_i + (1-\delta_{j6})\sum_{i=1}^{6-j} i n_{i+j}
\end{align}
for $j=1,\ldots,6$ and
\begin{align}
	 & r_{2'}\coloneqq n_{3'}+n_{4'}\ ,  \quad r_{3'} \coloneqq n_{2'}+n_{3'}+2n_{4'}\ ,  \quad r_{4'} \coloneqq n_{2'}+2n_{3'}+2n_{4'}\ .
\end{align}
Notice that $r_{2'}$ also equals $2(p+r)$ in the notation of \cite{Cabrera:2019izd}; that is, $r$ is defined as
\begin{equation}
	r \coloneqq \frac{r_{2'}}{2} - p = \frac{r_{2'}}{2}- \min \left( \lfloor \left(n_{3'}+n_{4'}\right)/2 \rfloor, \lfloor   \left(n_{2'}+n_{3'}+2n_{4'}\right)/3 \rfloor \right)\ .
\end{equation}
The number $r_{2'}=2(p+r)$ is the total number of half-NS5's stuck on the O8$^-$ in a Type IIA engineering of the orbi-instanton. This number is important, as starting at level $k=7$ the following happens: there are some Kac diagrams for which the resulting ranks $r_i,r_{i'}$ are \emph{all} greater or equal to the Coxeter label of affine $E_8$ in position $i,i'$.  That is, $p\neq 0$. Let us call these the ``saturating diagrams''.\footnote{Since for this to be possible we must have
\begin{equation}
	r_{2'}=n_{3'}+n_{4'}\geq 2\ , \ r_{3'} = n_{2'}+n_{3'}+2n_{4'}\geq 3\ , \ r_{4'} = n_{2'}+2n_{3'}+2n_{4'} \geq 4
\end{equation}
(which automatically imply that $r_6=2n_{2'}+3n_{3'}+4n_{4'} \geq 6$),  we see that the first case is the $[3',4']$ diagram at $k=7$ (which saturates the above bounds).  A simple case-by-case analysis shows that the saturating diagrams must have $(n_{3'},n_{4'})\geq (1,1),(0,2),(2,0)$ where in the last case we should also impose that $n_{2'}\geq 1$. } For these diagrams the bouquet of $\U(1)$'s in the finite-coupling magnetic quiver \eqref{eq:genericmagquiv} has $p$ extra 1's.  In other words, the total number of full instantons (NS5's in the orientifold of Type IIA) is increased by $p$. Notice that this is already taken care of by our definition of $P\coloneqq N+N_\rho$, since $N_\rho \coloneqq \sum_{i=1}^6 n_i +p$:  for the saturating diagrams, this $p\neq 0$ precisely accounts for the extra full instantons that we get in the Type IIA picture. Then, we simply remove the $p$ ``extra'' affine $E_8$ Dynkins from the right tail, obtaining \eqref{eq:genericmagquivmod}.

The 6d SCFT at the origin of the tensor branch has a magnetic quiver obtained by performing $\widetilde{P}$ small $E_8$-instanton transitions.  Performing the transitions in \eqref{eq:genericmagquiv},  the core proposal of \cite{Cabrera:2019izd} is that we obtain the following star-shaped quiver in 3d:
\begin{equation}\label{eq:genericmagquivinf}
	{\scriptstyle
		1 - 2 - \cdots - k - (r_1+\widetilde P) -(r_2+2\widetilde P) -(r_3+3\widetilde P) - (r_4+4\widetilde P) -(r_5+5\widetilde P)-\overset{\overset{\scriptstyle r_{3'}+3\widetilde P}{\vert}}{(r_6+6\widetilde P)}-(r_{4'}+4\widetilde P)-(r_{2'}+2\widetilde P)}\ .
\end{equation}
The Coulomb branch dimension of the above quiver was computed in \cite{Mekareeya:2017jgc}, and reads:
\begin{equation}\label{eq:dimmodspE8}
	\dim_\mathbb{H} \hat{\mathcal{M}}_\text{C} =  30 (N+k) +\frac{k}{2}(k+1) -\overline{\lambda}(\rho)-1\ ,
\end{equation}
for all $k>1$,  where $\overline{\lambda}(\rho)$ is the height function associated with the diagram $\lambda_\text{Kac}=(k,\overline{\lambda})$ defined in section \ref{sub:afun}:
\begin{equation}\label{eq:height}
	\overline{\lambda}(\rho)  =  29 n_2+57 n_3+84 n_4+110 n_5+135 n_6+46 n_{2'}+68 n_{3'}+91 n_{4'}\ ,
\end{equation}
which obviously vanishes only for the diagram $\mu_\text{Kac}=[1^k]$.

\subsection[The $\SU(k)$ hyperkähler quotient of the orbi-instanton magnetic quiver]{The \texorpdfstring{$\bm{\text{SU}(k)}$}{SU(k)} hyperkähler quotient of the orbi-instanton magnetic quiver}
\label{app:moduli}

We can now prove that the hyperkähler quotient $(\hat{\mathcal{M}}_\text{C}\times \mathcal{O}_{\bm{\xi}})/\!\!/\!\!/_{\bm{\xi}}\SU(k)$, with $\hat{\mathcal{M}}_\text{C}$ the Coulomb branch of \eqref{eq:genericmagquivinf}, equals the moduli space of $E_8$-instantons on the deformation/resolution of $\cc^2/\zz_k$ with desingularization parameter $\bm{\xi}\neq 0$, i.e. the Coulomb branch $\mathcal{M}_\text{C}^{\text{inst},\bm{\xi}}$ of
\begin{equation}
	\label{eq:quotmagquivinf}
	{\scriptstyle \boxed{\scriptstyle k} - (r_1+\widetilde P) -(r_2+2\widetilde P) -(r_3+3\widetilde P) - (r_4+4\widetilde P) -(r_5+5\widetilde P)-\overset{\overset{\scriptstyle r_{3'}+3\widetilde P}{\vert}}{(r_6+6\widetilde P)}-(r_{4'}+4\widetilde P)-(r_{2'}+2\widetilde P)}\ ,
\end{equation}
with a $\PU(k):=\U(k)/\U(1)$ action from the flavor $\boxed{k}$ node and a triplet of mass parameters determined by ${\bm\xi}$. (As is well known, that the Coulomb branch of the above quiver gives the instanton moduli space was originally proposed by \cite{Intriligator:1996ex} for $k=1$ and \cite{deBoer:1996ck} for any $k$.) Because of the hyperkähler quotient, we also have
\begin{equation}\label{eq:dimmodinst}
	\dim_\mathbb{H} \mathcal{M}_\text{C}^{\text{inst},{\bm\xi}} = 30 (N+k) -\overline{\lambda}(\rho)\ 
\end{equation}
when $\xi_\mathbb{C}$ is generic but $\xi_\mathbb{R}=0$.  (An ``overall'' $\U(1)$ gauge group does not decouple in this case due to the presence of the $\boxed{k}$ flavor node on the left. This explains the absence of the $-1$ summand, which, by contrast, appears in \eqref{eq:dimmodspE8}.)

We would like to thank Hiraku Nakajima for suggesting the following argument.

\begin{proof}
We can simply apply the gluing construction from \cite{Braverman:2017ofm} to $T(\SU(k))$ and \eqref{eq:genericmagquivinf}, which are both equipped with a ${\rm PU}(k)$ flavor symmetry factor coming from $\boxed{k}$.  (The gluing construction is motivated by \cite{Cremonesi:2014kwa,Cremonesi:2014vla} and, as explained there, in 3d QFT language it simply corresponds to gauging the diagonal of two identical flavor symmetries.) More precisely, let $G=\PU(k)$ and let ${\mathcal A}_1$, resp. ${\mathcal A}_R$ be the ring-like objects in the category $D_{G[\![z]\!]}(\Gr_G)$ representing the (Coulomb branch of the) quiver gauge theory \eqref{eq:quotmagquivinf}, resp.  $T(\SU(k))$, cf. \cite[Sec. 2(v)]{Braverman:2017ofm}.
Letting ${\mathcal A}_2$ be the skyscraper sheaf at the base-point of $\Gr_G$, then \cite[Rem. 5.21]{Braverman:2017ofm} becomes:
\begin{equation}\label{eq:remark521}
        H^*_{G[\![z]\!]}(\Gr_G,\mathcal{A}_1) \cong
        H^*_{G[\![z]\!]}(\Gr_G,\mathcal{A}_R\otimes^!\mathcal{A}_1)\otimes
        \mathfrak C_{G^\vee} H^*_{G[\![z]\!]}(\Gr_G,\mathcal{A}_R) /\!\!/\!\!/ \Delta_{G^\vee} \,
\end{equation}
where $G^\vee=\SU(k)$ is the Langlands dual group.  As explained in \cite{Braverman:2017ofm}, these cohomology rings correspond to the coordinate rings of the relevant Coulomb branches. The symbol $\otimes^!$ denotes the gluing operation, while $/\!\!/\!\!/ \Delta_{G^\vee}$ represents the hyperkähler quotient by the diagonal subgroup of two copies of $G^\vee$, which corresponds to the flavor symmetry being gauged in 3d. This construction yields the quiver \eqref{eq:genericmagquivinf}.  Thus we obtain the desired statement:
\begin{equation}
\mathcal{M}_\text{C}^{\text{inst},\bm\xi} \cong	(\hat{\mathcal{M}}_\text{C} \times \mathcal{M}_\text{C}(T(\SU(k))  ) /\!\!/\!\!/_{\bm{\xi}} \SU(k) \ .
\end{equation}
\end{proof}

\noindent As an interesting observation, we note that $\mathcal{A}_R$ is also known to give rise to the Springer resolution of $\mathcal{N}_{G^\vee}$ \cite[Thm. 8.5.2]{Arkhipov2004Quantum}.  

Furthermore, the Coulomb branch $\mathcal{M}_\text{C}^{\text{inst},\bm\xi}$, with $\xi_\mathbb{C}$ generic but $\xi_\mathbb{R} = 0$, can be understood as a slice between the coweights $([1^k],N)$ and $([1^k],-k)$ in the double affine Grassmannian.   On the other hand, the space $\mathcal{M}_\text{C}^{\text{inst}}\coloneqq\mathcal{M}_\text{C}^{\text{inst},{\bf 0}}$ corresponds to the union of the strata in $\mathcal{M}_\text{C}^{\text{inst},\bm\xi}$ associated with coweights $(\lambda_{\rm Kac}, j)$ for $j>0$. It is also equivalent to a closed union of strata in $\hat{\mathcal{M}}_\text{C}$.

\section{Proving the \texorpdfstring{$\bm{a}$}{a}-theorem for decoupled systems}
\label{app:anomaly_decoupled}

In section \ref{subsub:newdeg}, we described four types of minimal degeneration, i.e. transition.
The $a$-theorem was established for the first type, i.e. for transitions changing the holonomy $\rho_\infty$.
We will now show that the $a$ central charge decreases along RG flows corresponding to the other three types of transition.
All three types involve a decoupled system given by a decoupled product of rank $m_i$ E-strings and a triple $(N,k,\rho_\infty)$, i.e. a coweight for affine $E_8$.
Since the $a$ anomaly for such a decoupled system is the sum of the $a$ anomalies of the individual factors, we can ignore any E-strings found on both sides of the transition.
This reduces us to proving that $\Delta a>0$ for the following RG flows (the numbering corresponds to the classification in \ref{subsub:newdeg}):
\begin{itemize}
\item[\emph{i)}] from a rank 1 E-string to the trivial system (dissolving an M5 into flux);
\item[\emph{ii)}] from the system given by the triple $(N+M,k,\rho_\infty)$ to a decoupled system given by $(N,k,\rho_\infty)$ and a rank $M$ E-string (moving a substack of $M$ M5's away from the singularity);
\item[\emph{iii)}] from a rank $M=p+q$ E-string to a decoupled product of a rank $p$ and a rank $q$ E-string (splitting a stack of M5's into two substacks).
\end{itemize}

Case \emph{ii)} involves the trivial verification that the $a$ anomaly for a rank 1 E-string is strictly positive.
The other two transitions are established in the following two lemmas.

\begin{lemma}\label{lemma:B1}
	The $a$ anomaly decreases along the RG flow from the orbi-instanton $(N+M,k,\rho_\infty)$  to $(N,k,\rho_\infty)$ and a single rank-$M$ E-string:
	\begin{equation}
		Δa^{(iii)} \coloneqq a\left(N+M,k,\rho_\infty\right) -a\left(N,k,\rho_\infty\right) -a\left(M,1,\left[1^1\right]\right) > 0\ .
	\end{equation}
\end{lemma}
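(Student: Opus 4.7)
The plan is a direct computation from the closed-form $a$-anomaly formula \eqref{eq:a6danalytic}, exploiting the fact that the Kac data $\rho_\infty$ (and therefore $k$, $\overline\lambda$, $\overline\lambda(\rho)$, $\langle\overline\lambda,\overline\lambda\rangle$ and the power sums $\sum_{\alpha>0}\overline\lambda(\alpha)^m$) is held fixed along the flow; only the parameter $n = -N$ changes. Since $n$ enters \eqref{eq:a6danalytic} polynomially of degree at most three, and the two power-sum terms $\sum_{\alpha>0}\overline\lambda(\alpha)^{3,5}$ are $n$-independent and hence cancel in the difference, one obtains
\begin{equation*}
    a(N+M,k,\rho_\infty) - a(N,k,\rho_\infty) = \alpha(k,\overline\lambda)\bigl[(N+M)^3 - N^3\bigr] + \beta(k,\overline\lambda)\bigl[(N+M)^2 - N^2\bigr] + \gamma(k,\overline\lambda)\,M\ ,
\end{equation*}
with $\alpha = \tfrac{64\,k^2}{7}$, and $\beta$, $\gamma$ read off from \eqref{eq:a6danalytic} as explicit polynomials in $k$, $\overline\lambda(\rho)$, and $\langle\overline\lambda,\overline\lambda\rangle$. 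A useful preliminary observation is that the bound $\langle\overline\lambda,\overline\lambda\rangle\leq k^2$ recalled in section \ref{sub:proof} already forces $\beta \geq \tfrac{240k}{7}>0$.

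Subtracting $a(M,1,[1^1]) = \tfrac{64}{7}M^3 + \tfrac{144}{7}M^2 + \tfrac{99}{7}M$ then expresses $\Delta a^{(i)}$ as an explicit quadratic $P_2\,N^2 + P_1\,N + P_0$ in $N$ with strictly positive leading coefficient $P_2 = \tfrac{192\,k^2 M}{7}$. The next step is to bound $P_1$ and $P_0$ from below using $\langle\overline\lambda,\overline\lambda\rangle\leq k^2$, the non-negativity of $\overline\lambda(\rho)$, and the physical constraint $N\geq k-N_\rho$ (equivalent to $n+k\leq N_\rho$) which guarantees that the tensor-branch description \eqref{eq:genelequiv} is well defined throughout the flow. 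The conceptual source of positivity is the surplus $\tfrac{64(k^2-1)}{7}M^3$ coming from the cubic-in-$M$ contribution of the orbi-instanton beating the E-string's: for $k\geq 2$ this cubic margin in $M$ comfortably absorbs the bounded $\overline\lambda$-dependent corrections, while the quadratic in $N$ is monotonically increasing on the physical range.

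The main obstacle will be the corner cases. For $k=1$ the hypothesis forces $\overline\lambda=0$, and the lemma degenerates to the elementary subadditivity $a_{\text{E-str}}(N+M) > a_{\text{E-str}}(N) + a_{\text{E-str}}(M)$, which is immediate from \eqref{eq:aEstr} because the cross-terms $\tfrac{192}{7}(N^2 M + NM^2) + \tfrac{288}{7}NM$ are manifestly positive for $N,M\geq 1$. For $k\geq 2$ the genuinely delicate regime is $N$ close to its minimal allowed value $k-N_\rho$ with $\langle\overline\lambda,\overline\lambda\rangle$ simultaneously close to the upper bound $k^2$: there I would finish either by a careful bookkeeping of the $k$-dependent constants in the second line of \eqref{eq:a6danalytic} against the linear-in-$M$ E-string contribution $\tfrac{99}{7}M$ and the negative piece $-\tfrac{16}{7}\langle\overline\lambda,\overline\lambda\rangle(15M+6k^2M)$ from the first line, or, failing a uniform analytic bound, by a finite computer check (analogous to the GAP verification in section \ref{sub:proof}) restricted to the Kac diagrams extremizing $\langle\overline\lambda,\overline\lambda\rangle$ at each level $k$.
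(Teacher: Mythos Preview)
Your setup is correct and matches the paper's: since $\rho_\infty$ is fixed, the power-sum terms in \eqref{eq:a6danalytic} cancel in the difference, and the key bound you invoke, $\langle\overline\lambda,\overline\lambda\rangle\leq k^2$, is exactly the one the paper uses. However, your proof is left incomplete: you organize $\Delta a^{(i)}$ as a quadratic $P_2 N^2+P_1 N+P_0$ and then concede that bounding $P_0$ (and $P_1$) in the regime where $\langle\overline\lambda,\overline\lambda\rangle$ approaches $k^2$ may require either delicate bookkeeping or a finite computer check. The paper avoids this entirely by a better reorganization of the same expression.

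Concretely, the paper observes that after factoring out $\tfrac{48M}{7}$, the $N$- and $\overline\lambda$-dependence of $\Delta a^{(i)}$ assembles into a single perfect square:
\[
\Delta a^{(i)} \;=\; \frac{48M}{7}\left[\frac{k^2}{6}(2M^2-3)\;+\;\Bigl(\tfrac{5}{2}+k(M+2N)+\bigl(k^2-\langle\overline\lambda,\overline\lambda\rangle\bigr)\Bigr)^{2}\;-\;\Bigl(\tfrac{7}{4}+3M+\tfrac{4M^2}{3}\Bigr)\right].
\]
Now $N\geq 0$ and $k^2-\langle\overline\lambda,\overline\lambda\rangle\geq 0$ immediately bound the squared term below by $(\tfrac{5}{2}+kM)^2$, and the residual expression in $k,M$ alone is elementarily non-negative for all $k\geq 1$, $M\geq 1$. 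No case analysis on $N_\rho$, no corner-case computer check, and no separate treatment of $k=1$ is needed. Your approach would eventually succeed (all the pieces are there), but the completion of the square is what turns the ``delicate regime'' you identify into a non-issue.
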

\begin{proof}
	\begin{equation}
		\begin{aligned}
			Δa^{(iii)} & \coloneqq a\left(N+M,k,\rho_\infty\right) -a\left(N,k,\rho_\infty\right) -a\left(M,1,\left[1^1\right]\right)                                                                                                              \\
			         & = \frac{48M}{7} \left[ \frac{k²}{6} \left(2 M^2-3\right)+\left(\frac{5}{2}+k (M+2 N)+\underbrace{k²-\langle\overline\lambda,\overline\lambda \rangle}_{\geq 0}\right)^2-\left(\frac{7}{4}+3M+\frac{4M²}{3}\right) \right] \\
			         & \geq  \frac{48M}{7} \left[ \frac{k²}{6} \left(2 M^2-3\right)+\left(\frac{5}{2}+kM\right)^2-\left(\frac{7}{4}+3M+\frac{4M²}{3}\right) \right]                                                                              \\
			         & \geq 0 \quad \forall\, k \geq 1 \textrm{ and } M \geq 1\ .
		\end{aligned}
	\end{equation}
\end{proof}

\begin{lemma}\label{lemma:B2}
	The $a$ anomaly decreases when a single rank-$M$ E-string breaks into two E-strings of rank $p$ and $q$ respectively such that $p+q=M$.
	\begin{equation}
		Δa^{(iv)} \coloneqq a\left(M,1,\left[1^1\right]\right) - a\left(p,1,\left[1^1\right]\right) - a\left(q,1,\left[1^1\right]\right) > 0\ .
	\end{equation}
\end{lemma}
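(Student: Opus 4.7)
The plan is to evaluate $\Delta a^{(ii)}$ directly using the explicit cubic polynomial expression for the rank-$N$ E-string $a$ anomaly recorded in the paper. Since $(M,1,[1^1])$ corresponds to $k=1$, $\overline\lambda=0$, formula \eqref{eq:a6danalytic} collapses to \eqref{eq:aEstrplushyper}, namely
\begin{equation*}
a(N,1,[1^1]) \;=\; \tfrac{64}{7}N^{3}+\tfrac{144}{7}N^{2}+\tfrac{99}{7}N\ .
\end{equation*}
(The free center-of-mass hypermultiplet is absorbed into this expression, which is the appropriate one since the analogous $k=1$ specialization was used to establish Lemma \ref{lemma:B1}.)

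Next, I would substitute $M=p+q$ and expand, using the elementary identities
\begin{equation*}
M^{3}-p^{3}-q^{3}=3pq(p+q)=3pqM\ ,\qquad M^{2}-p^{2}-q^{2}=2pq\ ,\qquad M-p-q=0\ .
\end{equation*}
The linear term cancels identically, and the cubic and quadratic contributions combine into a single expression
\begin{equation*}
\Delta a^{(ii)} \;=\; \tfrac{64}{7}\,(3pqM)\;+\;\tfrac{144}{7}\,(2pq)\;=\;\tfrac{pq}{7}\bigl(192 M + 288\bigr)\ .
\end{equation*}

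Positivity is then immediate: for any decomposition of $M$ as $p+q$ with $p,q\geq 1$ (equivalently, $M\geq 2$), both factors $pq$ and $192M+288$ are strictly positive, so $\Delta a^{(ii)}>0$. Combined with Lemma \ref{lemma:B1} applied recursively --- first detaching a single rank-$M$ E-string from the orbi-instanton, then iteratively splitting any rank-$m$ E-string with $m\geq 2$ into two E-strings of smaller rank using the present lemma --- this establishes $\Delta a>0$ for the flow to the arbitrary decoupled configuration specified by any partition $[m_i]$ of $M$, as announced in step \emph{iii)} of the outline. There is essentially no obstacle here: the only conceptual point is to use the version of $a_{\text{E-str}}$ consistent with the one used in Lemma \ref{lemma:B1} so that the free-hyper contributions bookkeep correctly across the two lemmas; the remaining work is a one-line polynomial identity.
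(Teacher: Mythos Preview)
Your proof is correct and follows essentially the same approach as the paper: both directly substitute the cubic formula \eqref{eq:aEstrplushyper} and simplify. Your expression $\tfrac{pq}{7}(192M+288)$ is exactly the paper's $\tfrac{96}{7}pq(3+2p+2q)$ after factoring out $96$ and using $M=p+q$.
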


\begin{proof}
It is straightforward to check that:
	\begin{equation}
		Δa^{(iv)} = \frac{96}{7} p q \left( 3 + 2p + 2q \right) > 0\ .
	\end{equation}
\end{proof}

\section{Hasse diagram of dominant coweights and symmetric products}
\label{app:fullhasse}

In this section we showcase for illustrative purposes the full Hasse diagram of the Higgs branch of the UV orbi-instanton $(N=2,k=2,\rho_\infty^\text{UV}:2=[1^2])$, i.e. $P=N+N_\rho=2+2$ --  see figure \ref{fig:outputk=2}. (In other words, we have 2 M5's, or full instantons,  probing the M9 with trivial boundary condition, meaning the wall will fractionate into 2 fractions, or fractional instantons, because of the $\cc^2/\zz_2$ orbifold probed by the M5's.) 

As remarked in sections \ref{sec:intro} and \ref{sec:conc},  the stratification of this Higgs branch includes as possible RG flows (i.e. slices): left flavor Higgsings,  splittings of separated M5's on the M9, and right flavor Higgsings.  The first two types of slices have been interpreted here as, respectively,  those between dominant coweight strata of the double affine Grassmannian of $E_8$ (see figure \ref{fig:k=2hasse} for $k=2$), i.e. the minimal degenerations \ref{bullet:i325} introduced in section \ref{subsub:newdeg}, and those between strata of the symmetric product of $N$ objects (see figure \ref{fig:degs}),  i.e.  the minimal degenerations \ref{bullet:ii325}-\ref{bullet:iv325}, potentially with multiplicity $l$.  (Note that by combining the proof in section \ref{sec:a} and that in appendix \ref{app:anomaly_decoupled} one can prove the $a$-theorem along any slice/RG flow that includes left flavor Higgsings and splittings of M5's.)

This stratification should coincide with that of the Nakajima quiver variety $\widehat{\mathcal{M}}_\text{C}$, i.e. the Coulomb branch of \eqref{eq:genericmagquivinf}.
However, the latter quiver is of indefinite type (i.e. it is neither finite nor affine), for which almost nothing has been mathematically proven about the set of strata.
We have argued in section \ref{sec:intro} that, staying in the vicinity of $\xi_{\mathbb{R}}=0$, one can safely disregard the contribution from the $T(\SU(k))$ tail on the left of \eqref{eq:genericmagquivinf}, and study the resulting $\mathcal{M}_\text{C}$.
While the stratification in \eqref{eq:stratMC} has not been mathematically proven either, this affine $E_8$ case is modeled on the known affine A-type case \cite{Nakajima:2016guo} and hence may be within reach of current mathematical technology (see \cite{Nakajima:1994} for preliminary results). This stratification can be seen in figure \ref{fig:subposet3d}, \ref{fig:subposet6d}, and \ref{fig:subposetMC}, being expressed in terms of resp. 3d magnetic quivers, 6d F-theory electric quivers on the tensor branch of the SCFT, and symplectic leaves in the Coulomb branch $\mathcal{M}_\text{C}$ (understood as a Nakajima quiver variety) of the UV theory.

The authors of \cite{Bourget:2023dkj,Bourget:2024mgn} have recently taken a more pragmatic approach, and have proposed a ``decay and fission'' algorithm which (for an arbitrary quiver gauge theory, in particular a 6d one on the tensor branch) outputs the stratification of the full 6d Higgs branch, taking as input the 3d magnetic quiver of the UV theory.\footnote{We thank the authors of \cite{Bourget:2023dkj,Bourget:2024mgn} for clarifying that in affine type A the decay and fission algorithm correctly reproduces the stratification result of \cite{Nakajima:2016guo}.}
Applying their algorithm to the $k=2,\widetilde{P}=P=4$ case of \eqref{eq:genericmagquivinf} (i.e.  the magnetic quiver of the $(N=2,k=2,\rho_\infty^\text{UV}:2=[1^2])$ UV orbi-instanton) produces figure \ref{fig:outputk=2} below.  

A few remarks will help clarify the figure:
\begin{itemize}
\item all oriented paths represent allowed RG flows: the $a$ anomaly decreases from \emph{bottom} to \emph{top} of the figure (contrarily to what happens in all previous figures), and thus the larger symplectic leaves (strata) lie at the top (following a convention mostly used in the mathematics literature and in some physics literature).  Accordingly,  the IR lies at the \emph{top} in all figures below;

\item $\otimes$ indicates a decoupled product of theories, e.g. an E-string and an orbi-instanton (i.e.  some of the M5's are at the singularity while others are away from it while still being on the M9). Accordingly, the combined $a$ anomaly is given by the sum of those of the constituents.   $\emptyset$ denotes the empty theory with zero interacting degrees of freedom in 6d, as the Higgs branch flow from the E-string leaves behind 29 free hypermultiplets;\footnote{Applying (\ref{eq:aEstr}) and (\ref{eq:DeltaaHiggs}) to the present case, we obtain $a_\text{E-str}(1)=\frac{307}{7} \xrightarrow{\text{RG}} a_{\emptyset + 29\,\text{free hypers}}=0+ \frac{319}{210}$, so that $\Delta a = \frac{296}{7} >0$.}

\item when present, the superscripts on the quivers identify the outputs of the algorithm of \cite{Bourget:2023dkj,Bourget:2024mgn} following the same numbering used there;

\item the \azure{$10-5-3$} and \azure{$14-11-7$ diagonals} (and the minimal degenerations/RG flows among theories on a diagonal, and across the two diagonals) embed into figure \ref{fig:k=2hasse} in the obvious way,  the latter being the Hasse diagram for the left flavor Higgsings among $k=2$ orbi-instantons (excluding splittings of M5's or right flavor Higgsings). 

The diagonals can be extended  to the right as $k$ increases (i.e. as we include more Kac diagrams for larger values of $k$), and their number can be increased indefinitely,  linearly with $N$ (i.e. we simply add more diagonals below the one at the bottom): this is the semi-infinite periodic Hasse diagram of the double affine Grassmannian of $E_8$, which was already shown in figures \ref{fig:k34notvec},  \ref{fig:k56notvec}, and \ref{fig:k7notvec}  for a few values of $k$. For higher $k$ and/or $N$ the output of the algorithm \cite{Bourget:2023dkj,Bourget:2024mgn} becomes increasingly involved; however,  the semi-infinite periodic structure of left flavor Higgsings is fully captured by the double affine Grassmannian (cf. the observation at the beginning of section \ref{sec:check});

\item the \magenta{magenta nodes} are all the (nontrivial) symmetric product strata (cf. \eqref{eq:stratMC}).
Note that in this case we only see copies of ${\rm Sym}^1({\mathbb C}^2/{\mathbb Z}_k \!\setminus\!\{0\})$, because $\lambda+2c>\mu$, or,  equivalently, only one M5 can be removed from the singularity;

\item the \red{$\mathfrak{e}_8$ degenerations} are of type \ref{bullet:ii325} in the language of section \ref{subsub:newdeg}, i.e. we are performing a small $E_8$-instanton transition dissolving one M5 into flux on the M9. Then \red{$l\mathfrak{e}_8$} means the union of $l$ copies (branches) of the $\mathfrak{e}_8$ singularity (as explained there),  i.e.  1 out of $l$ M5's transitions into flux (and we have $l$ ways of selecting which one dissolves);

\item the \nnyellow{$A_1$} and \nnyellow{$m$ degenerations} are of type \ref{bullet:iv325}, i.e. we are splitting a stack of separated M5's into substacks (and \nnyellow{$lA_1$} denotes the union as above); 

\item the \orange{$A_1$ degenerations} are of type \ref{bullet:iii325}, i.e. we are stripping a number of M5's off of the singularity while still keeping them on the M9;

\item the \azure{$\mathfrak{e}_8,\mathfrak{e}_7,\mathfrak{d}_8,\mathfrak{a}_1=A_1$ degenerations} are of type \ref{bullet:i325}, i.e. are left flavor Higgsings;

\item the \ao{$\mathfrak{a}_1, \mathfrak{b}_3,\mathfrak{g}_2$ degenerations} are \emph{right} flavor Higgsings (and as such are excluded from the analysis of section \ref{subsub:newdeg}). \ao{$\mathfrak{a}_1$} is easily understood as the Higgsing of $[\SU(2)]$, which is visible on the tensor branch of the orbi-instanton the RG flow originates from; on the other hand,  the $[\SO(7)]$ and $[G_2]$ flavors which are Higgsed via \ao{$\mathfrak{b}_3,\mathfrak{g}_2$} respectively are \emph{not} visible on the tensor branch, but are known to arise at the CFT point (this is an accident due to the nongeneric value of $N$ -- see e.g.  \cite[Sec. 5.1]{Mekareeya:2017jgc}).  These degenerations (slices) connect strata whose associated magnetic quivers have different ``left tail structures'', going from $1-2-$ to $0-2-$ (and, more generally, from $1-\cdots-k-$ to $0-k-$).  In other words,  these ``$k$-changing transitions'' (as we called them in footnote \ref{foot:kchange}) connect strata coming from the root space of the $T(\SU(2))$ tail of $\hat{\mathcal{M}}_\text{C}$ understood as a quiver variety. It is possible to interpret these ``new'' left tail structures as $T_\rho(\SU(2))$ theories, in the language of \cite{Gaiotto:2008ak}, where $T(\SU(2))=T_{\rho=[1^2]}(\SU(2))=1-\boxed{2}$ corresponds to the trivial partition of $k=2=[1^2]=[1=1-0,1=2-1]$ and $T_{\rho=[2]}(\SU(2))=0-\boxed{2}$ corresponds to the other partition, $[2]=[2-0]$ ($\boxed{2}$ being gauged in the full 3d magnetic quiver of indefinite type). This observation generalizes in the obvious way for higher $k$;

\item accordingly, the \ao{green nodes} are examples of massive E-string theories, i.e. 6d SCFTs which do \emph{not} admit an engineering in M-theory but do so in massive Type IIA string theory (as they can be defined via a brane construction with more than 8 D8's),  or F-theory (as the orbi-instantons do) -- see e.g. \cite{Fazzi:2022yca} for RG flows among massive E-strings defined by less than 8 D8's;\footnote{Note that for fewer than 8 D8's the magnetic quivers of the massive E-strings always start with the usual $1-\cdots-k-$ tail \cite[App. A]{Fazzi:2022yca}), and because of this fact the former can be identified with orbi-instantons at a lower value of $k$, cf.  \cite[Eqs. (2.14) \& (4.1)]{Fazzi:2022yca}.}

\item the \(\mathfrak{e}_8\), \(\mathfrak{e}_7\), \(\mathfrak{d}_8\), and \(\mathfrak{b}_9\) degenerations represent left flavor Higgsings that originate from either massive E-string theories or decoupled products of orbi-instanton theories and rank-1 E-strings. These degenerations land on another E-string,  massive or massless.
\end{itemize}
A mathematical proof of the stratification produced by the decay and fission algorithm \cite{Bourget:2023dkj,Bourget:2024mgn} would certainly be of interest (if only for certain subclasses of theory such as those given by star-shaped 3d quivers).
In our case, proving the affine $E_8$ case as a 3d quiver (i.e., establishing the stratification \eqref{eq:stratMC}) would be sufficient to complete our proof of the $a$-theorem for fixed $k$.
This covers the vast majority of transitions produced by the decay and fission algorithm.
We intend to address the remaining transitions in upcoming work.

\begin{landscape}
\begin{figure}
\centering
\begin{tikzpicture}[node distance=45pt, every node/.style={scale=0.9}]
			\tikzstyle{arrow} = [thick,->,>=stealth]
		
			\node (n3) {$\left(  \includegraphics{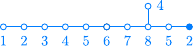}  \right)^{\azure{3}}$};
			\node (n16) [below=40pt of n3] {\magenta{$\left( \includegraphics{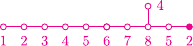} \otimes \includegraphics{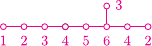}\right)^{16}$}};
			\node (n5) [left=20pt of n16] {$\left(  \includegraphics{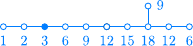} \right)^{\azure{5}}$};
	
			\node (n7) [below=40pt of n16] {$\left(  \includegraphics{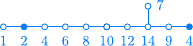} \right)^{\azure{7}}$};	
			\node (n18) [left=20pt of n7] {\magenta{$\left( \includegraphics{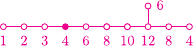}\otimes  \includegraphics{16b-magenta.pdf}	\right)^{18}$}};
			\node (n10) [left=20pt of n18] {$\left(  \includegraphics{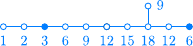} \right)^{\azure{10}}$};	

			\node (nphant8) [below=60pt of n7] {};	
			\node (nphant9) [below=40pt of nphant8] {};	
			\node (nphant10) [below=20pt of nphant9] {};				
			
			\node (n11) [below=40pt of n18] {$\left( \includegraphics{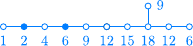}\right)^{\azure{11}}$};
			\node (n20) [below=40pt of n10] {\magenta{$\left( \includegraphics{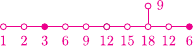}\otimes  \includegraphics{16b-magenta.pdf}\right)^{20}$}};	
		
			\node (n14) [below=40pt of n20] {$\left( \includegraphics{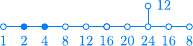}\right)^{\azure{14}}$};

			\draw [arrow] (n14)--(n11) node [midway, above left] {} node[midway, below right] {\azure{$\mathfrak{e}_8$}};
			\draw [arrow] (n14)--(n20) node [midway, above left] {} node[midway, left] {\orange{$A_1$}};	
			\draw [arrow] (n11)--(n7) node [midway, above left] {} node[midway, below right] {\azure{$\mathfrak{e}_7$}};	
			\draw [arrow] (n11)--(n18) node [midway, above left] {} node[midway, left] {\orange{$A_1$}};	
			\draw [arrow] (n11)--(n10) node [midway, above left] {} node[midway, above left=5pt and 0pt] {\azure{$\mathfrak{a}_1$}};
			\draw [arrow] (n20)--(n18) node [midway, above left] {} node[midway, above right=10pt and 10pt] {\azure{$\mathfrak{e}_8$}};		
			\draw [arrow] (n20)--(n10) node [midway, above left] {} node[midway, left] {\red{$\mathfrak{e}_8$}};
\draw [arrow] (n7)--(n16) node [midway, above left] {} node[midway, right] {\orange{$A_1$}};	
\draw [arrow] (n7)--(n5) node [midway, above left] {} node[midway, above left=10pt and 5pt] {\azure{$\mathfrak{d}_8$}};
\draw [arrow] (n18)--(n16) node [midway, above left] {} node[midway, above right=10pt and 10pt] {\azure{$\mathfrak{e}_7$}};
\draw [arrow] (n18)--(n5) node [midway, above left] {} node[midway, left] {\red{$\mathfrak{e}_8$}};	

\draw [arrow] (n10)--(n5) node [midway, above left] {} node[midway, above left] {\azure{$\mathfrak{e}_8$}};	
\draw [arrow] (n16)--(n3) node [midway, above left] {} node[midway, right] {\red{$\mathfrak{e}_8$}};	

\draw [arrow] (n5)--(n3) node [midway, above left] {} node[midway, above left] {\azure{$\mathfrak{e}_7$}};											
		\end{tikzpicture}
	\caption{3d magnetic quivers in the notation of \cite{Bourget:2023dkj,Bourget:2024mgn}. The superscript labels the quivers following the same numbering as in figure \ref{fig:outputk=2}.}
	\label{fig:subposet3d}
\end{figure}
\end{landscape}

\begin{landscape}
\begin{figure}[t]
\centering
\begin{tikzpicture}[node distance=45pt, every node/.style={scale=.8}]

			\tikzstyle{arrow} = [thick,->,>=stealth]
		
			\node (n3) {$\left(\magenta{ [\SO(16)] \,\overset{\mathfrak{usp}(2)}{1} \,\overset{\mathfrak{su}(2)}{2} [\SU(2)]}\right)^{\azure{3}}_{\azure{[2'],N=1,N_\rho=0}}$};
		
			\node (n16) [below=50pt of n3] {\magenta{$\left( \begin{array}{c} [\SO(16)]\, \overset{\mathfrak{usp}(2)}{1} \,  \overset{\mathfrak{su}(2)}{2} [\SU(2)] \\ \otimes \\ \left[E_8\right] \, 1	 \end{array}  \right)^{16}_{\scriptsize \begin{array}{c} [2'],N=1,N_\rho=0 \\  \otimes \\  \text{E-string}	\end{array}}$}};

			\node (n5) [left = 20pt of n16] {$\left( \magenta{ [E_7] \,  1 \, \overset{\mathfrak{su}(2)}{\underset{[N_\text{f}=2]}{2}} [\SU(2)]} \right)^{\azure{5}}_{\azure{[2],N=1,N_\rho=1}}$};
	
			\node (n7) [below=40pt of n16] {$\left(\magenta{ [\SO(16)] \,\overset{\mathfrak{usp}(2)}{1} \,\overset{\mathfrak{su}(2)}{2}  \overset{\mathfrak{su}(2)}{2}  [\SU(2)]} \right)^{\azure{7}}_{\azure{[2'],N=2,N_\rho=0}}$};	
			\node (n18) [left=10pt of n7] {\magenta{$\left( \begin{array}{c} \left[E_7\right] \,  1 \, \overset{\mathfrak{su}(2)}{\underset{[N_\text{f}=2]}{2}} [\SU(2)] \\ \otimes \\ \left[E_8\right] \, 1 \end{array}\right)^{18}_{\scriptsize \begin{array}{c} [2],N=1,N_\rho=1 \\  \otimes \\  \text{E-string}	\end{array}}$}};
			\node (n10) [left=10pt of n18] {$\left( \magenta{[E_8] \, 1 \, \overset{\mathfrak{su}(1)}{2} \overset{\mathfrak{su}(2)}{\underset{[N_\text{f}=1]}{2}} [\SU(2)]}   \right)^{\azure{10}}_{\azure{[1^2],N=1,N_\rho=2}}$};	

			\node (nphant8) [below=60pt of n7] {};	
			\node (nphant9) [below=40pt of nphant8] {};	
			\node (nphant10) [below=20pt of nphant9] {};				
			
			\node (n11) [below=40pt of n18] {$\left(\magenta{ [E_7] \,  1 \, \overset{\mathfrak{su}(2)}{\underset{[N_\text{f}=2]}{2}} \overset{\mathfrak{su}(2)}{2}  [\SU(2)] }\right)^{\azure{11}}_{\azure{[2],N=2,N_\rho=1}}$};
			\node (n20) [below=40pt of n10] {\magenta{$\left(\begin{array}{c} [E_8] \, 1 \, \overset{\mathfrak{su}(1)}{2} \overset{\mathfrak{su}(2)}{\underset{[N_\text{f}=1]}{2}} [\SU(2)] \\ \otimes \\ \left[E_8\right] \, 1 \end{array}\right)^{20}_{\scriptsize \begin{array}{c} [1^2],N=1,N_\rho=2 \\  \otimes \\  \text{E-string}	\end{array}}$}};	
		
			\node (n14) [below=40pt of n20] {$\left(\magenta{[E_8] \,  1 \, \overset{\mathfrak{su}(1)}{2} \overset{\mathfrak{su}(2)}{\underset{[N_\text{f}=1]}{2}} \overset{\mathfrak{su}(2)}{2}  [\SU(2)]}\right)^{\azure{14}}_{\azure{[1^2],N=2,N_\rho=2}}$};

			\draw [arrow] (n14)--(n11) node [midway, above left] {} node[midway, below right] {\azure{$\mathfrak{e}_8$}};
			\draw [arrow] (n14)--(n20) node [midway, above left] {} node[midway, left] {\orange{$A_1$}};	
			\draw [arrow] (n11)--(n7) node [midway, above left] {} node[midway, below right] {\azure{$\mathfrak{e}_7$}};	
			\draw [arrow] (n11)--(n18) node [midway, above left] {} node[midway, left] {\orange{$A_1$}};	
			\draw [arrow] (n11)--(n10) node [midway, above left] {} node[midway, above left=5pt and 0pt] {\azure{$\mathfrak{a}_1$}};
			\draw [arrow] (n20)--(n18) node [midway, above left] {} node[midway, above right=10pt and 5pt] {\azure{$\mathfrak{e}_8$}};		
			\draw [arrow] (n20)--(n10) node [midway, above left] {} node[midway, left] {\red{$\mathfrak{e}_8$}};
\draw [arrow] (n7)--(n16) node [midway, above left] {} node[midway, right] {\orange{$A_1$}};	
\draw [arrow] (n7)--(n5) node [midway, above left] {} node[midway, above left=10pt and 5pt] {\azure{$\mathfrak{d}_8$}};
\draw [arrow] (n18)--(n16) node [midway, above left] {} node[midway, above right=10pt and 5pt] {\azure{$\mathfrak{e}_7$}};
\draw [arrow] (n18)--(n5) node [midway, above left] {} node[midway, left] {\red{$\mathfrak{e}_8$}};	

\draw [arrow] (n10)--(n5) node [midway, above left] {} node[midway, above left] {\azure{$\mathfrak{e}_8$}};	
\draw [arrow] (n16)--(n3) node [midway, above left] {} node[midway, right] {\red{$\mathfrak{e}_8$}};	

\draw [arrow] (n5)--(n3) node [midway, above left] {} node[midway, above left] {\azure{$\mathfrak{e}_7$}};									
		\end{tikzpicture}
		\caption{6d F-theory electric quivers on the tensor branch with 3d magnetic quivers given in figure \ref{fig:subposet3d}.  The superscript labels the quivers following the same numbering as in figure \ref{fig:outputk=2}; the subscript indicates the type of orbi-instanton,  E-string, or decoupled product thereof.}
		\label{fig:subposet6d}
\end{figure}
\end{landscape}

\begin{landscape}
	\begin{figure}[t]
		\centering
		\subfigure[In the notation of \cite{Nakajima:2015txa,Nakajima:2015gxa}, where $S_{[1]}$ denotes the symmetric product of 1 object.] {
			\resizebox{0.45\columnwidth}{!}{
				\begin{tikzpicture}[node distance=45pt, every node/.style={scale=1}]
					\tikzstyle{arrow} = [thick,->,>=stealth]
					\node (n1) {\azure{$\mathcal{M}_\text{C}^\text{s}(\varpi_{2'}+\delta,2\varpi_1)$}};
					\node (n3) [below = 60pt of n1] {\magenta{$\mathcal{M}_\text{C}^\text{s}(\varpi_{2'},2\varpi_1)\times S_{(1)}$}};
					\node (n4R) [below = 60pt of n3] {\azure{$\mathcal{M}_\text{C}^\text{s}(\varpi_{2'},2\varpi_1)$}};
					
					\node (n2) [left=70pt of n3]  {\azure{$\mathcal{M}_\text{C}^\text{s}(\varpi_{2}+\delta,2\varpi_1)$}};
					\node (n4) [below = 60pt of n2] {\magenta{$\mathcal{M}_\text{C}^\text{s}(\varpi_{2},2\varpi_1)\times S_{(1)}$}};
					\node (n6) [below = 60pt of n4] {\azure{$\mathcal{M}_\text{C}^\text{s}(\varpi_{2},2\varpi_1)$}};
					
					\node (n4L) [left=70pt of n4] {\azure{$\mathcal{M}_\text{C}^\text{s}(2\varpi_{1}+\delta,2\varpi_1)$}};
					\node (n5) [below = 60pt of n4L] {\magenta{$\mathcal{M}_\text{C}^\text{s}(2\varpi_{1},2\varpi_1)\times S_{(1)}$}};
					\node (n7) [below = 60pt of n5] {\azure{$\mathcal{M}_\text{C}^\text{s}(2\varpi_{1},2\varpi_1)$}};
					
					\draw [arrow] (n2)--(n1) node [midway, above left] {} node[midway, above left] {\azure{$\mathfrak{e}_7$}};
					\draw [arrow] (n3)--(n1) node [midway, above right] {} node[midway, above right] {\red{$\mathfrak{e}_8$}};
					\draw [arrow] (n4)--(n2) node [midway, left] {} node[midway, below left] {\red{$\mathfrak{e}_8$}};
					\draw [arrow] (n4L)--(n2) node [midway, right] {} node[midway, above left] {\azure{$\mathfrak{e}_8$}};
					\draw [arrow] (n4R)--(n2) node [midway, right] {} node[midway, above left = 13pt and 20pt] {\azure{$\mathfrak{d}_8$}};
					\draw [arrow] (n4)--(n3) node [midway, right] {} node[midway, above right=10pt and 10pt] {\azure{$\mathfrak{e}_7$}};
					\draw [arrow] (n4R)--(n3) node [midway, right] {} node[midway, above right=-2pt] {\orange{$A_1$}};
					\draw [arrow] (n5)--(n4) node [midway, above left] {} node[midway, above right=10pt and 10pt] {\azure{$\mathfrak{e}_8$}};
					\draw [arrow] (n5)--(n4L) node [midway, above left] {} node[midway, below left] {\red{$\mathfrak{e}_8$}};
					\draw [arrow] (n6)--(n4L) node [midway, above left] {} node[midway, above left = 13pt and 20pt] {\azure{$\mathfrak{a}_1$}};
					\draw [arrow] (n6)--(n4R) node [midway, above left] {} node[midway, below right] {\azure{$\mathfrak{e}_7$}};
					\draw [arrow] (n6)--(n4) node [midway, above right=0pt and 1pt] {} node[midway, above right] {\orange{$A_1$}};
					\draw [arrow] (n7)--(n5) node [midway, left] {} node[midway, below left] {\orange{$A_1$}};
					\draw [arrow] (n7)--(n6) node [midway, right] {} node[midway, below right] {\azure{$\mathfrak{e}_8$}};
				\end{tikzpicture}		
			}
		}
		\hfill
		\subfigure[In the notation of section \ref{sec:grass} ($\varpi_i^\vee$ are the simple coroots).] {
			\resizebox{0.45\columnwidth}{!}{
				\begin{tikzpicture}[node distance=45pt, every node/.style={scale=1}]
					\tikzstyle{arrow} = [thick,->,>=stealth]
					\node (n1) {\azure{$\mathcal{M}_\text{C}^\text{smooth}(\varpi^\vee_{2'}+c,2\varpi^\vee_1)$}};
					\node (n3) [below = 60pt of n1]{\magenta{$\mathcal{M}_\text{C}^{[1]}(\varpi^\vee_{2'}+c,2\varpi^\vee_1)$}};
					\node (n4R) [below = 60pt of n3] {\azure{$\mathcal{M}_\text{C}^\text{smooth}(\varpi^\vee_{2'},2\varpi^\vee_1)$}};
					
					\node (n2) [left=70pt of n3] {\azure{$\mathcal{M}_\text{C}^\text{smooth}(\varpi^\vee_{2}+c,2\varpi^\vee_1)$}};
					\node (n4) [below = 60pt of n2] {\magenta{$\mathcal{M}_\text{C}^{[1]}(\varpi^\vee_{2}+c,2\varpi^\vee_1)$}};
					\node (n6) [below = 60pt of n4] {\azure{$\mathcal{M}_\text{C}^\text{smooth}(\varpi^\vee_{2},2\varpi^\vee_1)$}};
					
					\node (n4L) [left=70pt of n4] {\azure{$\mathcal{M}_\text{C}^\text{smooth}(2\varpi^\vee_{1}+c,2\varpi^\vee_1)$}};
					\node (n5) [below = 60pt of n4L] {\magenta{$\mathcal{M}_\text{C}^{[1]}(2\varpi^\vee_{1}+c,2\varpi^\vee_1)$}};
					\node (n7) [below = 60pt of n5] {\azure{$\mathcal{M}_\text{C}^\text{smooth}(2\varpi^\vee_{1},2\varpi^\vee_1)$}};
					
					\draw [arrow] (n2)--(n1) node [midway, above left] {} node[midway, above left] {\azure{$\mathfrak{e}_7$}};
					\draw [arrow] (n3)--(n1) node [midway, above right] {} node[midway, above right] {\red{$\mathfrak{e}_8$}};
					\draw [arrow] (n4)--(n2) node [midway, left] {} node[midway, below left] {\red{$\mathfrak{e}_8$}};
					\draw [arrow] (n4L)--(n2) node [midway, right] {} node[midway, above left] {\azure{$\mathfrak{e}_8$}};
					\draw [arrow] (n4R)--(n2) node [midway, right] {} node[midway, above left = 13pt and 20pt] {\azure{$\mathfrak{d}_8$}};
					\draw [arrow] (n4)--(n3) node [midway, right] {} node[midway, above right=10pt and 10pt] {\azure{$\mathfrak{e}_7$}};
					\draw [arrow] (n4R)--(n3) node [midway, right] {} node[midway, above right=-2pt] {\orange{$A_1$}};
					\draw [arrow] (n5)--(n4) node [midway, above left] {} node[midway, above right=10pt and 10pt] {\azure{$\mathfrak{e}_8$}};
					\draw [arrow] (n5)--(n4L) node [midway, above left] {} node[midway, below left] {\red{$\mathfrak{e}_8$}};
					\draw [arrow] (n6)--(n4L) node [midway, above left] {} node[midway, above left = 13pt and 20pt] {\azure{$\mathfrak{a}_1$}};
					\draw [arrow] (n6)--(n4R) node [midway, above left] {} node[midway, below right] {\azure{$\mathfrak{e}_7$}};
					\draw [arrow] (n6)--(n4) node [midway, above right=0pt and 1pt] {} node[midway, above right] {\orange{$A_1$}};
					\draw [arrow] (n7)--(n5) node [midway, left] {} node[midway, below left] {\orange{$A_1$}};
					\draw [arrow] (n7)--(n6) node [midway, right] {} node[midway, below right] {\azure{$\mathfrak{e}_8$}};
				\end{tikzpicture}
			}	
		}
	\caption{Hasse diagram \ref{fig:subposet3d} (3d Coulomb branch) or \ref{fig:subposet6d} (6d Higgs branch) understood as the stratification of ${\mathcal{M}}_\text{C}$, i.e.  a subdiagram of the full stratification of $\hat{\mathcal{M}}_\text{C}$ obtained by neglecting strata associated with the $T(\SU(2))$ tail.  Larger (smaller) symplectic leaves, smaller (larger) $a$ anomalies at the \emph{top} (\emph{bottom}).}
	\label{fig:subposetMC}
\end{figure}
\end{landscape}
\begin{landscape}
\begin{figure}
\centering
\begin{tikzpicture}[node distance=45pt, every node/.style={scale=0.6}]
	\tikzstyle{arrow} = [thick,->,>=stealth]
	\node (n1) {$\left(\emptyset \right)^1$};
	\node (n2) [below=15pt of n1] {$\left([E_8] \, 1\right)^2$};
	\node (nphant2) [below=15pt of n2] {};
	\node (nphant3) [below=15pt of nphant2] {};
	\node (n15) [left =110pt of nphant2] {$\left([E_8] \, 1 \otimes [E_8] \, 1\right)^{15}$};
	\node (n4) [left =250pt of nphant3] {$\left([E_8] \, 12\right)^4$};
	\node (n22) [left =408pt of nphant3] {$\left([E_8] \, 1 \otimes [E_8] \, 1 \otimes [E_8] \, 1\right)^{22}$};
	\node (n3) [below=20pt of nphant3] {$\left(\magenta{ [\SO(16)] \,\overset{\mathfrak{usp}(2)}{1} \,\overset{\mathfrak{su}(2)}{2} [\SU(2)]}\right)^{\azure{3}}_{\azure{[2'],N=1,N_\rho=0}}$};
	\node (n17) [below=30pt of n22] {$\left([E_8] \, 1 \otimes [E_8] \, 12\right)^{17}$};
	\node (n24) [left=425pt of n3] {$\left(\begin{array}{c} \left[E_8\right] \, 1 \otimes \left[E_8\right] \, 1  \ \otimes \\ \left[E_8\right] \, 1 \otimes \left[E_8\right] \, 1\end{array} \right)^{24}$};
	\node (n16) [below=40pt of n3] {\magenta{$\left( \begin{array}{c} [\SO(16)]\, \overset{\mathfrak{usp}(2)}{1} \,  \overset{\mathfrak{su}(2)}{2} [\SU(2)] \\ \otimes \\ \left[E_8\right] \, 1	 \end{array}  \right)^{16}_{\scriptsize \begin{array}{c} [2'],N=1,N_\rho=0 \\  \otimes \\  \text{E-string}	\end{array}}$}};
	\node (n5) [above left= -22pt and 10pt of n16] {$\left( \magenta{ [E_7] \,  1 \, \overset{\mathfrak{su}(2)}{\underset{[N_\text{f}=2]}{2}} [\SU(2)]} \right)^{\azure{5}}_{\azure{[2],N=1,N_\rho=1}}$};
	\node (n6) [below=60pt of n4] {\ao{$\left( [\SO(18)] \overset{\mathfrak{usp}(2)}{1} \overset{\mathfrak{su}(1)}{2}			\right)^{6}$}};	
	\node (n8) [below=60pt of n17] {$\left([E_8] \, 122\right)^{8}$};								
	\node (n23) [below=65pt of n24] {$\left([E_8] \, 1 \otimes [E_8] \, 1 \otimes [E_8] \, 12\right)^{23}$};	
	\node (n7) [below=40pt of n16] {$\left(\magenta{ [\SO(16)] \,\overset{\mathfrak{usp}(2)}{1} \,\overset{\mathfrak{su}(2)}{2}  \overset{\mathfrak{su}(2)}{2}  [\SU(2)]} \right)^{\azure{7}}_{\azure{[2'],N=2,N_\rho=0}}$};	
	\node (n18) [below=40pt of n5] {\magenta{$\left( \begin{array}{c} \left[E_7\right] \,  1 \, \overset{\mathfrak{su}(2)}{\underset{[N_\text{f}=2]}{2}} [\SU(2)] \\ \otimes \\ \left[E_8\right] \, 1 \end{array}\right)^{18}_{\scriptsize \begin{array}{c} [2],N=1,N_\rho=1 \\  \otimes \\  \text{E-string}	\end{array}}$}};
	\node (n10) [below=80pt of n6] {$\left( \magenta{[E_8] \, 1 \, \overset{\mathfrak{su}(1)}{2} \overset{\mathfrak{su}(2)}{\underset{[N_\text{f}=1]}{2}} [\SU(2)]}   \right)^{\azure{10}}_{\azure{[1^2],N=1,N_\rho=2}}$};	
	\node (n9) [left=280pt of n7] {\ao{$\left( [E_7] \,  1 \,  \overset{\mathfrak{su}(2)}{\underset{[N_\text{f}=3]}{2}} \overset{\mathfrak{su}(1)}{2}\right)^{9}$}};	
	\node (n19) [below=40pt of n8] {$\left([E_8] \, 1 \otimes [E_8] \, 122\right)^{19}$};								
	\node (n21) [below=75pt of n23] {$\left([E_8] \, 12 \otimes [E_8] \, 12\right)^{21}$};	
	\node (nphant8) [below=60pt of n7] {};	
	\node (n11) [below=60pt of n18] {$\left(\magenta{ [E_7] \,  1 \, \overset{\mathfrak{su}(2)}{\underset{[N_\text{f}=2]}{2}} \overset{\mathfrak{su}(2)}{2}  [\SU(2)] }\right)^{\azure{11}}_{\azure{[2],N=2,N_\rho=1}}$};
	\node (n20) [below=60pt of n10] {\magenta{$\left(\begin{array}{c} [E_8] \, 1 \, \overset{\mathfrak{su}(1)}{2} \overset{\mathfrak{su}(2)}{\underset{[N_\text{f}=1]}{2}} [\SU(2)] \\ \otimes \\ \left[E_8\right] \, 1 \end{array}\right)^{20}_{\scriptsize \begin{array}{c} [1^2],N=1,N_\rho=2 \\  \otimes \\  \text{E-string}	\end{array}}$}};	
	\node (n12) [below=75pt of n19] {$\left([E_8] \, 1222\right)^{12}$};	
	\node (nphant9) [below=40pt of nphant8] {};	
	\node (n13) [below=90pt of n9] {\ao{$\left([E_8] \,  1 \, \overset{\mathfrak{su}(1)}{2} \overset{\mathfrak{su}(2)}{\underset{[N_\text{f}=2]}{2}} \overset{\mathfrak{su}(1)}{2}\right)^{13}$}};	
	\node (nphant10) [below=20pt of nphant9] {};	
	\node (n14) [below=45pt of n20] {$\left(\magenta{[E_8] \,  1 \, \overset{\mathfrak{su}(1)}{2} \overset{\mathfrak{su}(2)}{\underset{[N_\text{f}=1]}{2}} \overset{\mathfrak{su}(2)}{2}  [\SU(2)]}\right)^{\azure{14}}_{\azure{[1^2],N=2,N_\rho=2}}$};			
	
	\draw [arrow] (n2)--(n1) node [midway, above left] {} node[midway, left] {\red{$\mathfrak{e}_8$}};
	\draw [arrow] (n15)--(n2) node [midway, above left] {} node[midway, above left] {\red{$2\mathfrak{e}_8$}};
	\draw [arrow] (n22)--(n15) node [midway, above left] {} node[midway, above left] {\red{$3\mathfrak{e}_8$}};			
	\draw [arrow] (n24)--(n22) node [midway, above left] {} node[midway, above left] {\red{$4\mathfrak{e}_8$}};	
	\draw [arrow] (n14)--(n11) node [midway, above left] {} node[midway, above left] {\azure{$\mathfrak{e}_8$}};
	\draw [arrow] (n14)--(n20) node [midway, above left] {} node[midway, above left] {\orange{$A_1$}};	
	\draw [arrow] (n11)--(n7) node [midway, above left] {} node[midway, above left] {\azure{$\mathfrak{e}_7$}};	
	\draw [arrow] (n11)--(n18) node [midway, above left] {} node[midway, above left] {\orange{$A_1$}};	
	\draw [arrow] (n11)--(n10) node [midway, above left] {} node[midway, above left=5pt and 0pt] {\azure{$A_1$}};
	\draw [arrow] (n11)--(n9) node [midway, above left] {} node[midway, below right=5pt] {\ao{$\mathfrak{a}_1$}};			
	\draw [arrow] (n20)--(n18) node [midway, above left] {} node[midway, above right=13pt and 10pt] {$\mathfrak{e}_8$};		
	\draw [arrow] (n20)--(n10) node [midway, above left] {} node[midway, above left=13pt and 0pt] {\red{$\mathfrak{e}_8$}};
	\draw [arrow] (n7)--(n16) node [midway, above left] {} node[midway, above left] {\orange{$A_1$}};	
	\draw [arrow] (n7)--(n5) node [midway, above left] {} node[midway, above left=10pt and 5pt] {\azure{$\mathfrak{d}_8$}};
	\draw [arrow] (n18)--(n16) node [midway, above left] {} node[midway, above right=5pt and 5pt] {$\mathfrak{e}_7$};		
	\draw [arrow] (n18)--(n5) node [midway, above left] {} node[midway, above right=7pt and 2pt] {\red{$\mathfrak{e}_8$}};	
	\draw [arrow] (n10)--(n5) node [midway, above left] {} node[midway, above left] {\azure{$\mathfrak{e}_8$}};	
	\draw [arrow] (n16)--(n3) node [midway, above left] {} node[midway, above left] {\red{$\mathfrak{e}_8$}};	
	\draw [arrow] (n5)--(n3) node [midway, above left] {} node[midway, above left] {\azure{$\mathfrak{e}_7$}};											
	\draw [arrow] (n3)--(n2) node [midway, above left] {} node[midway, above left] {$\mathfrak{e}_7$};
	\draw [arrow] (n4)--(n15) node [midway, above left] {} node[midway, below right] {\nnyellow{$A_1$}};
	\draw [arrow] (n8)--(n17) node [midway, above left] {} node[midway, above left] {\nnyellow{$m$}};
	\draw [arrow] (n17)--(n4) node [midway, above left] {} node[midway, above left] {\red{$\mathfrak{e}_8$}};
	\draw [arrow] (n17)--(n22) node [midway, above left] {} node[midway, left] {\nnyellow{$A_1$}};
	\draw [arrow] (n14)--(n13) node [midway, above left] {} node[midway, above right] {\ao{$\mathfrak{a}_1$}};
	\draw [arrow] (n13)--(n9) node [midway, above left] {} node[midway, above left] {$\mathfrak{e}_8$};
	\draw [arrow] (n9)--(n6) node [midway, above left] {} node[midway, above left] {$\mathfrak{e}_7$};
	\draw [arrow] (n13)--(n12) node [midway, above left] {} node[midway, below left] {\ao{$\mathfrak{a}_1$}};
	\draw [arrow] (n12)--(n19) node [midway, above left] {} node[midway, above left] {\nnyellow{$m$}};
	\draw [arrow] (n19)--(n23) node [midway, above left] {} node[midway, below left] {\nnyellow{$m$}};
	\draw [arrow] (n20)--(n19) node [midway, above left] {} node[midway, below left=-20pt and 30pt] {\ao{$\mathfrak{g}_2$}};
	\draw [arrow] (n12)--(n21) node [midway, above left] {} node[midway, below left] {\nnyellow{$A_1$}};
	\draw [arrow] (n21)--(n23) node [midway, above left] {} node[midway, above left] {\nnyellow{$2A_1$}};
	\draw [arrow] (n7)--(n6) node [midway, above left] {} node[midway, above left=30pt and 70pt] {\ao{$\mathfrak{a}_1$}};	
	\draw [arrow] (n18)--(n17) node [midway, above left] {} node[midway,above right=-15pt] {\ao{$\mathfrak{b}_3$}};
	\draw [arrow] (n10)--(n8) node [midway, above left] {} node[midway, above left=5pt and 5pt] {\ao{$\mathfrak{g}_2$}};
	\draw [arrow] (n9)--(n8) node [midway, above left] {} node[midway, below left] {$\mathfrak{g}_2$};
	\draw [arrow] (n19)--(n8) node [midway, above left] {} node[midway, above left] {\red{$\mathfrak{e}_8$}};			
	\draw [arrow] (n6)--(n4) node [midway, above left] {} node[midway, above left] {$\mathfrak{b}_9$};	\draw [arrow] (n23)--(n17) node [midway, above left] {} node[midway, above left] {\red{2$\mathfrak{e}_8$}};			
	\draw [arrow] (n23)--(n24) node [midway, above left] {} node[midway, above left] {\nnyellow{$2A_1$}};						
	\draw [arrow] (n5)--(n4) node [midway, above left] {} node[midway, above right] {\ao{$\mathfrak{b}_3$}};	
	\draw [arrow] (n16)--(n15) node [midway, above left] {} node[midway, above right] {$\mathfrak{d}_8$};	

\end{tikzpicture}
		\caption{Output of the algorithm of \cite{Bourget:2023dkj,Bourget:2024mgn} for $k=2$ and $P=N+N_\rho=2+2=4$.  The $a$ anomaly decreases from \emph{bottom} to \emph{top} along any allowed RG flow (oriented path), i.e. the IR lies at the \emph{top} here.  Figure \ref{fig:subposet6d} embeds into this one in the obvious way.}
\label{fig:outputk=2}
\end{figure}
\end{landscape}

\clearpage
\small
\bibliography{main}
\bibliographystyle{at}

\end{document}